\newcommand{\bimply}{\leftrightarrow}
\newcommand{\folr}{\ensuremath{FOL_{\mathbb{R}}}}
\newcommand{\ltac}{\ensuremath{\mathcal{L}_{\tt tac}}~}
\newcommand{\sproof}{{\tt SP}}
\newcommand{\hrsub}[2]{{\tt sub}(#1,#2)}
\newcommand{\hreq}[3]{{\tt eq}(#1,#2,#3)}
\newcommand{\hrany}[2]{{\tt any}(#1,#2)}
\newcommand{\hhsub}[3]{#1,{\tt sub}(#2,#3)}
\newcommand{\hheq}[4]{#1,{\tt eq}(#2,#3,#4)}
\newcommand{\hhany}[3]{#1,{\tt any}(#2,#3)}
\newcommand{\arb}[2]{{#1}^{#2}}
\newtheorem{model}{Model}
\definecolor{gray}{rgb}{0.5,0.5,0.5}
\providecommand{\KeYmaeraX}{KeYmaera X\xspace}
\tiny\color{gray},
\lstdefinelanguage{scala}{
  morekeywords={abstract,case,catch,class,def,%
    else,extends,false,final,finally,%
    for,if,implicit,import,match,mixin,%
    new,null,object,override,package,%
    private,protected,requires,return,sealed,%
    super,this,throw,trait,true,try,%
    type,val,var,while,with,yield},
  otherkeywords={=>,<-,<\%,<:,>:,\#,@,by,&,|,<,?},
  sensitive=true,
  morecomment=[l]{//},
  morecomment=[n]{/*}{*/},
  morestring=[b]",
  morestring=[b]"""
}
\lstdefinelanguage{bellerophon}{
  morekeywords={partial,OnAll,'R,'L,'_},
  otherkeywords={&,|,<,?},
  sensitive=true,
  morecomment=[l]{//},
  morecomment=[n]{/*}{*/},
  morestring=[b]",
  morestring=[b]"""
}
\renewcommand{\dL}{{\sf{dL}}\xspace}
\author{
  Brandon Bohrer \and Andr\'{e} Platzer
}
\begin{document}
\title{Toward Structured Proofs for Dynamic Logics}

\begin{abstract}
  We present Kaisar, a structured interactive proof language for differential dynamic logic (\dL), for safety-critical cyber-physical systems (CPS).
  The defining feature of Kaisar is \emph{nominal terms}, which simplify CPS proofs by making the frequently needed historical references to past program states first-class.
  To support nominals, we extend the notion of structured proof with a first-class notion of \emph{structured symbolic execution} of CPS models.

  We implement Kaisar in the theorem prover \KeYmaeraX and reproduce an example on the safe operation of a parachute and a case study on ground robot control.
  We show how nominals simplify common CPS reasoning tasks when combined with other features of structured proof.
  We develop an extensive metatheory for Kaisar.
  In addition to soundness and completeness, we show a formal specification for Kaisar's nominals and relate Kaisar to a nominal variant of \dL.
\end{abstract}
\begin{CCSXML}
<ccs2012>
<concept>
<concept_id>10003752.10003790.10002990</concept_id>
<concept_desc>Theory of computation~Logic and verification</concept_desc>
<concept_significance>500</concept_significance>
</concept>
<concept>
<concept_id>10003752.10003790.10003795</concept_id>
<concept_desc>Theory of computation~Constraint and logic programming</concept_desc>
<concept_significance>500</concept_significance>
</concept>
<concept>
<concept_id>10011007.10011006.10011041</concept_id>
<concept_desc>Software and its engineering~Compilers</concept_desc>
<concept_significance>500</concept_significance>
</concept>
<concept>
<concept_id>10003752.10003790.10011740</concept_id>
<concept_desc>Theory of computation~Type theory</concept_desc>
<concept_significance>100</concept_significance>
</concept>
</ccs2012>
\end{CCSXML}

\ccsdesc[500]{Theory of computation~Logic and verification}

\maketitle
\newcommand{\JavaDL}{{\texttt{JavaDL}}}
\section{Introduction}
Many cyber-physical systems (CPS) such as autonomous cars~\cite{DBLP:conf/fm/LoosPN11}, airborne collision-avoidance systems~\cite{DBLP:conf/emsoft/JeanninGKGSZP15} and surgical robots~\cite{DBLP:conf/hybrid/KouskoulasRPK13} are safety-critical, and thus their correctness is of utmost importance.
Differential dynamic logic (\dL)~\cite{DBLP:journals/jar/Platzer08} is a domain-specific logic which expresses correctness theorems such as safety and liveness for hybrid dynamical systems models of CPS, which combine discrete computation with ordinary differential equations (ODEs).

Because safety and liveness for hybrid systems are undecidable,~\cite{DBLP:conf/lics/Henzinger96} achieving strong correctness results for nontrivial systems demands an interactive technique that allows users to provide human insight when automation does not suffice.
The \KeYmaeraX~\cite{DBLP:conf/cade/FultonMQVP15} theorem prover for \dL (in comparison to, e.g. model checking ~\cite{DBLP:conf/cav/FrehseGDCRLRGDM11,DBLP:conf/hybrid/Dreossi17,DBLP:conf/hybrid/Frehse05,DBLP:conf/popl/BokerHR14,DBLP:conf/cav/HenzingerHW97,DBLP:conf/vmcai/KidoCH16,DBLP:conf/icalp/SuenagaH11,DBLP:conf/fmcad/GaoKC13,DBLP:conf/cade/GaoKC13} approaches) enables human insight through \emph{interactive theorem proving}.
The advantage of this interaction-based approach is its flexibility.
For any given hybrid system, the proof difficulty may lie in the complexity of discrete dynamics, continuous dynamics, solving first-order real arithmetic at the leaves, or all three.
Through interactivity, the proof author can address the  difficult aspects of their system, while letting automation handle simpler parts.
Interactive theorem-proving for CPS verification has proven fruitful through a number of case studies~\cite{DBLP:conf/fm/LoosPN11,DBLP:conf/emsoft/JeanninGKGSZP15,DBLP:conf/hybrid/KouskoulasRPK13,DBLP:conf/fm/PlatzerC09,DBLP:conf/icfem/PlatzerQ09,DBLP:conf/hybrid/LoosRP13, DBLP:conf/itsc/MullerMP15,DBLP:conf/itsc/MullerMP15}.

While these results demonstrate the significant potential that deduction has in CPS, it remains to be seen what the most productive way is for writing interactive CPS proofs.
Interactive point-and-click interfaces are available for \KeYmaeraX \cite{DBLP:conf/fide/MitschP16} and its predecessor \KeYmaera \cite{DBLP:conf/cade/PlatzerQ08}, which are useful for learning, but become tedious at scale.
Tactical theorem proving is available for CPS \cite{Belle}, which is useful for programming generic proof search procedures, but requires a certain level of aptitude in the proof system and is harder to scale for complex applications.
We, thus, argue that CPS proofs are to be taken seriously and merit the full attention of a first-class approach.

We focus on the opportunity provided by the hybrid mixture of discrete and continuous invariants inherent to hybrid systems.
Typical proofs combine discrete and continuous invariants, which in turn combine continuously-evolving, discretely-evolving, and constant variables.
CPS invariant proofs often relate the present state to multiple historical system states.
It has long been known that historical reference, implementable with ghost state, is an essential component of proof in many domains~\cite{DBLP:journals/tcs/AptBM79,Owicki1976,DBLP:books/garland/Owicki75,apt2010verification,DBLP:journals/acta/Clint73}.
It has been known equally long~\cite{DBLP:journals/acta/Clarke80} that manual ghost arguments can make proofs clumsy.
For example, stating even a single invariant might require multiple non-local changes to introduce ghost state.
This is especially noticeable for CPS proofs, which typically contain nested invariants with references to multiple states.

In this paper, we devise a proof language Kaisar for CPS that addresses the problem of automating historical reference.
The distinguishing feature of Kaisar is \emph{nominal proof}: proof authors can give names $t$ to abstract proof states and then refer to the value of any term $\theta$ in proof state $t$ by a \emph{nominal term} $t(\theta)$.
We show through the examples of this paper~\cite{Belle}
that nominal terms simplify the complex historical reasoning necessary for CPS verification:
Instead of cluttering a proof with explicit ghosts of each term, the author can write down values from past states directly when needed, and the Kaisar nominal system will automatically supply the necessary ghost state.

The automation of historical reference in Kaisar is supported by \emph{structured symbolic execution}, an extension of the \emph{structured proof} paradigm introduced in Mizar~\cite{conf/mkm/BancerekBGKMNPU15}, and seen in Isabelle's Isar~\cite{DBLP:conf/tphol/Wenzel99} and ${\rm TLA}^+$'s TLAPS~\cite{DBLP:conf/fm/CousineauDLMRV12,DBLP:conf/hybrid/Lamport92}.
Symbolic execution is a first-class proof language feature in Kaisar, 
enabling the language to automatically maintain an abstract execution trace, introducing ghost variables when needed.
This trace enables automatically reducing nominal terms to ghost state, solving the historical reference automation problem.
Hence the name: Kaisar is to \KeYmaeraX as Isar is to Isabelle: A structured proof language tailored to the needs of each prover's logic.

In structuring proofs around first-class symbolic execution, it is non-obvious but essential that we do not restrict which formulas are provable.
To this end, we develop a comprehensive metatheory for Kaisar, showing it is not only sound, but complete with respect to existing calculi~\cite{DBLP:journals/jar/Platzer08}, which are themselves sound and relatively complete for hybrid systems~\cite{DBLP:journals/jar/Platzer16,DBLP:conf/lics/Platzer12,DBLP:journals/jar/Platzer08}.
We also give a precise semantics to nominal terms via dynamic execution traces: the nominal term computed by Kaisar for $t(\theta)$ equals the value of $\theta$ in the concrete program state corresponding to abstract state $t$.
We relate nominal terms to a nominal dialect of \dL called $\dLh$:~\cite{DBLP:journals/entcs/Platzer07} the meaning of proof-level named states $t$ can be understood via the logic-level nominal states of $\dLh$.
The above applies for both discrete and continuous dynamics and for both initial and intermediate states.

In adopting structured symbolic execution, we import the existing benefits of structured proof to hybrid systems for the first time.
It has long been noted~\cite{DBLP:conf/mkm/Wenzel06} that structured proof languages improve scalability and maintainability via block structure, and improve readability via their declarative style, with readability benefits extending to natural-language structured proofs~\cite{Lamport12,Lamport95}.
This is important for CPS verification as the scale of systems verified continues to grow~\cite{DBLP:conf/emsoft/JeanninGKGSZP15,DBLP:conf/itsc/MullerMP15}.

Lastly, we believe the nominal mechanism of Kaisar is of interest beyond CPS verification.
A number of verification tools ~\cite{KeYBook,DBLP:conf/cade/FultonMQVP15,DBLP:conf/lpar/Leino10,this-is-boogie-2-2,Barnett2005,DBLP:conf/fosad/LeinoMS09,DBLP:books/daglib/p/LeavensBR99} offer ad-hoc constructs without theoretical justification which can reference only the initial program state.
We generalize and formally justify them.

We show that Kaisar works in practice with a prototype for Kaisar in \KeYmaeraX.
We evaluate our implementation using the examples in this paper, including a representative proof on parachute control.
We evaluate further by reproducing a case study on the control of ground robots~(Section~\ref{sec:concl}).


In Section~\ref{sec:dl} we provide a primer on \dL, hybrid systems and dynamic logics, giving an informal structured proof following Lamport~\cite{Lamport12,Lamport95}.
In Section~\ref{sec:fok}, we present the propositional fragment of Kaisar, introducing structural constructs and pattern-matching to aid in making proofs concise.
This fragment borrows from Isar; we present it separately to show its relationship with prior work.
In Section~\ref{sec:trace}, we detail the technical challenge and solution for nominals.
In Section~\ref{sec:ddk} we add that solution to Kaisar, generalizing it to the discrete fragment of \dL by introducing \emph{nominal terms}.
In Section~\ref{sec:dlk} we show that our notion of nominal terms generalizes to full \dL.
In Section~\ref{sec:metatheory} we validate Kaisar by developing its metatheory.
In Section~\ref{sec:relatedwork} we compare Kaisar with other proof languages.

\section{Background: Differential Dynamic Logic}
\label{sec:dl}
\emph{Differential dynamic logic} (\dL)~\cite{DBLP:conf/lics/Platzer12,DBLP:journals/jar/Platzer08} is a dynamic logic~\cite{DBLP:conf/focs/Pratt76,Harel} for formally verifying hybrid (dynamical) systems models of cyber-physical systems (CPS), which is implemented in the theorem prover \KeYmaeraX~\cite{DBLP:conf/cade/FultonMQVP15} and has seen successful
application in a number of case studies~\cite{DBLP:conf/rss/MitschGP13,DBLP:conf/emsoft/JeanninGKGSZP15,DBLP:conf/hybrid/KouskoulasRPK13,DBLP:conf/fm/LoosPN11}.
In \dL, a hybrid systems model of a CPS is expressed as a (nondeterministic) program in the language of \emph{hybrid programs}.
The distinguishing feature of hybrid programs is the ability to model continuous physics with ordinary differential equations (ODEs).
Combined with nondeterminism, real arithmetic, and standard discrete constructs, this suffices to express hybrid systems.

As usual in dynamic logic, \dL internalizes program execution with first-class modal operators $[\alpha]\phi$ and $\langle\alpha\rangle\phi$ stating $\phi$ is true after \emph{all} or \emph{some} execution paths of the program $\alpha$, respectively.
Dynamic logics are a generalization of Hoare logics, e.g. any Hoare triple $\{P\}\alpha\{Q\}$ can be expressed in dynamic logic as the formula $P\limply[\alpha]Q$.
Dynamic logic is more general because the modal operators are first-class and may be nested freely.
Throughout this paper, we let $e,f,g$ range over expressions of \dL.
Expressions are divided into propositions $\phi,\psi$, hybrid programs $\alpha,\beta$, and real-valued terms $\theta$.
The proposition language of \dL combines first-order arithmetic ($\folr{}$) with the dynamic logic modalities $\dbox{\alpha}{\phi}$ and $\ddiamond{\alpha}{\phi}$:
\begin{center}
  $\phi,\psi,P,Q~::=~\phi \wedge \psi\ |\ \phi \vee \psi\ |\ \neg{\phi}\ |\ \forall x~\phi\ |\ \exists x~\phi\ |\ [\alpha]\phi\ |\ \langle\alpha\rangle\phi\ |\ \theta_1 \sim \theta_2$
\end{center}
where $\sim$ is a comparison operator on the (classical) reals, i.e. $\sim\ \in\{<,\leq,=,\geq,>,\neq\}$.

The term language of \dL consists of basic arithmetic operations on real-valued variables with rational literals $q\in\mathbb{Q}$ and rational exponentiation $\theta^q$:
\begin{center}
  $\theta~::=~x\ |\ q\ |\ \theta_1 + \theta_2\ |\ \theta_1 - \theta_2\
  |\ \theta_1 \cdot\ \theta_2\ |\ \theta_1/\theta_2\ |\ \theta^q$
\end{center}
This means that the program-free fragment of \dL is equivalent to first-order classical logic over real-closed fields (\folr), which is decidable but unscalable, requiring doubly-exponential time~\cite{Davenport1988}. The \emph{hybrid programs} of \dL include typical discrete constructs and differential equation systems:
\begin{center}
  $\alpha,\beta~::=~x:=\theta\ |\ \prandom{x}\ |\ ?\phi\ |\ \alpha;\beta\ |\ \alpha \cup \beta\ |\ \alpha^*\ |\ \pevolvein{x' = \theta}{Q}$
\end{center}
\begin{itemize}
\item \emph{Assignment} $x:=\theta$ Sets a variable $x$ to the current value of term $\theta$.
\item \emph{Nondeterministic Assignment} $\prandom{x}$ Sets $x$ nondeterministically to an arbitrary real.
\item \emph{Assertion} $?\phi$ Execution transitions to the current state if formula $\phi$ is true, else the program has no transitions.
In typical models $\phi$ is first-order arithmetic.
\item \emph{Sequential Composition} $\alpha;\beta$ Runs program $\alpha$, then runs $\beta$ in some resulting state.
\item \emph{Choice} $\alpha\cup\beta$ Nondeterministically runs either program $\alpha$ or $\beta$.
\item \emph{Iteration} $\alpha^*$ Nondeterministically runs program $\alpha$ zero or more times.
\item \emph{ODE Evolution} $\{x'=\theta \& Q\}$ Evolves the ODE system $x'=\theta$ nondeterministically for any duration $r \in \mathbb{R}_{\geq{0}}$ that never leaves the evolution domain formula $Q$.
\end{itemize}

We illustrate \dL with a running example in Model~\ref{model:dive}, which models the decision-making of a skydiver opening their parachute at a safe time.
We intentionally chose this example because it demonstrates the techniques necessary for more complex systems within the space provided.
As further validation we reproduced a larger case study~\cite{DBLP:conf/rss/MitschGP13}, see Section~\ref{sec:concl}.

\newcommand{\rvar}{r}
\newcommand{\ravar}{a}
\newcommand{\rpvar}{p}
\newcommand{\Tvar}{\ensuremath{\varepsilon}}
\begin{model}[Safety specification for the skydiver model]\label{model:dive}
\small
{\begin{flalign}
  \rvar = \ravar \land
\underbrace{\left(x \ge 0 \land v < 0\right)}_{\text{dc}} & \land 
\underbrace{\left(g > 0 \land 0 < \ravar  < \rpvar \land \Tvar \ge 0\right)}_{\text{const}} \land
\underbrace{m <  -\sqrt{\frac{g}{\rpvar}} < v}_{\text{dyn}}
\tag{Pre}\label{exmodel:init} \\&
\limply 
[  
\big\{
~(?\underbrace{\left(\rvar{}=\ravar{} \land v - g\cdot \Tvar > -\sqrt{\frac{g}{\rpvar}}\right)}_\textit{Dive}~\cup~\humod{r}{\rpvar}); \tag{ctrl}\label{exmodel:ctrl}
\\&\phantom{\limply [ \big\{ }
~\ \humod{t}{0};~ \{x'=v,~ v'=r\cdot v^2 - g ~\&~ x \ge 0 \land v<0 \land t \le \Tvar\} \tag{plant}\label{exmodel:plant}
\\
&\phantom{\limply [ \big\{}\big\}^*]
  (x=0 \limply \abs{v} \le m) \tag{Safe}\label{exmodel:post}
\end{flalign}
}\end{model}
This structure is typical: the system is a \emph{control-plant} loop, alternating between the skydiver's control decisions and evolution of the environment or \emph{plant} (i.e. gravity and drag).
The diver has a downward velocity $v < 0$, with a maximum safe landing speed $m$.
The chute is initially closed ($\rvar = \ravar$) until opened by the controller ($\humod{\rvar}{\rpvar}$).
The controller can choose not to open the chute if it will not exceed the \emph{equilibrium velocity} $-\sqrt{\frac{g}{\rpvar}}$, where $g$ is gravity and $\rpvar$ is the resistance of an open chute.
The plant evolves altitude $x$ at velocity $v$, which itself evolves at $v' = v^2\cdot{\rvar}-g$: Newton drag is proportional to $v^2$ by a factor $\rvar$ and gravity $g$ is constant.
The domain constraint stops the ODE upon hitting the ground ($x = 0$) or reaching the next control time ($t = \Tvar$ for time limit $\Tvar$).
It also simplifies the proof by providing an assumption $v < 0$, which is always true for a falling diver.
In this hybrid program model, when the diver lands ($x = 0$), they have a safe velocity $v \ge m$.

\subsection{A Structured Natural-Language Proof}
\label{sec:natlang-proof}
We introduce the reasoning principles for \dL by giving an informal structured proof of skydiver safety.
Beyond discrete program reasoning and first-order real arithmetic, the defining proof techniques of \dL are \emph{differential induction}~\cite{DBLP:journals/jar/Platzer16,DBLP:journals/corr/abs-1206-3357,DBLP:journals/logcom/Platzer10,DBLP:conf/tableaux/Platzer07} and \emph{differential ghosts}~\cite{DBLP:journals/corr/abs-1104-1987}.
Differential induction and ghosts are important because they enable rigorous proofs about ODEs whose solutions are outside the decidable fragment of arithmetic, in our example the drag equation.
Differential induction states that a formula $\phi$ is invariant if it is true initially and its differential $\der{\phi}$ is invariant, e.g. $\theta_1 > \theta_2$ is invariant if $\theta_1 > \theta_2$ initially and $\der{\theta_1} \geq \der{\theta_2}$ is invariant.
Differential ghosts enable augmenting ODEs with continuously-evolving ghost state.
If a formula $\phi$ is invariant, but not inductive, ghosts enable restating it as a formula that holds by differential induction.
This occurs, e.g. if $\theta_1$ converges asymptotically toward $\theta_2$ like velocity converges to the equilibrium in Model~\ref{model:dive}.

In this proof, we loosely follow the Lamport's~\cite{Lamport12,Lamport95} hierarchical style with numbered proof steps, but use our own keywords, such as \textsc{State} for giving names to states and \textsc{Introduce} for introducing differential ghost variables.
We restate in abbreviated form the theorem of Model~\ref{model:dive}:
\begin{theorem}[Skydiver Safety]
\label{thm:safe}
$\mathit{pre} \limply{[(\mathit{ctrl;plant}^*)](x=0\limply\abs{v} \leq m)}\ \text{is valid}$
\end{theorem}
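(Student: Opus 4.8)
The plan is to prove Theorem~\ref{thm:safe} by the standard \dL loop-invariant argument, reducing the ODE to differential induction plus one differential ghost. I would apply the loop rule with the invariant $J$ obtained by dropping the transient chute-state conjunct $\rvar = \ravar$ from the precondition, i.e.\ $J$ is the conjunction of the parts labelled dc, const, and dyn in Model~\ref{model:dive}. This yields three premises. The base case $\mathit{pre}\limply J$ is immediate, since $J$ is a conjunctive weakening of $\mathit{pre}$. The use case $J \limply (x=0 \limply \abs{v}\le m)$ is pure first-order real arithmetic, as $m < -\sqrt{g/\rpvar} < v$ together with $v<0$ pins $\abs{v}$ strictly below $\abs{m}$. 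The inductive step $J \limply [\mathit{ctrl};\mathit{plant}]J$ is the substance of the proof.

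For the inductive step I would symbolically execute the control--plant body. The controller $?\mathit{Dive}\cup\humod{\rvar}{\rpvar}$ splits into two cases: either the chute is opened, leaving $\rvar = \rpvar$, or the test $\mathit{Dive}$ passes, so that $\rvar = \ravar$ and $v - g\cdot\Tvar > -\sqrt{g/\rpvar}$. In each case the remaining obligation concerns $\humod{t}{0}$ followed by the ODE $x'=v,\ v'=\rvar\cdot v^2 - g$ under the evolution domain $x \ge 0 \land v<0 \land t\le\Tvar$, which must re-establish $J$. Most conjuncts are routine: $x\ge 0$ and $v<0$ follow by differential weakening from the domain, and const together with $m < -\sqrt{g/\rpvar}$ are preserved because the ODE mentions none of $g,\ravar,\rpvar,\Tvar,m$ (differential-cut in ``these are constant,'' then differential weakening). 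The real work is the remaining conjunct $-\sqrt{g/\rpvar} < v$, equivalently $v + \sqrt{g/\rpvar} > 0$.

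In the dive case ($\rvar = \ravar$) the velocity may fall below its own equilibrium $-\sqrt{g/\ravar}$ but not below $-\sqrt{g/\rpvar}$ within one control period: I would record the plant-entry value of $v$ in a fresh ghost variable $v_0$ and differential-cut the invariant $v \ge v_0 - g\cdot t$. This is differentially inductive, since $v' = \ravar v^2 - g \ge -g$ is exactly the derivative of the right-hand side and $v = v_0$ holds at $t=0$; together with $t \le \Tvar$ and the guard $v_0 - g\cdot\Tvar > -\sqrt{g/\rpvar}$ it delivers $v > -\sqrt{g/\rpvar}$. In the open-chute case ($\rvar = \rpvar$) the velocity converges only asymptotically to $-\sqrt{g/\rpvar}$, so $v + \sqrt{g/\rpvar} > 0$ is invariant but not differentially inductive, and I would discharge it with a differential ghost: introduce a fresh $y$, initialized so that $y>0$ and $(v+\sqrt{g/\rpvar})\cdot y^2 = 1$, evolving by $y' = -\tfrac{\rpvar}{2}(v-\sqrt{g/\rpvar})\,y$ --- linear in $y$ with a coefficient bounded over the finite horizon, hence non-escaping --- and differential-cut that same equation $(v+\sqrt{g/\rpvar})\cdot y^2 = 1$. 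Its differential vanishes, using the factorization $\rpvar v^2 - g = \rpvar(v-\sqrt{g/\rpvar})(v+\sqrt{g/\rpvar})$, so the equation is a differential invariant, and it entails $v+\sqrt{g/\rpvar} = 1/y^2 > 0$.

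I expect the open-chute differential ghost to be the main obstacle: it is exactly the point where the drag ODE escapes the decidable fragment of real arithmetic, so first-order reasoning alone cannot close it and the witnessing term $1/y^2$, with the ghost ODE producing it, must be supplied by hand. A secondary difficulty is that the dive case hinges on the value $v_0$ of an intermediate \emph{plant} state, a historical reference that here has to be threaded through as explicit ghost state --- precisely the bookkeeping burden the nominal-term machinery developed later in the paper is designed to remove.
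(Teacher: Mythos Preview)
Your proposal is correct and follows essentially the same route as the paper's own proof: a loop invariant consisting of the dc, const, and dyn conjuncts; the dive branch handled by differentially cutting a linear lower bound $v \ge v_0 - g\cdot t$ (the paper writes this with the nominal $loop(v)$ in place of your ghost $v_0$, as you anticipate); and the open-chute branch handled by the differential ghost $y$ with invariant $(v+\sqrt{g/\rpvar})\cdot y^2 = 1$. The only cosmetic difference is that the paper stages the three invariant conjuncts as separate successive invariants rather than bundling them into a single $J$, which changes nothing of substance.
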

\newcommand{\proofinv}{\textsc{Invariant}:}
\newcommand{\proofbc}{\textsc{Base Case}:}
\newcommand{\proofis}{\textsc{Inductive Case}:}
\newcommand{\proofintro}{\textsc{Introduce}}
\newcommand{\proofstate}[1]{\textsc{State}~{\it #1}:}
\newcommand{\proofassign}{\textsc{Assign}:}
\newcommand{\ctrlcase}[1]{\textsc{Control Case}:~#1}
\pflongnumbers
\begin{proof}
\step{label-1}{
  \proofstate{init}~\textsc{Assume} $init(pre)$.
}
\step{label-2}{\proofinv{}~$\mathit{const}$ vacuously because the free variables of $\mathit{const}$ are not written in $(\mathit{ctrl;plant})$. }
\step{label-3}{\proofinv{}~$\mathit{dc}$  by loop induction.\proofbc{}~by arithmetic and \stepref{label-1}. \\\proofis{}~ $\mathit{dc}\limply[\mathit{(ctrl;\pevolvein{plant}{dc})}]\mathit{dc}$ holds by the domain constraint.}
    \step{label-4}{\proofinv{}~$\mathit{dyn}$  by loop induction. \proofbc{} by arithmetic and \stepref{label-2}.
      \proofis{} We establish differential invariants for each control case.}
      \begin{proof}
      \step{label-4.1}{\proofstate{loop}. Here $\mathit{dc}, \mathit{const}, \mathit{dyn}$ hold by
            \stepref{label-2}, \stepref{label-3},\stepref{label-4}}
      \step{label-4.2}{
        \ctrlcase{$?(\rvar =\ravar \land \abs{v}+g\cdot\epsilon > \sqrt{\frac{g}{\rpvar}}); t:=0$}}
        \begin{proof}
        \step{label-4.2.1}{\proofinv{} $g>0\land{\rpvar>0}$ vacuously.}
        \step{label-4.2.2}{\proofinv{} $\abs{v} < \abs{loop(v)}+g\cdot{t}$ by differential induction.}
          \begin{proof}
          \step{label-4.2.2.1}{\proofbc{} by arithmetic and \stepref{label-4.2}.}
          \step{label-4.2.2.2}{\proofis{} $\D{\abs{v}} \leq \D{\abs{loop(v)}+g\cdot{t}}$ holds because $\der{v} = r\cdot{v^2}-g \geq -g = \der{loop(v)-g\cdot{t}}$ by arithmetic and \stepref{label-4.2.1}.}
          \end{proof}
        \step{label-4.2.3}{\proofinv{} $t\leq{\epsilon}\land{v\geq{0}}$ by domain constraint.}
        \step{label-4.2.4}{\proofinv{} $loop(\abs{v})+g\cdot{t}\leq loop(\abs{v})+g\cdot\epsilon$  by arithmetic, \stepref{label-4.2.1}, and \stepref{label-4.2.3}. }
        \step{label-4.2.5}{\proofinv{} $loop(\abs{v})+g\cdot \epsilon < \sqrt{\frac{g}{\rpvar}}$ by arithmetic and \stepref{label-4.2}.}
        \step{label-4.2.6}{Q.E.D. Because $\abs{v}> \sqrt{\frac{g}{\rpvar}}$ by transitivity, \stepref{label-4.2.2}, \stepref{label-4.2.4}, and \stepref{label-4.2.5}.}
         \pflonglabel{pfstep-trans}
        \end{proof}
      \step{label-4.3}{\ctrlcase{$r:=\rpvar; t:=0$}. Safety is invariant but not inductive: $\D{\abs{v}} > \left(-\sqrt{\frac{g}{\rpvar}}\right)'$.
        We augment the ODE with a \emph{differential ghost} variable, enabling an inductive invariant.}
        \begin{proof}
        \step{label-4.3.1}{\proofinv{} $g>0\land{\rpvar>0}$ is invariant.}
        \step{label-4.3.2}{\proofintro{} $y'=-\frac{\rpvar}{2}\cdot\left(\abs{v}+\sqrt{\frac{g}{\rpvar}}\right)\cdot{y}$ with $loop(y)^2\cdot\left(-\abs{v}+\sqrt{\frac{g}{\rpvar}}\right)=1$ for fresh variable $y$.
          This is sound since $y'$ is linear in $y$ and thus duration of the ODE is unchanged, and since $y$ is fresh the augmented ODE agrees with the original on all other variables.
        }
        \step{label-4.3.3}{\proofinv{} $y^2\cdot\left(-\abs{v}+\sqrt{\frac{g}{\rpvar}}\right)=1$ by differential induction. The base case holds by \stepref{label-4.3.2}; the inductive step holds by construction of $y'$ and arithmetic.}
        \step{label-4.3.4}{\proofinv{} $\abs{v} < \sqrt{\frac{g}{\rpvar}}$ because it is arithmetically equivalent to \stepref{label-4.3.3}.}
        \end{proof}
      \end{proof}
\step{label-5}{Q.E.D. Because \stepref{label-4} implies the postcondition by arithmetic.}
  \end{proof}
We mechanize this proof in Kaisar as a guiding example throughout the paper.

\newcommand{\kwassume}{\textbf{assume}}
\newcommand{\kwassert}{\textbf{assert}}
\newcommand{\kwlet}{\textbf{let}}
\newcommand{\kwstate}{\textbf{state}}
\newcommand{\kwsolve}{\textbf{solve}}
\newcommand{\kwshow}{\textbf{show}}
\newcommand{\kwby}{\textbf{by}}
\newcommand{\kwusing}{\textbf{using}}
\newcommand{\kwnote}{\textbf{note}}
\newcommand{\kwhave}{\textbf{have}}
\newcommand{\kwinv}{\textbf{inv}}
\newcommand{\kwghost}{\textbf{Ghost}}
\newcommand{\kwind}{\textbf{Ind}}
\newcommand{\kwpre}{\textbf{Pre}}
\newcommand{\kwfinally}{\textbf{finally}}
\newcommand{\kwassign}{\textbf{assign}}
\newcommand{\kwmid}{\textbf{after}}
\newcommand{\kwfirst}{\textbf{have}}
\newcommand{\kwthen}{\textbf{then}}
\newcommand{\kwcase}{\textbf{case}}
\newcommand{\kwfocus}{\textbf{focus}}

\newcommand{\uproof}{{\tt UP}}
\newcommand{\fproof}{{\tt FP}}
\newcommand{\iproof}{{\tt IP}}
\newcommand{\vproof}{{\tt VP}}
\newcommand{\brule}{{\tt BR}}
\newcommand{\prule}{{\tt PR}}
\newcommand{\drule}{{\tt DR}}
\newcommand{\slet}[3]{\kwlet~#1~=~#2~#3}
\newcommand{\snote}[3]{\kwnote~#1~=~#2~#3}
\newcommand{\have}[4]{\kwhave~#1:#2~#3~#4}
\newcommand{\sshow}[2]{\kwshow~#1~#2}
\newcommand{\bassign}[3]{\kwassign~#1:=#2~#3}
\newcommand{\dassign}[3]{\bassign{#1}{#2}{#3}}
\newcommand{\bassignany}[2]{\kwassign~\prandom{#1}~#2}
\newcommand{\bcase}[4]{\left(\kwcase~#1~\Rightarrow~#2~|~#3~\Rightarrow~#4\right)}
\newcommand{\pcaseol}[4]{\left(\kwcase~#1~\Rightarrow~#2~|~#3~\Rightarrow~#4\right)}
\newcommand{\pcaseor}[3]{\left(\kwcase({#1}\lor{#2})~#3\right)}
\newcommand{\pcaseiffllr}[4]{\left(\kwcase~\limply{#1}~\Rightarrow~#3~|~{#2}\limply~\Rightarrow~#4\right)}
\newcommand{\pcaseifflrl}[4]{\left(\kwcase~\limply{#2}~\Rightarrow~#3~|~{#1}\limply~\Rightarrow~#4\right)}
\newcommand{\pcaseiffr}[4]{\left(\kwcase({#1}\bimply{#2})~{\it{of}}~L~\Rightarrow~#3~|~R~\Rightarrow~#4\right)}
\newcommand{\pcasediaiter}[4]{\left(\kwcase({#2}\lor\ddiamond{#1}{\ddiamond{#1^*}{#2}})~{\it{of}}~L~\Rightarrow~#3~|~R~\Rightarrow~#4\right)}
\newcommand{\pcaseimpl}[4]{\left(\kwcase~\limply{#1}~\Rightarrow~#3~|~{#2}\limply~\Rightarrow~#4\right)}
\newcommand{\pcasear}[4]{\left(\kwcase({#1}\land{#2})~{\it{of}}~L~\Rightarrow~#3~|~R~\Rightarrow~#4\right)}
\newcommand{\bcon}[3]{\kwmid{}~\{#2\}~\kwfirst{}~#1~\kwthen{}~\{#3\}}
\newcommand{\tinycolon}{\colon{}\hspace{-0.02in}}
\newcommand{\dsolve}[5]{\kwsolve~#1~ t:#2~dc:#3~#4~#5}
\newcommand{\bsolve}[4]{\kwsolve~#1~ t\tinycolon{}#2~\ dom\tinycolon{}#3~#4}
\newcommand{\bassert}[3]{\kwassume~#1:#2~#3}
\newcommand{\dassignany}[3]{\kwassign~\prandom{#1}~#2~#3}
\newcommand{\dcon}[3]{\bcon{#1}{#2}{#3}}
\newcommand{\dcase}[3]{\left(\kwcase~(#1\cup#2)~{of}~#3\right)}
\newcommand{\dassert}[4]{\kwassert~#1:#2~\{~#3~\}~#4}
\newcommand{\method}{{\tt method}}
\newcommand{\axiom}[4]{{\tt axiom}~#1~=~#2~#3~#4}
\newcommand{\ident}{{\texttt{ident}}}
\newcommand{\freeident}{{\texttt{ident{\textunderscore}}}}
\newcommand{\wildpat}{{\texttt{\textunderscore}}}
\newcommand{\using}{{\kwusing}}
\newcommand{\termone}{\theta_1}
\newcommand{\termtwo}{\theta_2}
\newcommand{\pat}{p}
\newcommand{\qat}{q}
\newcommand{\by}{\kwby{}}
\newcommand{\smethod}{method}
\newcommand{\simp}{{\tt simp}}
\newcommand{\auto}{{\tt auto}}
\newcommand{\rcf}{\ensuremath{\mathbb{R}}}
\newcommand{\closeid}{{\tt id}}
\newcommand{\finally}[1]{\kwfinally~#1}
\newcommand{\sinv}[5]{\kwinv~#1:#2\{\kwpre~\Rightarrow~#3~|~\kwind~\Rightarrow~#4\}~#5}
\newcommand{\scon}[5]{Inv~#1~\varphi(v)=#2 \{Pre~\Rightarrow~#3~|~Post~\Rightarrow~#4~|~Inv~\Rightarrow~#5\}}
\newcommand{\sghost}[4]{\kwghost~\humod{#1}{#3};#1'=#2~#4}
\newcommand{\sstate}[2]{\kwstate~#1~#2}
\newcommand{\upchk}[3]{#2 : (#1\vdash#3)}
\newcommand{\fpchk}[3]{#2 : (#1\vdash#3)}
\newcommand{\fpchks}[4]{#2:(#1\vdash_{#4}#3)}
\newcommand{\ppchk}[3]{#2 : (#1\vdash{#3})}
\newcommand{\tpchk}[5]{\left({#2}\leadsto{#5}\right)#3 : \left({#1}\vdash{#4}\right)}
\renewcommand{\G}{\Gamma}
\newcommand{\GE}{\Gamma\vdash}
\newcommand{\GDE}{\Gamma;\Delta;H\vdash}
\newcommand{\ematch}[2]{\textbf{match}(#1,#2)}
\newcommand{\emmatch}[3]{\textbf{match}_{#1}(#2,#3)}
\newcommand{\emmmatch}[4]{\textbf{match}_{#1|#2}(#3,#4)}
\newcommand{\match}[3]{\textbf{match}(#1,#2)=#3}
\newcommand{\mmatch}[4]{\textbf{match}_{#1}(#2,#3)=#4}
\newcommand{\mmmatch}[5]{\textbf{match}_{#1|#2}(#3,#4)=#5}
\newcommand{\nmatch}[2]{\match{#1}{#2}{\bot}}
\newcommand{\nmmatch}[3]{\mmatch{#1}{#2}{#3}{\bot}}
\newcommand{\nmmmatch}[4]{\mmmatch{#1}{#2}{#3}{#4}{\bot}}
\newcommand{\eval}{\downarrow}
\newcommand{\Eval}{\Downarrow}
\newcommand{\laze}[3]{#1\vdash#2\eval#3}
\newcommand{\eage}[3]{#1\vdash#2\Eval#3}
\newcommand{\eeag}[2]{\bar{#2}_{#1}}
\newcommand{\longeag}[1]{\overline{#1}}
\newcommand{\longeeag}[2]{\overline{#2}_{#1}}
\newcommand{\eag}[1]{\bar{#1}}
\newcommand{\ext}[3]{#1,#2{\hskip -0.02in}\colon{}{\hskip -0.03in}#3}
\newcommand{\rext}[3]{#1,#2{\hskip -0.02in}\colon{}{\hskip -0.03in}#3}
\newcommand{\lext}[3]{#1{\hskip -0.02in}\colon{}{\hskip -0.03in}#2,#3}
\newcommand{\csing}[2]{\{#1{\hskip -0.02in}\colon{}{\hskip -0.03in}#2\}}
\newcommand{\cemp}{\{\}}
\newcommand{\oldcase}[1]{\textbf{Case}~{#1}:}

\renewcommand{\to}{\rightarrow}
\newcommand{\hhtime}[2]{#1,#2}


\section{First-Order Kaisar}
\label{sec:fok}
We present the first-order (real-arithmetic, i.e. \folr) fragment of Kaisar alone before considering dynamic logic.
The examples of Figure~\ref{fig:fok-ex} demonstrate the key features of First-Order Kaisar:
\begin{itemize}
\item Block-structuring elements (\kwhave{}, \kwnote{}, \kwlet{}, \kwshow{}) introduce intermediate facts, definitions, and conclusions.
 Facts $\phi$ are assigned names $x:\phi$ for later reference.
\item Unstructured proof methods (\kwusing{} $\mathit{facts}$ \kwby{} $\langle{\mathit{method}}\rangle$) close the leaves of proofs.
\item Backward-chaining propositional rules (\kwassume{}, \kwcase{}) decompose logical connectives.
\item Forward-chaining proof terms (\kwnote{} $x = \fproof$) make arithmetic lemmas convenient.
\item Patterns and abbreviations (\kwlet{} $x\_ = \ldots$) improve conciseness.
\end{itemize}
All of these features have appeared previously in some form in the literature~\cite{DBLP:conf/mkm/Wenzel06}.
Here we make clear any differences specific to Kaisar and in doing so lay a solid foundation for the development of nominals.
We do so by expanding upon the examples with formal syntax and semantics, given as a proof-checking relation.
Note that the examples in this section, because they fall under \folr, are in principle decidable.
For that reason we focus here on techniques that vastly improve the speed of decision procedures (arithmetic proving with \kwnote{} and \kwhave{}), and which generalize to the undecidable fragments of \dL handled in Sections~\ref{sec:ddk} and \ref{sec:dlk} (block-structuring and pattern-matching).

\paragraph{Examples}
For a first-order example, consider the transitivity reasoning of Step~\pfref{pfstep-trans} in Section~\ref{sec:natlang-proof}.
Figure~\ref{fig:fok-ex} presents four proofs of the  following arithmetic subgoal in \dL:
\[\abs{v} \leq loop(\abs{v})+g\cdot{t} \limply 
 loop(\abs{v})+g\cdot{t} \leq loop(\abs{v})+g\cdot{\Tvar} \limply
 loop(\abs{v})+g\cdot{\Tvar} < \sqrt{\frac{g}{\rpvar}} \limply 
 \abs{v}<\sqrt{\frac{g}{\rpvar}}\]
Figure~\ref{fig:fok-ex} develops a proof as one might in interactive proof.
Examples 1a and 1b both appeal immediately to an arithmetic solver, with Example 1b using pattern-matching to introduce concise names, e.g. ${\tt vt\_}=loop(\abs{v})+g\cdot{t}$.
Pattern-matching makes Example 1b more flexible: it will work even if the definition of $\mathit{vt\_}, \mathit{vEps\_},$ or $\mathit{vBound\_}$ changes.
However, the $\mathbb{R}$ will be prohibitively slow if $\mathit{vt\_}, \mathit{vEps\_}, \mathit{vBound\_}$ become too large (all reasoning is performed on the expanded terms).
Example 1c restores speed by isolating the transitivity axiom in the fact {\tt trans} and instantiating it with \kwnote{} to recover the result.
The \kwnote{} step supplies term inputs $\mathit{v}, \mathit{vt\_}, \mathit{vEps\_}, \mathit{vBound\_}$ first, then propositions {\tt v, gt, gEps}, matching the structure of the formula {\tt trans}.
The {\tt id} proof method closes the proof once we provide a proof of the goal with \kwusing{}.

The \kwlet{} statement reuses the pattern-matching mechanism to introduce definitions for readability at will.
Example 1d uses \kwlet{} to ensure that the bound {\tt vBound\_} is specifically $\sqrt{\frac{g}{\rpvar}}$.
This provides machine-checked documentation if, e.g. we did not intend this proof to be general or if the proof depended on the value of {\tt vBound\_}.
\begin{figure}
  \centering
  \begin{tabular}{cc}
    \begin{minipage}{0.5\textwidth}
\verb|# Example 1a|\\
\kwassume\verb| v:| $\abs{v} \leq loop(\abs{v})+g\cdot{t}$\\
\kwassume\verb| gt:| $loop(\abs{v})+g\cdot{t} \leq loop(\abs{v})+g\cdot\Tvar$\\
\kwassume\verb| gEps:| $loop(\abs{v})+g\cdot\Tvar < \sqrt{\frac{g}{\rpvar}}$\\
\kwshow\ \ $\abs{v} < \sqrt{\frac{g}{\rpvar}}$\\
\verb|  |\kwby\ $\mathbb{R}$\\
\verb||\\
\verb|# Example 1c|\\
\kwassume\verb| v:|    $\abs{v} \leq vt\_$ \\
\kwassume\verb| gt:|   $vt\_ \leq v\Tvar{}\_$\\
\kwassume\verb| gEps:| $v\Tvar{}\_ <  vBound\_$\\
\kwhave\verb| trans:|\\
\verb| |$\forall w x y z~(w\leq{x}\limply{x}\leq{y}\limply{y}<{z}\limply{w}<{z})$\\
\verb|  |\kwby\ $\mathbb{R}$\\
\kwnote\verb| res =| \\
\verb| trans|$\ \ v\ \ vt\_\ \ v\Tvar{}\_\ \ vBound\_\ \ {\tt{v}}\ \ {\tt{gt}}\ \ {\tt{gEps}}$\\
\kwshow\verb| |$\abs{v} < vBound\_$\\
\verb|  |\kwusing\ {\tt{res}} \kwby{} $id$\\
    \end{minipage}
&    \begin{minipage}{0.5\textwidth}
\verb|# Example 1b|\\
\kwassume\verb| v:|$\abs{v} \leq vt\_$\\
\kwassume\verb| gt:|$vt\_ \leq v\Tvar{}\_$\\
\kwassume\verb| gEps:| $v\Tvar{}\_ < vBound\_$\\
\kwshow\ $v > vBound\_$\\	
\verb|  |\kwby{}\ $\mathbb{R}$\\
\verb||\\
\verb|# Example 1d|\\
\kwlet\verb| vBound_ =| $\sqrt{\frac{g}{\rpvar}}$\\
\kwassume\verb| v:|    $\abs{v} \leq vt\_$\\
\kwassume\verb| gt:|   $vt\_ \leq v\Tvar{}\_$\\
\kwassume\verb| gEps:| $v\Tvar{}\_ < vBound\_$\\
\kwhave\verb| trans:|\\ 
$\forall w x y z~(w\leq{x} \limply x\leq{y} \limply y<z \limply w<z)$\\
\verb|  |\kwby{}\ $\mathbb{R}$\\
\kwnote{}\verb| res = |\\
\verb| trans|$\ \ v\ \ vt\_\ \ v\Tvar{}\_\ \ vBound\_\ \ {\tt{v}}\ \ {\tt{gt}}\ \ {\tt{gEps}}$\\
\kwshow\  $\abs{v} < vBound\_$ \\
\verb|  |\kwusing{}\ {\tt{res}} \kwby{}\ $id$\\
\verb||\\
    \end{minipage}\\[0.5in]
    \begin{minipage}{0.5\textwidth}
    \end{minipage}
&    
  \end{tabular}
    \caption{Kaisar Proofs of First-Order Example}
\label{fig:fok-ex}
\end{figure}

\paragraph{Definitions}
The proof-checking judgements of Kaisar maintain a context $\Gamma$ which maps names to assumptions (or conclusions for succedents $\Delta$) and abbreviations introduced through pattern-matching.
Abbreviation variables are suffixed with an underscore , so $\Gamma \equiv \{xz \mapsto (x > 0), {\tt{y\_}} \mapsto y_0 + 5\}$ means we have the assumption that $x > 0$ via the name $xz$ and we have abbreviated $y_0 + 5$ as $y$.
Throughout the paper, subscripts are mnemonic, e.g. $\G_\phi$ could be a context associated with $\phi$ in any way.
Kaisar expands abbreviations before reasoning to ensure no extensions to \dL are necessary.
Substitution of $x$ with $\theta$ in $\phi$ is denoted $\subst[\phi]{x}{\theta}$.

\newcommand{\GHE}{{\G;H}\vdash}
The primary proof-checking judgement $\ppchk{\G}{\sproof}{\Delta}$ says that $\sproof$ is a structured proof of the classical sequent $\G\vdash\Delta$ (i.e. the formula $\bigwedge_{\phi\in\Gamma}{\phi}\limply\bigvee_{\psi\in\Delta}\psi$).
In proving safety theorems $\phi\limply[\alpha]\psi$, the succedent $\Delta$ typically consists of a single formula.
The auxilliary judgement $\fpchk{\G}{\fproof}{\Delta}$ for forward-chaining proofs is analogous.
In Section~\ref{sec:ddk} we will extend the judgement $\ppchk{\G}{\sproof}{\Delta}$ with \emph{static execution traces} to support structured symbolic execution.

Expressions written in proofs can contain abbreviations, but sequents, being pure \dL, do not.
We write $\eeag{\G}{e}$ (or, when clear, simply $\eag{e}$) for the \emph{expansion} of $e$ which replaces all abbreviations with their values.
We also use matching of patterns $p$ with $\mmatch{\G}{\pat}{e}{\G_\pat}$ where $\G_\pat$ extends $\G$ with bindings produced by matching $e$ against $p$. 

\paragraph{Block Structure}
The top level of a Kaisar proof is a \emph{structured proof} (\sproof)
\begin{align*} 
\sproof ::= ~&\prule\ |\ \slet{p}{e}{\sproof}\ |\ \snote{x}{\fproof}{\sproof}\ |\ \have{x}{e}{\sproof_1}{\sproof_2}\ |\ \sshow{p}{\langle{method}\rangle}
\end{align*}
The \emph{propositional rules} (\prule) \kwassume{} and \kwcase{} perform propositional reasoning.
The \kwcase{} construct is multi-purpose, supporting the connectives $\vee, \bimply, \land$.
Here a vertical bar $|$ outside parentheses separates syntactic productions, while a bar inside parentheses is syntax to separate cases:
\begin{align*}
\prule ::= ~&\bassert{x}{\pat}{\sproof}\ |\ \pcasear{x\tinycolon{}\pat_\phi}{y\tinycolon{}\pat_\psi}{\sproof_1}{\sproof_2}\ \\
 &|\ \pcaseor{x\tinycolon{}\pat_\phi}{y\tinycolon{}\pat_\psi}{\sproof}\ |\ \pcaseiffr{x\tinycolon{}\pat_\phi}{y\tinycolon{}\pat_\psi}{\sproof_1}{\sproof_2}
\end{align*}
We give only right rules: left rules are analogous and derivable with the \kwfocus{} construct of Section~\ref{sec:ddk}.
In each rule we assume fresh variables $x,y$:
\begin{center}
  {\small\begin{tabular}{cc}
      \infer{\ppchk{\G}{(\bassert{x}{\pat}{\sproof})}{(\psi\limply\phi),\Delta}}
      {\mmatch{\G}{\pat}{\psi}{\Gamma_\psi} &
        \ppchk{\ext{\G_\psi}{x}{\psi}}{\sproof}{\phi,\Delta}}
      &
      \infer{\ppchk{\G}{\pcaseiffr{p}{q}{\sproof_1}{\sproof_2}}{(\phi\bimply\psi),\Delta}}
      {   \deduce{\ppchk{\rext{\G_1}{x}{\phi}}{\sproof_1}{\lext{y}{\psi}{\Delta}}\hskip 0.1in \ppchk{\rext{\G_1}{y}{\psi}}{\sproof_2}{\lext{x}{\phi}{\Delta}}} 
             {\mmatch{\G}{p\bimply{q}}{\psi\bimply\phi}{\Gamma_1}} 
        }
\\[0.05in]
      \infer{\ppchk{\G}{\pcaseor{x:\phi}{y:\psi}{\sproof}}{(\phi\lor\psi),\Delta}}
         {\ppchk{\G}{\sproof}{\lext{x}{\phi}{\lext{y}{\psi}{\Delta}}}}
      &
      \infer{\ppchk{\G}{\pcasear{p}{q}{\sproof_1}{\sproof_2}}{(\phi\land\psi),\Delta}}
      {\deduce{\ppchk{\G_1}{\sproof_1}{\lext{x}{\phi}{\Delta}}
               \hskip 0.1in \ppchk{\G_1}{\sproof_2}{\lext{y}{\psi}{\Delta}}}
         {\mmatch{\G}{p\land{q}}{\psi\land\phi}{\Gamma_1}} 
         }
    \end{tabular}  }
\end{center}

%
%
%
%
%
Pattern-matching reduces verbosity vs. writing complete formulas.
In Example 1b, we shorten the Kaisar proof of Example 1a by applying variable patterns $x\_$ and $y\_$ to the assumption (thus matching any conjunction) and a wildcard pattern \verb|_| to the goal (thus matching any goal).
\begin{center}
  {\small\begin{tabular}{ccc}
\infer{\ppchk{\G}{\have{x}{\psi}{\sproof_1}{\sproof_2}}{\phi,\Delta}}
  {
 \ppchk{\G}{\sproof_1}{\eag{\psi}} & 
 \ppchk{\ext{\Gamma}{x}{\eag{\psi}}}{\sproof_2}{\phi,\Delta}}
&    \infer{\ppchk{\G}{\slet{p}{e}{\sproof}}{\phi,\Delta}}
  {\mmatch{\G}{p}{\eag{e}}{\G_1} & \ppchk{\G_1}{\sproof}{\phi,\Delta}}
&\infer{\ppchk{\G}{\snote{x}{\fproof}{\sproof}}{\phi,\Delta}}
  {\fpchk{\G}{\fproof}{\psi} & {\ppchk{\ext{\Gamma}{x}{\psi}}{\sproof}{\phi,\Delta}}}
  \end{tabular}}
\end{center}
The \kwhave{} construct cuts, proves, and names an intermediate fact $\psi$ with $\sproof_1$, then continues the main proof $\sproof_2$.
When $\sproof_1$ immediately \kwshow{}'s $\phi$, as in Example 1c, we omit the \kwshow{} keyword in the concrete syntax.
The use of \kwhave{} in Example 1c is typical: the direct application of $\mathbb{R}$ in Example 1(a,b) does not scale if $\mathit{vt\_}, mathit{v\Tvar{}}, \mathit{vBound\_}$ are large terms.
By isolating the ${\mathit{trans}}$ axiom with \kwhave{}, we enable the arithmetic reasoning to scale.

The \kwlet{} keyword, as used in Example 1d, performs a general-purpose pattern-match, which in this case simply binds $goal{\tt\_}$ to $v > -\sqrt{\frac{g}{\rpvar}}$, then continues the proof $\sproof$.
The \kwnote{} construct is similar to \kwhave{}, except that the intermediate fact is proven by a forward-chaining proof term.
It is often convenient for instantiating (derived) axioms or performing propositional reasoning, as in Example 1c.
Because \kwnote{} uses a forward-chaining proof, we need not specify the proven formula, but rather the proof-checking judgement synthesizes it as an output.


\paragraph{Unstructured Proof Methods}
As shown in Example 1, \kwshow{} closes a proof leaf by specifying facts (\kwusing{}) and an automatic proof method ($\mathbb{R}$, {\tt id}, {\tt auto}).
In contrast to other languages~\cite{DBLP:conf/mkm/Wenzel06}, these methods are the only unstructured language construct, as we have found no need for a full-fledged unstructured language.
The \ident{} method is extremely fast but expects the conclusion to appear verbatim in the context or \kwusing{} clause.
The $\mathbb{R}$ method invokes a $\folr{}$ decision procedure~\cite{PCAD,CAD}.
We typically assist $\mathbb{R}$ on difficult goals by specifying facts with a \kwusing{} clause for speed.
The \kwusing{} clause can specify both assumptions from the context and additional facts by forward proof.
When the \kwusing{} block is empty, it defaults to the entire context.
The {\tt auto} method applies general-purpose but incomplete proof heuristics including symbolic execution and also benefits from \kwusing{}.
Note the {\tt auto} method does not have a simple proof rule.
This is okay at the unstructured proof level: any sound proof method is adequate for closing leaves of a proof.\footnote{The {\tt auto} method (tactic) of \KeYmaeraX is sound because it uses only operations of a sound LCF-style core~\cite{DBLP:conf/cade/FultonMQVP15}.}

\newcommand{\upfacts}[2]{{\textbf{facts}(#1,#2)}}
$\upfacts{{\pat}s}{{\fproof}s}$ defines the facts available to the proof method (assumptions and \fproof{} conclusions).
The pattern $q$ selects a conclusion from $\Delta$, else we default to the entire succedent.
 \begin{center}
$ \upfacts{{\pat}s}{{\fproof}s} \equiv \{\phi\in\G~|~\mexists{i}\emmatch{\G}{\pat_i}{\phi}\}\cup\{\phi~|~\mexists i~\fpchk{\G}{\fproof_i}{\phi}\}$\\[0.12in]
   \begin{tabular}{cc}
     \infer{\ppchk{\sshow{\qat}{\using~{\pat}s~{\fproof}s~\by~\mathbb{R}}}{\G}{\phi,\Delta}}
     {\ematch{\qat}{\phi} & \upfacts{{\pat}s}{{\fproof}s}~\text{valid in}~\folr{} }&
     \infer{\ppchk{\sshow{\qat}{\using~{\pat}s~{\fproof}s~\by~\closeid}}{\G}{\phi,\Delta}}
     {\ematch{\qat}{\phi} & \phi \in \upfacts{{\pat}s}{{\fproof}s}}
   \end{tabular}
 \end{center}

\paragraph{Pattern Matching}
Throughout the examples of Figure~\ref{fig:fok-ex}, we use pattern-matching to describe the shapes of expressions and select assumptions for use in automation.
The above features suffice for structural proof steps.
The pattern language of Kaisar is defined inductively:
\begin{align*} 
\pat ::=& \ident\ |\ p(vars)\ |\ p(\neg vars)\ |\ \_\ |\ \pat\cup\pat\ |\ \pat\cap\pat\ |\ \neg\pat\ |\ \otimes(e,f)
\end{align*} 
Pattern-matching is formalized as a judgement $\mmatch{\G}{\pat}{e}{\G_\pat}$, where $\G_\pat$ extends $\G$ with bindings resulting from the match (we omit $\G_\pat$ when it is not used and omit $\G$ when it is clear from context).
We write $\nmatch{\pat}{e}$ when no pattern-matching rule applies.
In the definition below, we use the notation $\otimes(e,f)$ to generically say that all operators $\otimes$ (where the arguments are expressions $e,f$) of the \dL language are supported in patterns.
The \dL operators are matched structurally.
\[\infer{\mmatch{\G}{\otimes(\pat,\qat)}{\otimes(e,f)}{\G_\qat}}
 {\mmatch{\G}{\pat}{e}{\G_\pat} & \mmatch{\G_\pat}{\qat}{f}{\G_\qat}
}\]
The meaning of a variable pattern $\freeident$ depends on whether it is bound in $\G$.
If $\ident{}$ is free, the pattern matches anything and binds it to $\ident{}$.
If $\ident{}$ is bound, the pattern matches only the value of $\ident{}$.
Wildcard patterns \verb|_| match anything and do not introduce a binding.
\begin{center}
  \begin{tabular}{c@{\hskip 0.1in}c@{\hskip 0.1in}c}
  \infer{\mmatch{\G}{\freeident}{e}{\G}}
        {\Gamma(\ident) = e} & 
  \infer{\mmatch{\G}{\freeident{}}{e}{\rext{\G}{\freeident{}}{e}}}{{\ident} \notin \G} &
  \infer{\mmatch{\G}{\wildpat}{e}{\G}}{}
  \end{tabular}
\end{center}
The above patterns often suffice for selecting individual facts, as done in forward proofs.
However, when referencing a large number of facts (e.g. in \kwshow{}), it helps to select facts in bulk, for which the following patterns are also useful.
Variable occurrence patterns $p(vars)$ and $p(\neg vars)$ select all formulas $\phi$ where the given variables do or do not occur in its free variables $\freevars{\phi}$, respectively:
\begin{center}
  \begin{tabular}{c@{\hskip 0.1in}c}
\infer{\mmatch{\G}{p(vars)}{e}{\G}}
        {vars\subseteq\freevars{e}} &     
\infer{\mmatch{\G}{p(\neg vars)}{e}{\G}}
      {vars\cap\freevars{e}=\emptyset}
  \end{tabular}
\end{center}
Patterns support set operations.
Set patterns proceed left-to-right. Union short-circuits on success and intersection short-circuits on failure.
Negation patterns $\neg\pat$ match only when $\pat$ fails to match, so we require that $\pat$ binds no variables in this case for the sake of clarity.
This is not a restriction because any free variable patterns in $\pat$ can be replaced with wildcards, which never bind.
\begin{center}{\small
  \begin{tabular}{c@{\hskip 0.2in}c}
  \infer{\match{\pat\cup\qat}{e}{\G_\pat}} 
        {\match{\pat}{e}{\G_\pat}
} & 
\infer
  {\mmatch{\G}{\pat\cap\qat}{e}{\G_\qat}}
    {\mmatch{\G}{\pat}{e}{\G_\pat} \hskip 0.1in \mmatch{\G_\pat}{\qat}{e}{\G_\qat}}
\\[0.1in]
\infer{\match{\pat\cup\qat}{e}{\G_\qat}}
   {\nmatch{\pat}{e} & 
    \match{\qat}{e}{\G_\qat}} & 
    \infer{\mmatch{\G}{\neg\pat}{e}{\G}}
       {\nmatch{\pat}{e} & \boundvars{\pat}=\emptyset}
  \end{tabular}}
\end{center}

\paragraph{Extended Expression Evaluation}
In order to keep abbreviations {\tt ident\_} outside the core language of \dL, we automatically expand \emph{extended terms} featuring abbreviations {\tt ident\_} to proper terms
with the \emph{term expansion} function $\eeag{\G}{e} = f$ (we omit $\G$ when clear).
As in pattern-matching, expression constructors map through homomorphically and identifiers are substituted with their values:
\begin{center}
  {\small
\begin{tabular}{c@{\hskip 0.5in}c}
$\longeag{\otimes(e_1,e_2)}=\otimes(\longeag{e_1},\longeag{e_2})$ &
$\longeeag{\G}{\ident}=e$\ \ \ when\ ${\Gamma(\ident) = e}$
\end{tabular}
}
\end{center}

\paragraph{Forward-Chaining Proof Terms}
A comprehensive structured language should provide both backward and forward-chaining proof.
Wenzel~\cite{DBLP:conf/tphol/Wenzel99} observes that backward chaining is often most natural for major steps and forward chaining more natural for minor intermediate steps.
Backward chaining works well when the proof can be guided either by the structure of a formula (e.g. during symbolic execution) or by human intuition (e.g. when choosing invariants).
The addition of forward chaining becomes desireable when we wish to experiment with free-form compositions of known facts, e.g. when trying to assist an arithmetic solver with manual simplifications.
In Kaisar, forward-chaining proofs are built from atomic facts in $\G$ and a standard library of first-order logic rules, which are composed with application $(\fproof~\fproof)$ and instantiation $(\fproof~\theta)$.
\begin{align*} 
\fproof ::=& \pat\ |\ (\fproof~\fproof)\ |\ (\fproof~\theta)
\end{align*} 
The judgement $\GE_\Sigma e : \phi$ says $e$ is a proof of $\phi$ using assumptions $\Gamma$, where $\Sigma$ is a library of builtin propositional rules.
Facts and rules can be selected from $\G$ and $\Sigma$ with patterns:
\begin{center}{\small
  \begin{tabular}{c@{\hskip 0.2in}c@{\hskip 0.2in}c}
\infer{\fpchks{\G}{\pat}{\phi}{\Sigma}}
      {\phi \in (\Sigma~\cup~\G) & \ematch{\phi}{\pat}} &
\infer{\fpchks{\G}{\fproof_1~\fproof_2}{\psi}{\Sigma}}
      {\fpchks{\G}{\fproof_2}{\phi}{\Sigma}&\fpchks{\G}{\fproof_1}{(\phi\limply\psi)}{\Sigma}} &
\infer{\fpchks{\G}{\fproof~\theta}{\subst[\phi]{x}{\eag{\theta}}}{\Sigma}}
      {\fpchks{\G}{\fproof}{\forall x~\phi}{\Sigma}}
  \end{tabular}}
\end{center}
\section{Static and Dynamic Execution Traces}
\label{sec:trace}
We describe	 the static execution trace mechanism used to implement nominals and automate historical reference.
We present the challenges of historical reference by an easy example with sequent calculus proofs for assignment.
Consider the sequent calculus assignment rules:
\begin{center}
\begin{tabular}{c@{\hskip 0.2in}c}
\infer[{[:=]sub}]{\GE[x:=\theta]\phi,\Delta}{\Gamma\vdash\subst[\phi]{x}{\theta},\Delta}
&
\infer[{[:=]eq}]{\GE[x:=\theta]\phi,\Delta}{\subst[\Gamma]{x}{x_i},x=\theta\vdash \phi,\subst[\Delta]{x}{x_i}}
\end{tabular}
\end{center}
In $[:=]sub$, reference to the initial state stays simple, while the final value does not.
In the resulting goal $\Gamma\vdash\subst[\phi]{x}{\theta},\Delta$, the variable $x$ refers to the initial value, but we must write $\theta$ (which may be a large term) to refer to the final value.
In $[:=]eq$, the opposite is true: $x$ now refers to the final value, but the initial value of $x$ is stored in a fresh ghost variable $x_i$, which we must remember.

Complicating matters further, in practice we wish to use a combination of $[:=]sub$ and $[:=]eq$.
The $[:=]sub$ rule only applies when the substitution $\subst[\phi]{x}{\theta}$ is admissible, e.g. when $\freevars{\theta}\cap\boundvars{\phi}=\emptyset$.
However, we wish to use it whenever it applies to reduce the total number of variables and formulas in the context, which are essential to the performance of real-arithmetic decision procedures~\cite{PCAD}.
Therefore the natural approach is to use $[:=]sub$ when it applies and $[:=]eq$ otherwise.

After a number of such reasoning steps, the meaning of a variable $x$ in a sequent (we call this the \emph{sequent-level meaning}) may disagree with the value of both the initial and final values of the program variable $x$ (we call these the \emph{program-level meaning} in the initial and final states).
To observe this issue in action, consider the following (trivial) sequent proof:
\irlabel{eq|$[:=]eq$}
\irlabel{sub|$[:=]sub$}
\irlabel{closeId|id}
\begin{sequentdeduction}[array]
\linfer[closeId]
{\lclose}
{\linfer[sub]
  {\lsequent{x_0=2,x=1}{x+5=3}}
  {\linfer[eq]{\lsequent{x_0=2,x=1}{[x:=x+5]x=3}}
          {\linfer[eq]{\lsequent{x=2}{[x:=1][x:=x+5]x=3}}
            {\lsequent{ }{[x:=2][x:=1][x:=x+5]x=3}}}}
}
\end{sequentdeduction}

There are four program states in the above proof: one before each assignment and one at the end.
Throughout the first two steps, the sequent-level meaning of $x$ corresponds exactly with its program-level meaning in the current state.
At the final state of the program, the value of $x$ in the program corresponds to $x + 5$ in the sequent, whereas $x$ in the sequent refers to the value of $x$ from its \emph{previous} state.
This is a problem: It is non-trivial to reference initial and final, let alone intermediate state in a proof regime that mixes $[:=]eq$ and $[:=]sub$, yet we want them all.
\newcommand{\ddia}[2]{\langle{#1}\rangle{#2}}
\newcommand{\interp}[1]{\ivaluation{}{#1}{}}
\newcommand{\tinterps}[2]{\ivaluation{}{#1}{#2}}
\newcommand{\mkint}[1]{\mathit{interp}(#1)}
\newcommand{\dLN}{\dLh}
\newcommand{\trfst}[1]{{\tt fst}(#1)}
\newcommand{\trlst}[1]{{\tt last}(#1)}
\newcommand{\dom}[1]{\ensuremath{\mathrm{Dom}(#1)}}
\newcommand{\corr}[2]{#1\sim{#2}}
\newcommand{\hemp}{\epsilon}
\newcommand{\tsing}[1]{(#1)}
\newcommand{\ssubst}[3]{#1_{#2}^{#3}}
\newcommand{\now}[1]{\ensuremath{\mathrm{now}(#1)}}
\newcommand{\hnow}[2]{\ensuremath{\mathrm{now}_{#1}(#2)}}
\newcommand{\hnom}[3]{\ensuremath{\mathrm{#1}_{#2}(#3)}}
\newcommand{\seqstate}[2]{\ensuremath{\mathfrak{S}(#1;~#2)}}
\newcommand{\sseqstate}[2]{\ensuremath{\mathfrak{S}(#1;#2)}}
\newcommand{\sval}[3]{\seqstate{#1}{#2}(#3)}
\newcommand{\ssval}[3]{\sseqstate{#1}{#2}(#3)}
\newcommand{\mknom}[3]{nom({#1},{#2})={#3}}
\newcommand{\ttrunc}[2]{#1(#2)}

We address this problem by automating state-change bookkeeping in a \emph{static execution trace} data structure.
Static execution traces automate state navigation by providing a static, finitary abstraction of a dynamic program execution trace.
\begin{definition}[Static Traces]
\label{def:stat-tr}
A static trace is an ordered list of four kinds of \emph{trace records} (tr):
\[tr ::= \hrsub{x}{\theta}\ |\ \hreq{x}{x_i}{\theta}\ |\ \hrany{x}{x_i}\ |\ t\]
We denote the empty trace by $\hemp{}$.
For any state name $t$ appearing in $H$ (i.e. $t \in \dom{H}$) we denote by $\ttrunc{t}{H}$ the unique prefix of $\eta$ ending at state $t$.
\end{definition}
By maintaining a substitution record $\hrsub{x}{\theta}$ for each substitution, we can automatically translate between the sequent-level and (current) program-level meaning of an expression.
For example, if you wish to know the sequent-level value of the program-level term $x^2$ in the final state, after $\hrsub{x}{x+5}$, it suffices to compute $\subst[(x^2)]{x}{x+5} = (x + 5)^2$.
We enable nominal references to past states by adding a $t$ record at each named state $t$ and an $\hreq{x}{x_i}{\theta}$ any time the $[:=]eq$ rule is used to rename $x_i=x$ and introduce an assumption $x=\theta$.
This allows us to determine, e.g. that the second value of $x$ was ultimately renamed to $x_0$.
The case for $\prandom{x}$ is marked with $\hrany{x}{x_i}$, which is analogous to $\hreq{x}{x_i}{\theta}$, without any assumption on the $x$ value.

Given a trace, we can reconstruct the value at any proof state by replaying the composition of all substitutions since the renaming of interest.
We begin by defining the pseudo-nominal $\hnow{H}{x}$ which computes the \emph{current} sequent-level equivalent for a program-level variable $x$ at the end of trace $H$.
All expressions are by default assumed to occur at state {\tt now}.
Because expressions depend only on the values of variables, it suffices to define the variable case:
  \begin{center}
\begin{minipage}{0.4\linewidth}
    \begin{align*}
      &\hnow{\hemp{}}{x} &&= x                &\hnow{\hhsub{H}{x}{\theta}}{x} &&= \theta\\
      &\hnow{\hhany{H}{x}{x_i}}{x} &&= x      & \hnow{\hheq{H}{x}{x_i}{\theta}}{x} &&= x \\
      &\hnow{H,{\tt hr}}{x} &&= \hnow{H}{x} & \text{(for all other {\it{hr}})}~\ldots&&
    \end{align*}
\end{minipage}
  \end{center}
We recurse until we find a record for the $x$ of interest.
\newcommand{\kwsub}{{\bf{sub}}}
\newcommand{\kweq}{{\bf{eq}}}
If it is a \kwsub{} record, we stop immediately and return $\theta$: even if the trace contains multiple \kwsub{}s for the same $x$, they are cumulative (the last record contains the composition of all \kwsub{}s).
If it is a \kweq{} record, then we use the current value of $x$.

To compute a nominal $\hnom{t}{H}{x}$, we determine the name $x$ has at state $t$ in the history $H$, which is either $x$ or some ghost $x_i$ (if $x$ has been ghosted since state $t$).
We then compute $\hnow{H',t}{x}$ or $\hnow{H',t}{x_i}$ accordingly where $H'$ is the prefix of $H$ preceding state $t$.
As in the program variable case, $\hnow{H}{x_i}$ can either be exactly $x_i$ or the result of a substitution.
  \begin{center}
\begin{minipage}{0.7\linewidth}
\begin{align*}
  &\hnom{t}{H,t}{x}               &&= \hnow{H}{x}       &\hnom{t}{\hheq{H}{x}{x_i}{\theta}}{x}  &= \hnom{t}{H}{x_i}\\
  &\hnom{t}{\hhany{H}{x}{x_i}}{x} &&= \hnom{t}{H}{x_i}  &\hnom{t}{H,{\tt{hr}}}{x}               &= \hnom{t}{H}{x}~\ \ \text{(for all other {\it{hr}})}~\ldots
\end{align*}
\end{minipage}
\end{center}
It is now natural to ask the question: If $\hnow{H}{x}$ converts between the program-level state and sequent-level state, can we give a precise meaning to the notion of sequent-level state?
The answer is yes, but in general the sequent-level state will not be identical to any specific state the program passed through, but rather each variable might take its meaning from different past states.
To this end, we define a notion of \emph{dynamic trace} encapsulating all past program states.

\begin{definition}[Dynamic Traces] 
\label{def:dyn-tr}
A \emph{dynamic trace} $\eta$ is a non-empty list of program states $\omega$, possibly interleaved with state names $t$ (but always containing at least one state).
The first state is denoted $\trfst{\eta}$, the last state $\trlst{\eta}$.
As with static traces, for any state name $t$ appering in the trace (i.e. $t\in\dom{\eta}$), we denote by $\ttrunc{t}{\eta}$ the longest prefix of $\eta$ preceding state name $t$.
We denote singleton traces $\tsing{\omega}$.
\end{definition}
\begin{definition}[Sequent-Level State]
\label{def:seq-state}
We define \emph{sequent-level state} $\seqstate{\eta}{H}$ for dynamic and static traces $\eta$ and $H$.
Recall that after $\hreq{x}{x_i}{\theta}$ the variable $x$ represents the end state of the assignment, while after $\hrsub{x}{\theta}$ it represents the start state.
We take each program variable $x$ from its most recent $\hreq{x}{x_i}{\theta}$ or $\hrany{x}{x_i}$ state, or the initial state if none exists.
Each ghost is assigned at most once and takes its value from the state in which it was assigned.
We give an inductive definition:
\begin{center}
\begin{minipage}{0.70\linewidth}
\begin{align*}
  \seqstate{\tsing{\omega}}{\hemp{}}                &=\omega & \\
  \seqstate{\eta,\omega}{\hhany{H}{x}{x_i}}         &=\subst[\seqstate{\eta}{H}]{x_i}{\sseqstate{\eta}{H}(x)}~_{x}^{\omega(x)} & \text{($x_i$ fresh)}\\
  \seqstate{\eta,\omega}{\hheq{H}{x}{x_i}{\theta}}  &=\subst[\seqstate{\eta}{H}]{x_i}{\sseqstate{\eta}{H}(x)}~_{x}^{\omega(x)} & \text{($x_i$ fresh)}\\
  \seqstate{\eta,\omega}{\hhsub{H}{x}{\theta}}      &=\subst[\seqstate{\eta}{H}]{x}{\omega(x)}&\\
  \seqstate{\eta,t}{H,t}                            &=\seqstate{\eta}{H}&
\end{align*}
\end{minipage}
\end{center}
\end{definition}
\section{Discrete Dynamic Kaisar}
\label{sec:ddk}
We now extend Kaisar with its core feature: nominal terms.
We add a  construct {\kwstate{} $t$} which gives a name $t$ to the current abstract proof state, after which we can write nominal terms $t(\theta)$ to reference the value of term $\theta$ at state $t$ from future states. 
Nominal terms are supported by {\it structured symbolic execution} rules for each program construct, which automatically maintain the corresponding static execution trace.
As before we proceed from examples to syntax and proof-checking rules.

\paragraph{Examples} 
We continue the proof of Model~\ref{model:dive}, augmenting it with loop invariants and other discrete program reasoning, but we leave differential equation reasoning for Section~\ref{sec:dlk}.
Recall the program and statement of Theorem~\ref{thm:safe} (Skydiver Safety for Model~\ref{model:dive}):
\begin{align*}
  \mathit{Pre}   & \equiv \left(\mathit{dc} \land \mathit{const}\right) \land \mathit{dyn} & \mathit{plant} &\equiv \humod{t}{0};~ \{x'=v,~ v'=\rvar\cdot v^2 - g ~\&~ x \ge 0 \land v<0 \land t \le \Tvar\} \\
  \mathit{dc}    &\equiv x \ge 0 \land v < 0 & \mathit{const} & \equiv g > 0 \land 0 < \ravar  < \rpvar \land \Tvar \ge 0 \\
  \mathit{dyn}   &\equiv  \abs{v} < \sqrt{\frac{g}{\rpvar}} < m & \mathit{ctrl}  &\equiv~ ?\left(\rvar=\ravar \land v - g\cdot \Tvar > -\sqrt{\frac{g}{\rpvar}}\right) \cup \humod{\rvar}{\rpvar}
\end{align*}
\begin{proposition}\label{prop:safe}$\rvar = \ravar\land{dc}\land{const}\land{dyn}\limply[\{\mathit{ctrl};\mathit{plant}\}^*](x=0\limply v \leq m)$ is valid. \end{proposition}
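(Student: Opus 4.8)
The statement is a mild weakening of Skydiver Safety (\rref{thm:safe})---its postcondition $x=0\limply v\le m$ follows from the $x=0\limply\abs{v}\le m$ proved there---so the plan is to replay the hierarchical argument of \rref{sec:natlang-proof} as a \dL loop-induction proof with loop invariant $J \equiv \mathit{dc}\land\mathit{const}\land\mathit{dyn}$. The base case $\rvar=\ravar\land\mathit{dc}\land\mathit{const}\land\mathit{dyn}\limply J$ is weakening, and the use case $J\limply(x=0\limply v\le m)$ is immediate in $\folr{}$: $\mathit{dc}$ gives $v<0$ and $\mathit{dyn}$ gives $m>\sqrt{g/\rpvar}\ge 0$, so $v<0<m$ regardless of $x=0$ (we never even use the antecedent $x=0$). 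Everything thus reduces to the inductive step $J\limply[\mathit{ctrl};\mathit{plant}]J$, discharged conjunct by conjunct.

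Conjunct $\mathit{const}$ is preserved vacuously, since none of $g,\ravar,\rpvar,\Tvar$ is bound in $\mathit{ctrl};\mathit{plant}$, and $\mathit{dc}=x\ge 0\land v<0$ is preserved because $\mathit{ctrl}$ writes only $\rvar$ and $\mathit{plant}$ ends an ODE whose evolution domain already contains $x\ge 0\land v<0$. The content lies in the first half $\abs{v}<\sqrt{g/\rpvar}$ of $\mathit{dyn}$ (its second half $\sqrt{g/\rpvar}<m$ is again vacuous). To show $[\mathit{ctrl};\mathit{plant}](\abs{v}<\sqrt{g/\rpvar})$ from $J$, I would symbolically execute the choice in $\mathit{ctrl}$ and the opening assignment $t:=0$ of $\mathit{plant}$---the structured symbolic execution of this section, introducing a \kwstate{} name for the loop head so that its value of $\abs{v}$ is available downstream as a nominal---leaving one ODE subgoal per branch of $\mathit{ctrl}$.

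In the dive branch the test gives $\rvar=\ravar$ and, after rearranging with $v<0$, $\abs{v}+g\Tvar<\sqrt{g/\rpvar}$ at the loop head; since $\rvar=\ravar>0$ (the latter from $\mathit{const}$), $v'=\rvar v^2-g\ge -g$, so $\abs{v}$ (which equals $-v$ on the domain $v<0$) has derivative at most $g$, and $\abs{v}\le\mathit{loop}(\abs{v})+g\,t$ is a differential invariant (base case with equality just after $t:=0$), where $\mathit{loop}(\abs{v})$ is the loop-head nominal. With the domain bound $t\le\Tvar$ and the test, $\folr{}$ yields $\abs{v}\le\mathit{loop}(\abs{v})+g\Tvar<\sqrt{g/\rpvar}$. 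In the brake branch $\rvar$ becomes $\rpvar$, so $v'=\rpvar v^2-g$ has equilibrium velocity $-\sqrt{g/\rpvar}$, and $\abs{v}<\sqrt{g/\rpvar}$ (equivalently $v>-\sqrt{g/\rpvar}$) is invariant by asymptotic convergence but not differentially inductive; I would introduce a differential ghost, a fresh $y$ whose linear ODE $y'=\frac{\rpvar}{2}(\abs{v}+\sqrt{g/\rpvar})\,y$ is chosen precisely so that $y^2(\sqrt{g/\rpvar}-\abs{v})$ has vanishing derivative (soundness: $y'$ linear in $y$ leaves the ODE duration unchanged, and $y$ is fresh), with $y$ initialised so that $y^2(\sqrt{g/\rpvar}-\abs{v})=1$---possible because $J$ gives $\abs{v}<\sqrt{g/\rpvar}$ at the loop head. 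That product is then a differential invariant forcing $\sqrt{g/\rpvar}-\abs{v}=1/y^2>0$. Re-assembling the reconstituted conjuncts re-establishes $J$, and the loop rule concludes.

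\textbf{Main obstacle.} Everything but the brake branch is bookkeeping: the dive branch is a routine differential induction, and the symbolic-execution and nominal apparatus is exactly what this section builds. The essential, un-automatable step is the brake-branch differential ghost---recognising that $v$ converges to $-\sqrt{g/\rpvar}$ and that the gap $\sqrt{g/\rpvar}-\abs{v}$ can be tracked as $1/y^2$ for a suitable linear $y$. This is the creative input that Kaisar's \kwghost{} construct is meant to let the author state directly.
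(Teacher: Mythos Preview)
Your proposal is correct and follows essentially the same route as the paper: loop induction with invariant $\mathit{dc}\land\mathit{const}\land\mathit{dyn}$, with $\mathit{const}$ vacuous, $\mathit{dc}$ by the domain constraint, and $\mathit{dyn}$ split on the controller choice---the dive branch by the differential invariant $\abs{v}\le\mathit{loop}(\abs{v})+g\,t$ and the brake branch by the differential ghost for asymptotic convergence. Your observation that the postcondition $v\le m$ follows already from $v<0<m$ (so the $x=0$ antecedent is unused) is a small simplification over the paper's treatment of the stronger $\abs{v}\le m$ postcondition.
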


Examples 2(a-c) are proofs of this Proposition~\ref{thm:safe} (with differential equation reasoning postponed until Section~\ref{sec:dlk}).
As before, we proceed from basic to advanced proof techniques.
Examples 2(a,b) both use a single loop invariant, and begin by splitting the conclusion into $[\{ctrl;plant\}](dc \land const)$ and $[\{ctrl;plant\}]dyn$ to separate the discrete and continuous reasoning.
Example 2a splits eagerly on the control decision, which is straightforward but often requires duplication of proofs about the plant.
Example 2b uses Hoare-style composition reasoning instead, which reduces duplication.
In general, Hoare-style composition adds the cost of the user supplying a composition formula $I$, but $I$ is trivial in this case. 
Example 2c proves the loop invariants $\mathit{dc}, \mathit{const}, \mathit{dyn}$ separately, which is useful in interactive proofs when some invariants ($\mathit{dc}, \mathit{const}$) prove trivially while others ($\mathit{dyn}$) are complex.
Example 2d proves a slightly different theorem which establishes a bound on the position $x$, showcasing discrete nominals ($init(v), init(x)$).
More advanced uses of nominals are in the full proof in Section~\ref{sec:dlk}.
\begin{figure}
  \centering
  \begin{tabular}{cc}
    \begin{minipage}{0.5\textwidth}
\verb|# Example 2a|\\
\kwassume\verb| ra:|        $r=a$\\
\kwassume\verb| domInit:|   $dom\_$\\
\kwassume\verb| constInit:| $const\_$\\
\kwassume\verb| dynInit:|   $dyn\_$\\
\kwinv\verb| J:| $(dom\_ \land const\_) \land dyn\_~\{$ \\
\kwpre$~\Rightarrow~\mathbb{R}$\\
\kwind$~\Rightarrow~\{$\\
\verb| |\kwcase\ $(dom\_ \land const\_) \Rightarrow~\{$\\
\verb|  |\kwcase\ $?(\_)~\Rightarrow$ \\
\verb|   |\kwassume\verb| slowEnough:|\\
\verb|     |$(vEps\_ > vBound\_ \land r=a)$\\
\verb|   |\kwshow\ $\_$ \kwby{} $\mathbb{R}$\\
\verb+  |+$\humod{\rvar}{\rpvar}~\Rightarrow$\\
\verb|   |\kwassume$~\humod{x}{\rpvar}$\\
\verb|   |\kwshow\ $\_$ \kwby{} $\mathbb{R}$\\
$\}$\\
\verb| |\kwcase\ $dyn\_ \Rightarrow~\{$\\
\verb|  |\kwcase\ $?(\_) \Rightarrow$\\
\verb|   |\kwassume\verb| slowEnough:|\\
\verb|     |$(vEps\_ > vBound\_ \land r=a)$\\
\verb|      |$\ldots$\\
\verb+  |+$\humod{\rvar}{\rpvar}~\Rightarrow~\ldots$\\
$\}\}\}$\\
\kwshow$ (x=0 \limply |v| \leq m)$\\
\kwusing\ {\tt{J}} \kwby{} $auto$\\
\verb||\\
\verb||\\
\verb|#Example 2c|\\
\kwassume\verb| ra:|  $r=a$\\
\kwassume\verb| domInit:|   $dom\_$\\
\kwassume\verb| constInit:| $const\_$\\
\kwassume\verb| dynInit:|   $dyn\_$\\
\kwinv\verb| DOM:|   $dom\_$   \\
\kwinv\verb| CONST:| $const\_$\\
\kwinv\verb| DYN:|   $dyn\_~\{$\\
\kwpre\ $\Rightarrow~\mathbb{R}$\\
\kwind\ $\Rightarrow~\{$\\
\verb| |\kwcase\\
\verb|  |$?(vEps\_ > vBound\_ \land r=a) \Rightarrow~\ldots$\\
\verb+  |+$\humod{\rvar}{\rpvar}~\Rightarrow~\ldots$\\
\verb|}}}|\\
\kwfinally\ \kwshow\ $(x=0 \limply |v| \leq m)$\\
\kwusing\ {\tt{DYN}} \kwby\ $auto$\\
\end{minipage}
			&
\begin{minipage}{0.5\textwidth}

\verb|# Example 2b|\\
\kwassume\verb| ra:|        $r=a$\\
\kwassume\verb| domInit:|   $dom\_$\\
\kwassume\verb| constInit:| $const\_$\\
\kwassume\verb| dynInit:|   $dyn\_$\\
\kwinv\verb| J:| $(dom\_ \land const\_) \land dyn\_ \{$\\
\kwpre$~\Rightarrow~ \mathbb{R}$\\
\kwind$~\Rightarrow~\{$\\
\verb| |\kwcase$~(dom\_ \land const\_)~\Rightarrow~$\\
\verb|# Forget everything about ctrl|\\
\verb|  |\kwmid\ $\{$\verb| |\kwshow\ $\_$\ \kwby\ $\mathbb{R}~\}$\\
\verb|   |\kwfirst\ $I$:$~true$\\
\verb|   |\kwthen\ $\{$\verb| |\kwshow\verb| |$\_$ \kwby{} $auto~\}$\\
\verb| |\kwcase\ $~dyn\_~\Rightarrow~\{$\\
\verb|  |\kwcase\\
\verb+   |+$?(\_)~\Rightarrow~$\\
\verb|    |\kwassume\verb| slowEnough:|\\
\verb|       |$(vEps\_ > vBound\_ \land r=a)$\\
\verb+   |+$\humod{\rvar}{\rpvar}~\Rightarrow~\ldots$\\
$\}\}\}$\\
\kwshow\verb| |$(x=0 \limply v \leq |m|)$\\
\kwusing\verb| |{\tt{J}}\verb| |\kwby\verb| |$auto$\\
\verb||\\
\verb|#Example 2d|\\
\kwassume\verb| ra:|        $r=a$\\
\kwassume\verb| domInit:|   $dom\_$\\
\kwassume\verb| constInit:| $const\_$\\
\kwassume\verb| dynInit:|   $dyn\_$\\
\kwassume\verb| top:|       $(x \leq xmax)$\\
\kwstate\verb| init|\\
\kwinv\verb| DOM:|   $dom\_$\\
\kwinv\verb| CONST:| $const\_$\\
\kwinv\verb| DYN:|   $dyn\_~\{$\\
\kwpre$~\Rightarrow~\mathbb{R}$\\
\kwind$~\Rightarrow~\{$\\
\verb| |\kwcase\\
\verb+   +$~?(vEps\_ > vBound\_ \land r=a)~\Rightarrow~\ldots$\\
\verb+  |+$~\humod{\rvar}{\rpvar}~\Rightarrow~\ldots$\\
$\}\}\}$\\
\kwinv\verb| VDECR:|$(v \leq {\it{init}}(v))~\{~\ldots~\}$\\
\kwinv\verb| X:|$(x \leq {\it{init}}(x))~\{~\ldots~\}$\\
\kwfinally\ \kwshow\ $(x \leq xmax)$\\
\kwusing\ {\tt{VDECR}}\ {\tt{X}}\ \kwby{} $auto$\\

\end{minipage}
\end{tabular}
\caption{Kaisar Proofs of Skydiver Discrete Fragment}
\label{fig:ddk}
\end{figure}
\paragraph{Structured Symbolic Execution}
Symbolic execution is implemented by adding to the class of structured proofs (\sproof) a set of box rules (\brule{}s) for proving formulas of the form $\dbox{\alpha}{\phi}$ and a set of diamond rules (\drule{}) for proving formulas of form $\ddiamond{\alpha}{\phi}$.
The diamond rules are largely symmetric to the box rules, so we only present the box rules here and give the full list of rules in Appendix~\ref{app:diamond}:
\begin{align*} 
\brule ::=~&\bassert{x}{\pat_\phi}{\sproof}\ |\ \bassign{x}{\pat_\theta}{\sproof}\ |\ \bassignany{x}{\sproof}\\
        |\ &\bcase{\pat_\alpha}{\sproof}{\pat_\beta}{\sproof}\ |\ \bcon{\phi}{\sproof}{\sproof}
\end{align*}
To improve concision, many proof languages automate steps deemed obvious~\cite{DBLP:journals/jar/Rudnicki87,DBLP:conf/ijcai/Davis81}.
For us, these include the rules for the $\alpha;\beta$ and $?(\phi)$ connectives, i.e. reducing sequential compositions $[\alpha;\beta]\phi$ to nested modalities $[\alpha][\beta]\phi$ and assertions $[?P]Q$ to implications $P\limply{Q}$. 
Negations are implicitly pushed inside other connectives, e.g. $\neg(P\land{Q})\bimply(\neg{P})\lor(\neg{Q})$ and $\neg\dbox{\alpha}{\phi}\bimply\ddiamond{\alpha}{\neg\phi}$.
These implicit rules reduce verbosity by automating obvious steps.
This also enables us, for example, to reuse the $(\bassert{x}{\pat}{\sproof})$ rule for implication as if it applied to tests as well, as in Example 2a.

Because structured symbolic execution rules affect the trace, we now \emph{extend the $\sproof$ checking judgment} to $\tpchk{\G}{H_1}{\sproof}{\Delta}{H_2}$ where $H_1$ is the initial trace and $H_2$ is the final trace.
The final trace helps reference the internal states of one subproof within another: see the \kwmid{} rule.
\[\infer{\tpchk{\G}{H}{\bassert{x}{\pat}{\sproof}}{[?\psi]\phi,\Delta}{H_\phi}} 
{\mmatch{\G}{\pat}{\psi}{\Gamma_\psi} & \tpchk{\ext{\Gamma_\psi}{x}{\psi}}{H}{\sproof}{\phi,\Delta}{H_\phi}}\]
The assignment rule itself is completely transparent to the user, but its presence as an explicit rule aids readability and supports the implementation of nominals.
As discussed in Section~\ref{sec:trace}, assignments update the trace because they modify the state.
How they update the trace depends on whether we can perform assignment by substitution or whether we must add an equality to $\G$:
\begin{center}
  \begin{tabular}{cc}
\infer[\text{if}~\phi_{x}^{\eeag{\G}{\theta}}~\text{admissible}]{\tpchk{\G}{H}{\bassign{x}{\theta}{\sproof}}{[x:=\eeag{\G}{\theta}]\phi,\Delta}{H_\phi}}
      {\tpchk{\G}{\hhsub{H}{x}{\eeag{\G}{\theta}}}{\sproof}{\subst[\phi]{x}{\eeag{\G}{\theta}},\Delta}{H_\phi}}
\\[0.1in]
\infer[\text{if}~x_i~\text{fresh}]{\tpchk{\G}{H}{\bassign{x}{\theta}{\sproof}}{[x:=\eeag{\G}{\theta}]\phi,\Delta}{H_\phi}}
      {\tpchk{\subst[\G]{x}{x_i},x=\eeag{\G}{\theta}}{\hheq{H}{x}{x_i}{\eeag{\G}{\theta}}}{\sproof}{\phi,\subst[\Delta]{x}{x_i}}{H_\phi}}
  \end{tabular}
\end{center}
Nondeterministic assignment is analogous to the equality case of assignment:
\[\infer[\text{if}~x_i~\text{fresh}]{\tpchk{\G}{H}{\bassignany{x}{\sproof}}{[\prandom{x}]\phi,\Delta}{H_1}}
      {\tpchk{\subst[\G]{x}{x_i}}{\hhany{H}{x}{x_i}}{\sproof}{\phi,\subst[\Delta]{x}{x_i}}{H_1}}\]
Nondeterministic choices are proven by proving both branches, matched by patterns $p$ and $q$:
\[\infer{\tpchk{\G}{H}{\bcase{p}{\sproof_\alpha}{q}{\sproof_\beta}}{[\alpha\cup\beta]\phi,\Delta}{H}}
{
\deduce{\tpchk{\G_\alpha}{H}{\sproof_\alpha}{[\alpha]\phi,\Delta}{H_\alpha} \hskip 0.1in \tpchk{\G_\beta}{H}{\sproof_\beta}{[\beta]\phi,\Delta}{H_\beta}}
{\mmatch{\G}{p}{\alpha}{\G_\alpha} &
 \mmatch{\G}{q} {\beta}{\G_\beta}}}\]
This \kwcase{} rule is notable because it produces \emph{non-exhaustive} final traces.
In general, an execution of $\alpha\cup\beta$ executes $\alpha$ or $\beta$, but not both.
We return the input trace $H$ because the final trace only contains changes which (are syntactically obvious to) occur in every branch.
This means any states introduced in $\sproof_\alpha$ or $\sproof_\beta$ have local scope and cannot be accessed externally.
\begin{definition}[Abstraction]
When executing certain programs $\alpha$, it is not known exactly which variables are bound on a given run of $\alpha$.
In these cases, we can reason by abstraction over all bound variables $\boundvars{\alpha}$: we treat their final values as arbitrary.
Abstraction is denoted with superscripts $\arb{\phi}{\alpha},$ not to be confused with subscripts, which are mnemonic.
Let $\boundvars{\alpha} = x_1,\ldots,x_n$ and $y_1,\ldots,y_n$ fresh ghost variables.
We define $\arb{H}{\alpha}=H,\hrany{x_1}{y_1},\ldots\hrany{x_n}{y_n}$, 
$\arb{\phi}{\alpha}=\subst[\phi]{x_1}{y_1}\cdots{}\subst[{~}]{x_n}{y_n}$, and 
$\arb{\omega}{\alpha}=\eta,\subst[\omega]{x_1}{\omega(y_1)}\cdots{}\subst[~]{y_n}{\omega(x_n))}$.
In Section~\ref{sec:metatheory} we show soundness and nominalization results for abstraction.
\end{definition}
As shown in Example 2a, using \kwcase{} too soon increases the complexity of a proof: in a proof of $\{\alpha\cup\beta\};\gamma,$ the proof of $\gamma$ may be duplicated.
Example 2b reduces proof size with Hoare-style~\cite{Hoare:1969:ABC:363235.363259} composition by specifying an intermediate condition $\psi$ which holds between $\{\alpha\cup\beta\}$ and $\gamma$.
\[\infer{\tpchk{\G}{H}{\bcon{\psi}{\sproof_\psi}{\sproof_\phi}}{[\alpha][\beta]\phi,\Delta}{H_\phi}}
{\tpchk{\G}{H}{\sproof_\psi}{[\alpha]\eag{\psi},\Delta}{H_\psi} \hskip 0.1in \tpchk{\arb{\G}{\alpha},\eag{\psi}}{\arb{H_\psi}{\alpha}}{\sproof_\phi}{[\beta]\phi,\arb{\Delta}{\alpha}}{H_\phi}}
\]
Hoare composition is notable because $H_\psi$ contains only changes that happened with certainty: the bound variables of $\alpha$ may have been modified in ways not reflected by $H_\alpha$.
Thus we treat the values of bound variables after running $\alpha$ as arbitrary, abstracting over them.

The \kwstate{} construct gives a name to the current program state. 
This has no effect on the proof state, but allows that state to be referenced later on by nominal terms, as shown in Example 2c:
\[\infer{\tpchk{\G}{H} {\sstate{t}{\sproof}}{\phi,\Delta}{H_\phi}}
  {\tpchk{\G}{\hhtime{H}{t}}{\sproof}{\phi,\Delta}{H_\phi}}\]
\paragraph{Invariant Proofs}
We verify  discrete loops via invariants.
Consider the proofs in Figure~\ref{fig:ddk}.
In Examples 2a and 2b we prove a single loop invariant, where the base case proves automatically, as is often the case.
In Example 2c we subdivide the proof into several invariants which we prove successively.
These styles of proof are interchangeable, but the latter is convenient during proof development to separate simple cases from difficult cases.
\newcommand{\invlist}{\ensuremath{J}{\rm{s}}}
If (as in Example 2c) an invariant is provable automatically, we may omit the branches \kwpre{} and \kwind{}.
After  proving invariants, the \kwfinally{} keyword returns us to a standard structured proof with all invariants available as assumptions.
\begin{align*} 
\iproof ::=&  \ \kwinv{}~x:\phi~\{\kwpre{}\Rightarrow\sproof~|~\kwind{}\Rightarrow\sproof\}~{\tt \iproof}\ |\ \finally{\sproof}
\end{align*}

While checking invariant proofs, we add a context $\invlist{}$ of all the invariants, which are made available both while proving further invariants and at the end of the invariant chain.
As in Hoare composition, we abstract over the history because the inductive step must work after any number of iterations.
\[\infer{\tpchk{\G,[\alpha^*]\invlist{}}{H}{\finally{\sproof}}{[\alpha^*]\phi,\Delta}{H_\phi}}
 {\tpchk{\arb{\G}{\alpha},\invlist{}}{\arb{H}{\alpha}}{\sproof}{\phi,\arb{\Delta}{\alpha}}{H_\phi} 
}
\]
\[\infer{\tpchk{\G,[\alpha^*]\invlist{}}{H}{\sinv{x}{\psi}{\sproof_{Pre}}{\sproof_{Inv}}{\iproof}}{[\alpha^*]\phi,\Delta}{H}}
{\deduce     {\tpchk{\Gamma,\invlist{}}{H}{\sproof_{Pre}}{\eag{\psi},\Delta}{H_{Pre}} 
 \hskip 0.1in  \tpchk{\arb{\G}{\alpha},\invlist{},\eag{\psi}}{\arb{H}{\alpha}}{\sproof_{Inv}}{[\alpha]\eag{\psi},\arb{\Delta}{\alpha}}{H_{Inv}}}
{\tpchk{\Gamma,[\alpha^*]\invlist{},x\tinycolon{}[\alpha^*]\eag{\psi}}{H}{\iproof}{[\alpha^*]\phi,\Delta}{H_{\it{Tail}}}}}\]
\paragraph{Focus}
The box rules presented here implicitly operate on the first formula of the succedent.
In the common case of proving a safety theorem $\phi\limply\dbox{\alpha}{\psi}$ where all tests $?(\phi)$ contain only first-order arithmetic, this is enough.
Hewever, this does not provide completeness for liveness properties $\ddiamond{\alpha}{\phi}$ which produce multi-formula succedents, or for tests containing modalities.
We restore completeness for these cases, extending the class \sproof{} with a \emph{focus} construct which brings an arbitrary formula (selected by pattern-matching) to the first succedent position:
\newcommand{\sfocus}[2]{\kwfocus{}~#1~#2}
A \kwfocus{} in the antecedent is the inverse of $\neg{R}$; in the succedent it is the exchange rule:
\begin{center}
\begin{tabular}{cc}
  \infer{\tpchk{\G}{H}{\sfocus{\pat}{\sproof}}{\Delta_1,\phi,\Delta_2}{H}}
        {\emmatch{\G}{\pat}{\phi} & \tpchk{\G}{H}{\sproof}{\phi,\Delta_1,\Delta_2}{H_{\phi}}} 
& \infer{\tpchk{\G_1,\phi,\G_2}{H}{\sfocus{\pat}{\sproof}}{\Delta}{H}}
{\emmatch{\G}{\pat}{\phi} & \tpchk{\G_1,\G_2}{H}{\sproof}{\neg{\phi},\Delta}{H_{\neg{\phi}}}}
\end{tabular}
\end{center}
Recall that negations are pushed inward implicitly, so upon \kwfocus{}ing a formula $\dbox{\alpha}{\phi}$ from the antecedent, we will ultimately have $\ddiamond{\alpha}{\neg{\phi}}$ in the succedent, for example.
As with \kwcase{}, the subproof $\sproof$ can access both the initial trace $H$ and any local changes from the proof of $\neg\phi$, but any such changes leave scope here.
Regardless of the origin of $\neg\phi,$ any structured symbolic execution proof can employ state-based reasoning, but as with \kwcase{} it does not follow that those state changes remain meaningful in any broader context.

In Section~\ref{sec:metatheory} we show that \kwfocus{}, combined with the execution rules for boxes and diamonds in the succedent, provides completeness.
This formulation minimizes the core proof calculus, but \kwfocus{}-based derived rules for antecedent execution may be useful in practice.
The completeness proof of Section~\ref{sec:metatheory} provides intuition for how such constructs would be derived.

\paragraph{Extended Expressions and Patterns}
Discrete Dynamic Kaisar adds nominal terms $t(\theta)$ (where $t$ is the name of some named state) to the language of expressions $e$.
This change raises a design question: when defining an abbreviation, should program variables refer to their bind-time values, or their expand-time values?
We choose bind-time evaluation as the default, so  $\eag{e}$ evaluates nominals, performing structural recursion and using the rules of Section~\ref{sec:trace} in the variable case. 
When a variable $x$ appears outside a nominal, it is interpreted at the current state, which we denote here using the notation $now(x)$. 
For pattern-matching to work with nominals, matching against (program) variable patterns performs expansion before matching:
\newcommand{\labs}[2]{\mathit{mob}(#1,#2)}
\newcommand{\llabs}[3]{\mathit{mob}_{#1}(#2,#3)}
\newcommand{\labseq}[3]{\labs{#1}{#2}=#3}
\newcommand{\llabseq}[4]{\llabs{#1}{#2}{#3}=#4}
\begin{center}
  {\small
\begin{tabular}{c@{\hskip 0.1in}c@{\hskip 0.1in}c@{\hskip 0.1in}c}
$t(\otimes(\termone,\termtwo)) = \otimes(t(\termone),t(\termtwo))$
&$t(q) = q~(\text{for}\ q\in\mathbb{Q})$
&$\eag{x} = \longeag{now(x)}$
&$
        \ematch{x}{\longeag{now(x)}}{}$
\end{tabular}}
\end{center}
We also want the option to mix bind-time and expand-time reference, for example in Section~\ref{sec:dlk}.
This enables reusable definitions that still refer to fixed past values.
We support this with a new functional variant of the \kwlet{} construct, which is parameterized by a state $t$.
Any subterm $\theta$ under the nominal $t(\_)$ uses the expand-time state, while plain subterms use the bind-time state:
\[\infer{\tpchk{\G}{H}{\slet{t({\tt{x\_}})}{e}{\sproof}}{\phi,\Delta}{H_\sproof}}
  {\tpchk{\ext{\Gamma}{t({\tt{x\_}})}{\llabs{\cemp{}}{e}{t_{now}}}}{H,t_{now}}{\sproof}{\phi,\Delta}{H_\sproof}}\]
 Functional let uses a \emph{let mobilization} helper judgment, which (a) expands references to current-state variables and (b) wraps all references to arguments in the $now(x)$ pseudo-nominal:
\[\llabseq{H}{x}{t}{\hnow{H}{x}} ~~~\ \ \ \llabseq{H}{t(\theta)}{t}{now(\theta)} ~~~\ \ \  
\llabseq{H}{\otimes(\termone,\termtwo)}{t}{\otimes(\llabs{H}{\termone}{t},\llabs{H}{\termtwo}{t})}\]
Note that due to the addition of functional let, the context $\G$ may now contain extended terms.
As before, any unadorned variable $x$ in $\G$ refers to the current sequent-level meaning of $x$.
Elements of $\G$ can contain extended subterms $now(\theta)$ (for proper terms $\theta$), which are adequately resolved by recursively evaluating any extended terms found during expansion:
\[\longeeag{\G}{\ident}=\eeag{\G}{e}\ \ \text{if}\ \ \G(\ident) = e\]

\section{Differential Dynamic Kaisar}
\label{sec:dlk}
We extend Kaisar to support differential equations, the defining feature of differential dynamic logic.
The examples of this section illustrate the necessity of historical reference to both initial \emph{and} intermediate states.
We show that nominals work even when mixing discrete and continuous invariants, continuous ghosts necessary for ODEs and  first-order reasoning necessary for arithmetic.  
\paragraph{Examples}
First we complete the proof of Model~\ref{model:dive}.
Recall the statement of Theorem~\ref{thm:safe}:
{\small\begin{align*}
  \mathit{Pre}   &\equiv \left(\mathit{dc} \land \mathit{const}\right) \land \mathit{dyn}                     \hskip 0.5in  \mathit{dc}    \equiv x \ge 0 \land v < 0    &\mathit{const}  &\equiv g > 0 \land 0 < \ravar < \rpvar \land \Tvar \ge 0\\
  \mathit{ctrl}  &\equiv  (?\Big(\rvar=\ravar \land v - g\cdot \Tvar > -\sqrt{\frac{g}{\rpvar}}\Big) \cup \humod{r}{\rpvar})  &\mathit{dyn}   &\equiv \abs{v} < \sqrt{\frac{g}{\rpvar}} < m \\
  \mathit{plant} &\equiv \humod{t}{0};~ \{x'=v,~ v'=r\cdot v^2 - g ~\&~ x \ge 0 \land v<0 \land t \le \Tvar\}&&&
\end{align*}}
\begin{theorem}[Skydiver Safety]$\rvar = \ravar \land{dc}\land{const}\land{dyn}\limply[(\mathit{ctrl};\mathit{plant})^*](x=0\limply v \leq m)$\ valid\end{theorem}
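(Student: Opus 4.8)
The plan is to mechanize in Kaisar the hierarchical natural-language proof of Section~\ref{sec:natlang-proof}, now supplying the differential-equation reasoning that was deferred in Section~\ref{sec:ddk}. I would first name the initial proof state, then apply loop induction to $(\mathit{ctrl};\mathit{plant})^*$ using the multi-invariant form of \kwinv{} with the three invariants $\mathit{dc}$, $\mathit{const}$, and $\mathit{dyn}$, so the trivial invariants do not clutter the hard one (as in Example~2c of Figure~\ref{fig:ddk}). For each, the \kwpre{} branch closes by $\mathbb{R}$ from the assumed precondition. For the \kwind{} branches: $\mathit{const}$ is invariant vacuously, since its free variables are never written by $\mathit{ctrl};\mathit{plant}$; and $\mathit{dc} \equiv x \ge 0 \land v < 0$ is invariant because the ODE's evolution domain literally re-asserts it, so after symbolically executing through $\mathit{ctrl}$ and $\humod{t}{0}$, $\mathit{dc}$ holds at the end of the ODE as it is implied by the domain.

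The substance of the proof is the \kwind{} branch for $\mathit{dyn} \equiv \abs{v} < \sqrt{\frac{g}{\rpvar}} < m$, which I would split with \kwcase{} on the two branches of $\mathit{ctrl}$; in each I symbolically execute $\humod{t}{0}$ and enter the ODE. In the ``stay closed'' branch, whose test supplies $\rvar = \ravar$ and (with $\mathit{const}$) $\ravar > 0$, I would name the pre-ODE state $\mathit{loop}$ and establish the auxiliary differential invariant $\abs{v} < \mathit{loop}(\abs{v}) + g\cdot t$ --- this is exactly where the nominal term $\mathit{loop}(v)$ earns its keep, since the bound must refer to the velocity at loop entry. Its inductive step reduces to $\der{v} = \rvar\cdot v^2 - g \ge -g = \der{\mathit{loop}(v) - g\cdot t}$, pure arithmetic given $\rvar = \ravar$ and $\ravar > 0$. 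Chaining this with $\mathit{loop}(\abs{v}) + g\cdot t \le \mathit{loop}(\abs{v}) + g\cdot\Tvar$ (from the domain's $t \le \Tvar$) and $\mathit{loop}(\abs{v}) + g\cdot\Tvar < \sqrt{\frac{g}{\rpvar}}$ (from the test) by a transitivity lemma of exactly the shape proved in Figure~\ref{fig:fok-ex} yields $\abs{v} < \sqrt{\frac{g}{\rpvar}}$, and $\sqrt{\frac{g}{\rpvar}} < m$ carries over from the invariant.

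The hard part is the ``open the chute'' branch $\humod{\rvar}{\rpvar}$: here $\abs{v} < \sqrt{\frac{g}{\rpvar}}$ is invariant but not differentially inductive, because $v$ only converges asymptotically to the equilibrium $-\sqrt{\frac{g}{\rpvar}}$, so $\der{\abs{v}}$ need not dominate $\left(-\sqrt{\frac{g}{\rpvar}}\right)'$. Following Step~\stepref{label-4.3}, I would use the \kwghost{} construct to augment the ODE with a fresh $y$ whose right-hand side is linear in $y$ --- so the ODE's duration is unaffected and soundness is preserved --- chosen so that the asymptotic-convergence certificate $y^2\cdot\left(\sqrt{\frac{g}{\rpvar}} - \abs{v}\right) = 1$ becomes differentially inductive: its base case holds by the choice of $y$'s initial value, and its inductive step by construction of $y'$ and arithmetic, using that $\abs{v}$ coincides with $-v$ under the domain constraint $v < 0$. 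The certificate is arithmetically equivalent to $\abs{v} < \sqrt{\frac{g}{\rpvar}}$, restoring $\mathit{dyn}$.

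Once all three invariants are proved, \kwfinally{} returns to an ordinary structured proof with $\mathit{dc}, \mathit{const}, \mathit{dyn}$ available, and the postcondition $x = 0 \limply v \le m$ follows immediately by $\mathbb{R}$ from $\mathit{dyn}$, since $\abs{v} < \sqrt{\frac{g}{\rpvar}} < m$ already gives $v < m$. I expect the main obstacles to be guessing the differential ghost and its certificate in the open-chute case together with the attendant arithmetic, and --- of more interest to this paper --- making the nominal references to $\mathit{init}$ and $\mathit{loop}$ line up correctly across the control cases, the assignment to $t$, and the ODE, which is precisely the bookkeeping that Kaisar's static execution trace is designed to automate.
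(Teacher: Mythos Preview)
Your proposal is correct and follows essentially the same approach as the paper's proof (Example~3 in Figure~\ref{fig:dlk-para}): loop induction with the easy invariants separated from $\mathit{dyn}$, a \kwcase{} split on the controller, a nominal-based differential invariant $\abs{v} \leq \mathit{loop}(\abs{v}) + g\cdot t$ chained via the transitivity lemma of Figure~\ref{fig:fok-ex} in the closed-chute branch, and a differential ghost $y$ with the certificate $y^2\cdot(\sqrt{g/\rpvar}-\abs{v})=1$ in the open-chute branch. The only cosmetic differences are that the paper bundles $\mathit{dc}$ and $\mathit{const}$ into a single invariant \texttt{DCCONST} rather than keeping all three separate, and it names the state $\mathit{loop}$ once before the \kwcase{} split rather than inside the closed branch.
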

The proof mirrors the natural-language proof of Section~\ref{sec:natlang-proof} and builds upon Examples 1 and 2.
Recall that we use \emph{differential invariants} to reason about the drag equation, because it does not have a closed-form solution in decidable real arithmetic.
In the open-parachute case, recall that while {\tt dyn} is invariant ($\abs{v}$ never reaches the bound $\sqrt{\frac{g}{\rpvar}}$) it is not inductive because it approaches the bound asymptotically.
Adding a \emph{differential ghost}~\cite{DBLP:journals/corr/abs-1104-1987} variable $y$, makes it possible to write an equivalent invariant that holds inductively.
An equivalent invariant can be derived mechanically: in this case $y^2\cdot\left(v+\sqrt{\frac{g}{\rpvar}}\right)=1$ which implies $\abs{v} <\sqrt{\frac{g}{\rpvar}}$.
\begin{figure}[h]
  \centering
      \begin{minipage}{0.9\textwidth}
\noindent\verb|#Example 3|\\
\kwassume\verb| assms:| $r=\ravar \land dc\_ \land const\_ \land dyn\_$\\
\kwinv\verb| DCCONST:| $dc\_ \land const\_$\\
\kwinv\verb| DYN:|     $dyn\_~\{$\\
\verb| |\kwind$~\Rightarrow~\{$\\
\verb|  |\kwstate\verb| loop|\\
\verb|  |\kwcase$~?(\_)~\Rightarrow$\\
\verb|    |\kwassume\ $vEps > vBound\_ \land r=ar$\\
\verb|    |\kwind\verb| pr:|       $g>0 \land \rpvar>0$\\
\verb|    |\kwind\verb| vBig:|     $\abs{v} \leq {\it{loop}}(\abs{v}) + g\cdot{t}$\\
\verb|    |\kwind\verb| vLoopBig:| ${\it{loop}}(\abs{v}) + g\cdot{\Tvar{}} < \sqrt{\frac{g}{\rpvar}}$\\
\verb|    |\kwfinally\ \kwhave\verb| tBound:| ${\it{loop}}(\abs{v}) + g\cdot{t} \leq {\it{loop}}(\abs{v}) + g\cdot\Tvar{}$\\
\verb|     |\kwusing\ $const$\ \kwby\ $\mathbb{R}$\\
\verb|    |\kwhave\verb| trans:|\\
\verb|       |$\forall w x y z~(w\leq{x}\limply{x}\leq{y}\limply{y<z}\limply{w<z})$ \kwby{} $\mathbb{R}$\\
\verb|    |\kwnote\verb| res = trans|$\ \ v\ \ vt\_\ \ v\Tvar{}\_\ \ vBound\_\ \ {\tt{v}}\ \ {\tt{gt}}\ \ {\tt{gEps}}$\\
\verb|    |\kwshow{} $\_$ \kwusing{} {\tt{res}} \kwby{} $id$\\
\verb+  |+$\humod{r}{\rpvar}~\Rightarrow$\\
\verb|    |\kwassign$~\humod{r}{\rpvar}$\\
\verb|    |\kwinv\verb| consts:|$\rpvar>0\land{g>0}$\\
\verb|    |\kwlet\ $bound\_$ \verb|=| $-\sqrt{\frac{g}{\rpvar}}$\\
\verb|    |\kwghost{} $y=0, y'= -\frac{1}{2}\cdot{p}\cdot(v+bound\_)$\\
\verb|    |\kwinv\verb| ghostInv:| $y^2\cdot(v+bound\_)=1$\\
\verb|    |\kwfinally\ \kwshow\ $\_$ \kwusing{} \verb|ghostInv| \kwby{} $\mathbb{R}$\\
$\}\}\}$\ \kwfinally\ \kwshow\ $(x=0 \limply \abs{v} < m)$ \kwusing{} \verb|DCCONST DYN| \kwby{} $auto$\\
\end{minipage}
  \caption{Kaisar Proof of Skydiver Safety}
\label{fig:dlk-para}
\end{figure}

\newcommand{\velvar}{v}
Having finished the proof of Model~\ref{model:dive}, we consider a second example system that \emph{does} have solvable continuous dynamics, in which case ODEs can be symbolically executed directly without appealing to differential invariants.
Consider a one-dimensional model of a bouncing ball, with vertical position $y$, vertical velocity $\velvar$, acceleration due to gravity $g$ and initial height $H$.
This perfectly-elastic bouncing ball discretely inverts its velocity whenever it hits the ground ($y = 0$).
Because it started with $\velvar = 0$, we will prove that it never exceeds the initial height.
At the same time, we prove that it never goes through the floor ($y \geq 0$):
\begin{model}[Safety specification for bouncing ball]\label{model:ball}
\begin{align*}
  g>0~\land~&H>0\land y\leq H\land  \velvar=0 \limply\\
[&\{\{ ?(y>0\vee \velvar\geq 0)  \cup  \{?(y\leq 0\wedge \velvar < 0); \velvar := -\velvar\}\}\\
 &\ \ \ \ \{y'=\velvar,\velvar'=-g \& y\geq 0\}\\
&\}^*](0 \leq y \wedge y \leq H)
\end{align*}
\end{model}
The proof in Figure~\ref{fig:dlk} follows physical intuitions: total energy ($E\_$) is conserved, from which we show arithmetically that the height bound always holds.
\begin{figure}[h]
  \centering
\begin{minipage}{0.6\textwidth}
\verb|# Example 4|\\
\kwassume\verb| assms:|  $g>0 \land y\geq{H} \land H>0 \land v=0$\\
\kwlet\ $t(E\_)$ \verb|=| $t(\frac{v^2}{2} + H)$\\
\kwstate\verb| init|\\
\kwinv\verb| J:| $y \geq 0 \land E\_ = {\tt{init}}(E\_)~\{$\\
\verb|  |\kwind$~\Rightarrow$\\
\verb|   |\kwstate\verb| loop-init|\\
\verb|   |\kwmid\ $\{$\verb| |\kwshow\ $[\_~\cup~\_]\_$ \kwby\ $auto~\}$\\
\verb|     |\kwfirst\ $I:~E\_ = {\textit{loop-init}}(E\_)$\\
\verb|     |\kwthen$~\{$\\
\verb|       |\kwsolve\ $(\_ \land dom\_)$ \verb|t:| $t\geq{0}$ \verb|dom:|~$dom\_$\\
\verb|       |\kwshow\ $\_$ \kwby\ $auto~\}\}$\\
\kwshow\ $\_$  \kwusing\ \verb|J assms| \kwby\ $auto$\\
\end{minipage}  
  \caption{Kaisar Proof of Bouncing Ball Safety}
  \label{fig:dlk}
\end{figure}
Because this example has a solvable ODE, it suffices to add a construct for solving ODEs (below, {\tt dom} is short for domain constraint):
\[\infer[y_0=x,~y'=\theta(y)]{\tpchk{\G}{H}{\bsolve{\pat\ }{\pat_{t}\ }{\pat_{dom}\ }{\sproof}}{[\{x'=\theta~\&~Q\}]\phi,\Delta}{H_\phi}}
{\deduce
{\tpchk{\ext{\ext{\Gamma_Q}{dom}{\left(\forall{s}\in[0,t]~Q(s)\right)}}{t}{\left(t\geq 0\right)}}{\hhsub{H}{x}{y(t)}}{\sproof}{\subst[\phi]{x}{y(t)},\Delta}{H_\phi}}
{
 \match{\pat}{\{x'=\theta~\&~Q\}}{\Gamma_{x}}
&\mmatch{\G_{x}}{\pat_t}{t\geq{0}}{\Gamma_{t}}
&\mmatch{\G_{t}}{\pat_{dom}}{Q}{\G_Q}
} }\]
Lastly, consider the proof in Example 3.
We reason about unsolvable ODEs using \emph{differential invariants}~\cite{DBLP:journals/jar/Platzer16,DBLP:journals/corr/abs-1206-3357,DBLP:journals/logcom/Platzer10} and \emph{differential ghosts}~\cite{DBLP:journals/corr/abs-1104-1987},
which we add to the syntax of invariant proofs:
\begin{align*}
  \iproof ::= & \cdots\ |\ \sghost{y}{\theta_1}{\theta_2}{\iproof}
\end{align*}
Unlike in loops, it is essential for soundness that we do \emph{not} assume the current invariant (only previous invariants) while proving it.
\emph{Differential invariant}~\cite{DBLP:conf/tableaux/Platzer07} reasoning uses the \emph{differential} of a formula $(\phi)'$ to compute its Lie derivative, and then proves it to be inductive.
Traces are general enough to support loops and differental equations uniformly.
Because the differential equation $\alpha$ will modify $x$, we abstract over the variables of $\alpha\equiv{\{x'=\theta~\&~Q\}}$ (i.e. $x$):
\[\infer{\tpchk{\G,[\alpha]\invlist{}}{H}{\sinv{x}{\psi}{\sproof_{Pre}}{\sproof_{Inv}}{\iproof}}{[\alpha]\phi,\Delta}{H_{x'}}}
   {\deduce{\tpchk{\Gamma,\ext{[\alpha]\invlist{}}{x}{[\alpha]\eag{\psi}}}{H}{\iproof}{[\{x'=\theta~\&~Q\}]\phi,\Delta}{H_{x'}}}
{\tpchk{\Gamma,\invlist{}}{H}{\sproof_{Pre}}{\eag{\psi},\Delta}{H_{Pre}}
   \hskip 0.1in\tpchk{\arb{\G}{\alpha},\invlist{},Q}{\arb{H}{\alpha}}{\sproof_{Inv}}{[\humod{x'}{\theta}](\eag{\psi})',\arb{\Delta}{\alpha}}{H_{Inv}}
}}\]
\[\infer{\tpchk{\G,[\alpha]\invlist{}}{H}{\finally{\sproof}}{[\alpha]\phi,\Delta}{H_{x'}}}
        {\tpchk{\arb{\G}{\alpha},\invlist{},Q}{\arb{H}{\alpha}}{\sproof}{\phi,\arb{\Delta}{\alpha}}{H_{x'}}}\]
        \[\infer[y\ \text{fresh}]{\tpchk{\G,\invlist{}}{H}{\sghost{y}{\theta_{y'}}{\theta_{y}}{\iproof}}{[\{x'=\theta_{x'}~\&~H\}]\phi,\Delta}{H_{x'}}}
 {\tpchk{\G,y=\eag{\theta_{y}}}{\Delta;H}{\iproof}{[\{x'=\eag{\theta_{x'}},y'=\eag{\theta_{y'}}~\&~H\}]\phi,\Delta}{H_{x'}} & \eag{\theta_{y'}}~{\rm linear\ in}~y}
\]
When introducing a new variable $y$, we ensure the right-hand side of $y'$ is linear in $y$ to ensure the existence interval of the ODE does not change, which is essential for soundness~\cite{DBLP:journals/corr/abs-1104-1987}.
\newcommand{\GDHE}{\Gamma;\Delta;H\vdash}
\section{Metatheory}
\label{sec:metatheory}
\newcommand{\interps}[2]{\tinterps{#1}{#2}}
The value of a nominal $t(\theta)$ in the sequent-level state agrees with the value of $\theta$ in the corresponding program state.
We begin here with the simplest case, pseudo-nominals of variables $\hnow{H}{x}$, from which we then derive nominals of variables $\hnom{t}{H}{x}$ and arbitrary nominals $\hnom{t}{H}{\theta}$.

\begin{lemma}
For all $\corr{\eta}{H}$ and all variables $x$, $\interps{\hnow{H}{x}}{\seqstate{\eta}{H}} = \trlst{\eta}(x)$.
\end{lemma}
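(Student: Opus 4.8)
The plan is to proceed by induction on the length of the static trace $H$, mirroring the recursive definition of $\hnow{H}{x}$ and the inductive definition of sequent-level state $\seqstate{\eta}{H}$ from Definition~\ref{def:seq-state}. The correspondence relation $\corr{\eta}{H}$ constrains $\eta$ and $H$ to grow in lockstep, so a structural induction on $H$ (equivalently, on the pair via $\corr{\eta}{H}$) will have exactly the right shape: the base case is $H = \hemp{}$ with $\eta = \tsing{\omega}$, and each inductive case adds one trace record $\mathit{hr}$ to $H$ together with the corresponding state (or state name) to $\eta$.

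For the base case, $\hnow{\hemp{}}{x} = x$ and $\seqstate{\tsing{\omega}}{\hemp{}} = \omega$, so both sides evaluate to $\omega(x) = \trlst{\tsing{\omega}}(x)$, and we are done. For the inductive step I would case-split on the last record of $H$. (i) If it is a state name $t$, then $\hnow{H,t}{x} = \hnow{H}{x}$ and $\seqstate{\eta,t}{H,t} = \seqstate{\eta}{H}$, and $\trlst{\eta,t} = \trlst{\eta}$, so the claim follows directly from the IH. (ii) If it is $\hrsub{x}{\theta}$ (the variable of interest), then $\hnow{H,\hrsub{x}{\theta}}{x} = \theta$, while $\seqstate{\eta,\omega}{\hhsub{H}{x}{\theta}} = \subst[\seqstate{\eta}{H}]{x}{\omega(x)}$; here the key point is that the operational semantics of $[:=]sub$ guarantees that under $\corr{\eta}{H}$ the term $\theta$ was evaluated at the previous state, so $\interps{\theta}{\seqstate{\eta}{H}} = \omega(x) = \trlst{\eta,\omega}(x)$, and the substitution of $x$ by $\omega(x)$ does not affect the freshly-evaluated $\theta$ — this is where the admissibility side-condition of $[:=]sub$ is used. (iii) If it is $\hreq{x}{x_i}{\theta}$ or $\hrany{x}{x_i}$ for the variable of interest, then $\hnow{\cdot}{x} = x$ and the new sequent-level state sets $x \mapsto \omega(x)$ via the $\arb{\;}{}{\omega(x)}$ annotation, so both sides give $\omega(x) = \trlst{\eta,\omega}(x)$. (iv) If the last record concerns a variable $y \neq x$ (any of the four kinds), then $\hnow{\cdot}{x}$ recurses to $\hnow{H}{x}$, and we must check that $\seqstate{\eta,\omega}{H,\mathit{hr}}(x) = \seqstate{\eta}{H}(x)$: for $\hrsub{y}{\theta}$ the substitution $\subst[\cdot]{y}{\omega(y)}$ leaves $x$ untouched; for $\hreq{y}{y_i}{\theta}$ and $\hrany{y}{y_i}$ the substitution renames $y$ to the fresh $y_i$ and then reassigns $y$, but since $y \neq x$ and $y_i$ is fresh, the value assigned to $x$ is unchanged, so the IH applies. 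One also needs that $\trlst{\eta,\omega}(x) = \trlst{\eta}(x)$ in case (iv), which holds because only variable $y$ is modified in the last step — more precisely, this is part of what $\corr{\eta}{H}$ records, that $\omega$ differs from the previous state only in $y$.

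I expect the main obstacle to be case (ii) (the $\hrsub{}{}$ case), because it is the only place where the subtle mismatch between sequent-level and program-level meaning — the whole motivation of Section~\ref{sec:trace} — is resolved: one must carefully unwind what $\corr{\eta}{H}$ asserts about how a $[:=]sub$ step relates the dynamic state $\omega$ to the term $\theta$ recorded statically, and verify that $\interps{\theta}{\seqstate{\eta}{H}}$ really equals $\omega(x)$ rather than some stale value, using the admissibility condition to rule out capture. The remaining cases are bookkeeping: they amount to checking that substitutions on fresh or unrelated variables commute with evaluation at $x$, which is routine but must be stated precisely enough to invoke the induction hypothesis. Finally, since the lemma is phrased for the variable case only and all \dL terms are built homomorphically from variables (and rationals, which are state-independent), the general statement $\interps{t(\theta)}{\cdot}$ will follow by a trivial structural induction on $\theta$, handled in the subsequent derivations of $\hnom{t}{H}{x}$ and $\hnom{t}{H}{\theta}$.
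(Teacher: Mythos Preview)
Your approach matches the paper's: induction on the derivation of $\corr{\eta}{H}$, case-splitting on the last trace record. The paper explicitly treats only the cases where the record concerns the variable $x$ itself and defers your case (iv) under ``remaining cases are symmetric,'' so your decomposition is, if anything, more explicit. One correction to your case (ii): you invoke the admissibility side-condition of $[:=]sub$, but that condition governs capture in the postcondition $\phi$ (it requires $\freevars{\theta}\cap\boundvars{\phi}=\emptyset$), not whether $x\in\freevars{\theta}$, so it does not by itself justify that substituting $x\mapsto\omega(x)$ leaves the evaluation of $\theta$ unchanged. The paper instead packages the key equality $\interps{\theta}{\seqstate{\eta'}{H'}}=\trlst{\eta}(x)$ directly into the definition of the correspondence relation $\corr{}{}$ and makes no appeal to admissibility at this point.
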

\begin{proof}
Straightforward induction on the derivation of $\corr{\eta}{H}$.

\oldcase{$\corr{\tsing{\nu}}{\hemp{}}$} Then $\interps{\hnow{H}{x}}{\seqstate{\eta}{H}} = \interps{\hnow{\tsing{\nu}}{x}}{\nu}=\nu(x)=\trlst{\eta}{x}$.

\oldcase{$\corr{\eta',\omega}{\hhany{H'}{x}{x_i}}$} Then $\interps{\hnow{H}{x}}{\seqstate{\eta}{H}} =\sval{\eta}{H}{x} =  \omega(x) = \trlst{\eta}(x)$.

\oldcase{$\corr{\eta',\omega}{\hheq{H'}{x}{x_i}{\theta}}$} Then $\interps{\hnow{H}{x}}{\seqstate{\eta}{H}} = \sval{\eta}{H}{x} =  \omega(x) = \trlst{\eta}(x)$.

\oldcase{$\corr{\eta',\omega}{\hhsub{H'}{x}{\theta}}$} Then $\interps{\hnow{H}{x}}{\seqstate{\eta}{H}} = \interps{\theta}{\seqstate{\eta'}{H'}} = \trlst{\eta}(x)$, where the last equation is from the definition of $\corr{}{}$.

\oldcase{$\corr{\eta',t}{H',t}$} Then $\interps{\hnow{H}{x}}{\seqstate{\eta}{H}}=\interps{\hnow{H'}{x}}{\seqstate{\eta'}{H'}}=\trlst{\eta'}(x)=\trlst{\eta}{x}$.
The remaining cases are symmetric.
\end{proof}
This lemma generalizes to arbitrary nominals, but first it will require the generalization of the coincidence theorem for \dL formulas ~\cite{DBLP:conf/cade/Platzer15} to nominals:

\begin{lemma}[Coincidence for ``now'']
\label{lem:coinc-now}
If $\corr{\eta}{H}$ and $y\notin\freevars{H}$ then $\interps{\hnow{H}{x}}{\seqstate{\eta}{H}} = \interps{\hnow{H}{x}}{\subst[\seqstate{\eta}{H}]{y}{r}}$ for all $y\neq{x}$ and $r\in\mathbb{R}$.
\end{lemma}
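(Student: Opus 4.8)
The plan is to reduce the statement to the ordinary coincidence theorem for \dL{} terms by a short syntactic analysis of the pseudo-nominal $\hnow{H}{x}$. The crucial point is that $\hnow{H}{x}$, as defined by recursion over $H$ in Section~\ref{sec:trace}, is an ordinary \dL{} term --- it contains no nominals and no abbreviations --- so \dL{}'s value-coincidence theorem applies to it verbatim once we pin down which variables can occur free in it. The hypothesis $\corr{\eta}{H}$ is used only to ensure $\seqstate{\eta}{H}$ is defined (the shapes of $\eta$ and $H$ must match up); it plays no further role, though one could equivalently phrase the whole argument as an induction on the derivation of $\corr{\eta}{H}$, mirroring the preceding lemma.

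First I would establish the syntactic claim: for every variable $x$ and every static trace $H$, either $\hnow{H}{x} = x$, or $\hnow{H}{x} = \theta$ for some substitution record $\hrsub{x}{\theta}$ occurring in $H$. This is a trivial induction on $H$ read from its tail, directly from the defining equations of $\hnow{\cdot}{\cdot}$: the $\hemp{}$, $\hrany{x}{x_i}$, and $\hreq{x}{x_i}{\theta}$ cases return $x$; the $\hrsub{x}{\theta}$ case returns $\theta$; and every other record leaves $\hnow{H}{x}$ unchanged, so we recurse into the prefix. Combined with the definition of $\freevars{H}$ --- which collects the free variables of all terms recorded in $H$ together with the program and ghost variables named in $H$ --- this gives $\freevars{\hnow{H}{x}} \subseteq \{x\} \cup \freevars{H}$. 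With the two hypotheses $y \neq x$ and $y\notin\freevars{H}$, we obtain $y \notin \freevars{\hnow{H}{x}}$, which is the real content of this ``coincidence for nominals'': the computed pseudo-nominal never mentions any variable outside $H$ and the tracked variable itself.

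Second, the valuations $\seqstate{\eta}{H}$ and $\subst[\seqstate{\eta}{H}]{y}{r}$ agree on every variable except possibly $y$, hence they agree on all of $\freevars{\hnow{H}{x}}$; the coincidence theorem for \dL{} terms~\cite{DBLP:conf/cade/Platzer15} then yields $\interps{\hnow{H}{x}}{\seqstate{\eta}{H}} = \interps{\hnow{H}{x}}{\subst[\seqstate{\eta}{H}]{y}{r}}$, as required. I do not expect a genuine obstacle: the argument is entirely free-variable bookkeeping. The only places needing care are precisely the ones where a sloppy definition of $\freevars{H}$ would break the lemma --- confirming that when the trace records a ghost $x_i$ (via $\hreq{x}{x_i}{\theta}$ or $\hrany{x}{x_i}$) the value $\hnow{H}{x}$ stays $x$ and does not pick up $x_i$, and that any substituted term $\theta$ returned by the recursion lies wholly within $\freevars{H}$ --- both immediate from the definitions but worth stating explicitly.
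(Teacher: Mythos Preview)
Your argument is correct and takes a slightly different route from the paper. The paper proves the lemma by induction on the derivation of $\corr{\eta}{H}$, walking through each trace-record case and, in the $\hrsub{x}{\theta}$ case, invoking term coincidence on the particular $\theta$ (using $y\notin\freevars{\theta}$ from $y\notin\freevars{H}$); the remaining cases either reduce to $\hnow{H}{x}=x$ directly or appeal to the induction hypothesis. You instead factor the argument: first a purely syntactic lemma that $\hnow{H}{x}$ is either $x$ or some $\theta$ recorded by a $\hrsub{x}{\theta}$ in $H$, hence $\freevars{\hnow{H}{x}}\subseteq\{x\}\cup\freevars{H}$; then a single application of \dL{} term coincidence finishes. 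Your decomposition is cleaner---it isolates exactly the syntactic fact that makes the lemma true and needs the correspondence $\corr{\eta}{H}$ only so that $\seqstate{\eta}{H}$ is well-defined---whereas the paper's induction more closely mirrors the surrounding lemmas and keeps the proof uniform with them. Both are short; yours is arguably more reusable since the free-variable bound on $\hnow{H}{x}$ is a standalone fact.
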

\begin{proof}
  By induction on $\corr{\eta}{H}$.

\oldcase{$\corr{\tsing(\omega)}{\hemp{}}$} $\interps{\hnow{H}{x}}{\seqstate{\eta}{H}} = \omega(x) = \subst[\omega]{y}{r}(x)$ because $x\neq{y}$.

\oldcase{$\corr{\eta',\omega}{\hhany{H'}{x}{x_i}}$} $\interps{\hnow{H}{x}}{\seqstate{\eta}{H}} = \interps{x}{\seqstate{\eta}{H}}=\omega(x)=\subst[\omega]{y}{r}(x)$ because $y\neq{x}$.

\oldcase{$\corr{\eta',\omega}{\hheq{H'}{x}{x_i}{\theta}}$} $\interps{\hnow{H}{x}}{\seqstate{\eta}{H}} = \interps{x}{\seqstate{\eta}{H}}=\omega(x)=\subst[\omega]{y}{r}(x)$ because $y\neq{x}$.

\oldcase{$\corr{\eta',\omega}{\hhsub{H'}{x}{\theta}}$} $\interps{\hnow{H}{x}}{\seqstate{\eta}{H}} = \interps{\theta}{\seqstate{\eta'}{H'}}=\interps{\theta}{\subst[\seqstate{\eta'}{H'}]{y}{r}}$ by term coincidence~\cite{DBLP:conf/cade/Platzer15} and because $y\notin\freevars{\theta}$ when $y\notin\freevars{H}$.

\oldcase{$\corr{\eta',t}{H,t}$} $\interps{\hnow{H}{x}}{\seqstate{\eta}{H}} = \interps{\hnow{H'}{x}}{\seqstate{\eta'}{H'}} = \interps{\hnow{H'}{x}}{\subst[{\seqstate{\eta'}{H'}}]{y}{r}} = \interps{\hnow{H}{x}}{\subst[{\seqstate{\eta}{H}}]{y}{r}}$. The remaining cases are symmetric.

\end{proof}
\begin{lemma}[Coincidence for Nominals]
\label{lem:coinc-nom}
If $\corr{\eta}{H}$ and $y\neq{\freevars{H}}$ and $x\neq{y}$ then $\interps{\hnom{t}{H}{x}}{\seqstate{\eta}{H}} = \interps{\hnom{t}{H}{x}}{\subst[\seqstate{\eta}{H}]{y}{r}}$.
\end{lemma}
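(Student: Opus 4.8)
The plan is to reduce the claim to a free‑variable argument about the \dL{} term denoted by $\hnom{t}{H}{x}$, in the spirit of the coincidence reasoning behind \rref{lem:coinc-now}. Assuming the nominal is well‑formed (so $t\in\dom{H}$), I first unfold the defining recursion of $\hnom{t}{H}{x}$: reading records off the right end of $H$, any record that is not an $\hreq{x}{x_i}{\theta}$ or $\hrany{x}{x_i}$ for the currently tracked variable is skipped, an $\hreq{x}{x_i}{\theta}$ or $\hrany{x}{x_i}$ for the tracked variable replaces it by the fresh ghost $x_i$, and the record $t$ terminates the recursion. Hence $\hnom{t}{H}{x}$ is syntactically equal to $\hnow{H'}{z}$, where $H'=\ttrunc{t}{H}$ is the prefix of $H$ preceding $t$ and $z$ is the name that $x$ carries at state $t$: either $z=x$, or $z=x_i$ for the fresh ghost introduced by the most recent $\hreq{x}{x_i}{\theta}$ or $\hrany{x}{x_i}$ record after $t$. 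Since such a ghost is fresh at introduction, tracking it through the prefix triggers no further rename, so this normal form is reached after at most one change of tracked variable.

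Next I would prove, by a routine induction on $H'$, that $\freevars{\hnow{H'}{z}}\subseteq\freevars{H'}\cup\{z\}$: the $\hnow{}{}$ recursion bottoms out either at $z$ itself, at a ghost renamed from $z$ (which then occurs in $H'$), or at the term carried by an $\hrsub{w}{\theta}$ record of $H'$ (whose variables occur in $H'$). Together with the first step this yields $\freevars{\hnom{t}{H}{x}}\subseteq\freevars{H}\cup\{z\}$. Now $y\notin\freevars{H}$ by hypothesis, and $y\neq z$: if $z=x$ then $z\neq y$ by hypothesis, and if $z=x_i$ is a ghost then $x_i$ occurs in some $\hreq{x}{x_i}{\theta}$ or $\hrany{x}{x_i}$ record of $H$, so $x_i\in\freevars{H}$ and hence $x_i\neq y$. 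Thus $y$ does not occur free in the \dL{} term $\hnom{t}{H}{x}$.

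Finally, since $\seqstate{\eta}{H}$ and $\subst[\seqstate{\eta}{H}]{y}{r}$ differ only in the value assigned to $y$, term coincidence for \dL~\cite{DBLP:conf/cade/Platzer15} gives that the two valuations of $\hnom{t}{H}{x}$ agree, which is exactly the statement. Here $\corr{\eta}{H}$ enters only to guarantee well‑formedness, in particular freshness of the recorded ghosts; alternatively one could run the induction directly on the derivation of $\corr{\eta}{H}$ in parallel with \rref{lem:coinc-now}, the one genuinely new case being an $\hreq{x}{x_i}{\theta}$ or $\hrany{x}{x_i}$ record for the tracked variable, which collapses to the already‑handled variable case by freshness.

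I expect the bookkeeping of the first step to be the main obstacle: extracting the normal form $\hnow{\ttrunc{t}{H}}{z}$ correctly forces one to follow the silent change of variable that $\hnom{}{}{}$ performs at an $\hreq{x}{x_i}{\theta}$ or $\hrany{x}{x_i}$ record and to verify that the resulting $z$ is either $x$ or a ghost occurring in $H$ — precisely what makes $y\neq z$ go through. The seemingly simpler route of inducting on $\corr{\eta}{H}$ and passing to a prefix is more delicate, since transporting $\seqstate{\eta}{H}$ back across a trailing $\hrsub{w}{\theta}$ record is not free: that record overwrites a variable that may still occur in the not‑yet‑normalized nominal term. Reading off the full normal form against all of $H$ before arguing about free variables sidesteps this.
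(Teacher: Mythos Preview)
Your approach is correct and genuinely different from the paper's. The paper proves this lemma by a direct induction on the derivation of $\corr{\eta}{H}$, peeling off one trace record at a time: the $\hrany{x}{x_i}$ and $\hreq{x}{x_i}{\theta}$ cases unfold one step of the $\hnom{t}{H}{x}$ recursion and appeal to the inductive hypothesis with the tracked variable changed to $x_i$; the $\hrsub{x}{\theta}$ case (and the remaining ``other record'' cases) skip the record on both the nominal and the state side; and the $t$ case bottoms out by invoking Lemma~\ref{lem:coinc-now}. In contrast, you first flatten $\hnom{t}{H}{x}$ completely to a normal form $\hnow{\ttrunc{t}{H}}{z}$ with $z\in\{x\}\cup\freevars{H}$, prove the free-variable bound $\freevars{\hnow{H'}{z}}\subseteq\freevars{H'}\cup\{z\}$, and then apply term coincidence once. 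Your route makes the underlying reason explicit---$y$ simply does not occur in the term $\hnom{t}{H}{x}$---and yields a reusable syntactic lemma about nominals; the paper's route is shorter and reuses Lemma~\ref{lem:coinc-now} as a black box without needing to characterise the normal form. One small inaccuracy in your write-up: $\hnow{}{}$ never ``bottoms out at a ghost renamed from $z$''; by its definition it returns either $z$ itself (empty, $\hrany{}{}$, $\hreq{}{}{}$ cases) or the $\theta$ of a matching $\hrsub{z}{\theta}$ record, so the renaming step you describe belongs only to the outer $\hnom{t}{}{}$ recursion. This does not affect your free-variable inclusion, which still goes through. Your closing remark that the inductive route is ``more delicate'' at a trailing $\hrsub{w}{\theta}$ record is apt: the paper's proof there tacitly uses that changing the sequent-level state at the substituted variable leaves the value of the nominal unchanged, which is exactly the free-variable fact your approach isolates up front.
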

\begin{proof}
  \oldcase{$\corr{\tsing{\omega}}{\hemp{}}$} True by contradiction.

  \oldcase{$\corr{\eta,\omega}{\hhany{H}{x}{x_i}}$} $\interps{\hnom{t}{H}{x}}{\seqstate{\eta}{H}} =_{def} \interps{\hnom{t}{H'}{x_i}}{\seqstate{\eta'}{H'}} =_{IH} \interps{\hnom{t}{H'}{x_i}}{\subst[{\seqstate{\eta'}{H'}}]{y}{r}} =_{def} \interps{\hnom{t}{H}{x}}{\subst[{\seqstate{\eta'}{H'}}]{y}{r}}$ by invariants for $\corr{}{}$. The proof for $\hheq{h}{x}{x_i}$ is symmetric.

  \oldcase{$\corr{\eta,\omega}{\hhsub{H}{x}{\theta}}$} $\interps{\hnom{t}{H}{x}}{\seqstate{\eta}{H}} = \interps{\hnom{t}{H'}{x}}{\seqstate{\eta'}{H'}} = \interps{\hnom{t}{H'}{x}}{\subst[{\seqstate{\eta'}{H'}}]{y}{r}}=\interps{\hnom{t}{H}{x}}{\subst[{\seqstate{\eta}{H}}]{y}{r}}$.
All other cases except $H,t$ are symmetric.

  \oldcase{$\corr{\eta,t}{H,t}$} $\interps{\hnom{t}{H}{x}}{\seqstate{\eta}{H}} = \interps{\hnow{H}{x}}{\seqstate{\eta'}{H'}}=\interps{\hnow{H}{x}}{\subst[{\seqstate{\eta'}{H'}}]{y}{r}}=\interps{\hnom{t}{H}{x}}{\subst[{\seqstate{\eta}{H}}]{y}{r}}$ by Lemma~\ref{lem:coinc-now}.
\end{proof}
Given the coincidence lemmas above, we have correspondence for nominals of variables:

\begin{lemma}[Correspondence for Nominals]
\label{lem:corr-nom}
For all $\corr{\eta}{H}$, all state names $t\in\dom{H}$, and all variables $x$, $\interps{\hnom{t}{H}{x}}{\seqstate{\eta}{H}} = t(\eta)(x)$.
\end{lemma}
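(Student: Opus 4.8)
The plan is to prove the statement by induction on the derivation of $\corr{\eta}{H}$, in lockstep with the proofs of the preceding Correspondence-for-``now'' lemma and of Lemma~\ref{lem:coinc-nom}. The cases to handle are: the empty trace $\corr{\tsing{\omega}}{\hemp{}}$; a state-name record $\corr{\eta,t'}{H',t'}$, split according to whether $t'$ is the queried name $t$; a substitution record $\corr{\eta,\omega}{\hhsub{H'}{x'}{\theta}}$; and a renaming record $\corr{\eta,\omega}{\hheq{H'}{x'}{x_i}{\theta}}$ or $\corr{\eta,\omega}{\hhany{H'}{x'}{x_i}}$, split according to whether $x' = x$. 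The observation that organizes the bookkeeping is that, by the defining equations for $\hnom{t}{H}{x}$, only a matching state-name record and a renaming record for the tracked variable actually change the computed value; every other record is transparent to $\hnom{t}{H}{x}$, so those cases will close under the induction hypothesis once we account for the accompanying change in $\seqstate{\eta}{H}$.

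First I would dispatch the empty trace: $\dom{\hemp{}} = \emptyset$, so there is no $t \in \dom{H}$ and the claim holds vacuously. The central case is the matching state name $\corr{\eta,t}{H',t}$: here $\hnom{t}{H',t}{x} = \hnow{H'}{x}$ and $\seqstate{\eta,t}{H',t} = \seqstate{\eta}{H'}$ by definition, while the state named $t$ in $\eta,t$ is $\trlst{\eta}$, so $t(\eta,t)(x) = \trlst{\eta}(x)$; the Correspondence-for-``now'' lemma then gives $\interps{\hnow{H'}{x}}{\seqstate{\eta}{H'}} = \trlst{\eta}(x)$, which closes the case. For a non-matching name $t' \neq t$ nothing relevant changes -- $\hnom$, $\seqstate$, and the state named $t$ all agree with their $H'$-counterparts, and $t \in \dom{H'}$ -- so the induction hypothesis applies directly. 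For a substitution record, and for a renaming record with $x' \neq x$, the nominal $\hnom{t}{H'}{x}$ is unchanged and the program state named $t$ is unchanged, while $\seqstate{\eta}{H}$ differs from $\seqstate{\eta}{H'}$ only at $x'$ (and possibly a fresh ghost); after applying the induction hypothesis it remains to see that this local change to the sequent-level state does not perturb the value of $\hnom{t}{H'}{x}$, which -- exactly as in the two preceding proofs -- follows from the invariants built into the $\corr{}{}$-relation together with term coincidence and Lemma~\ref{lem:coinc-now}. Finally, for a renaming record with $x' = x$ we have $\hnom{t}{\hheq{H'}{x}{x_i}{\theta}}{x} = \hnom{t}{H'}{x_i}$ (and symmetrically for the $\hrany{x}{x_i}$ record); here I would apply the induction hypothesis at the fresh ghost $x_i$ and then reconcile $t(\eta)(x_i)$ with $t(\eta)(x)$ using the fact that $\seqstate{\eta,\omega}{\hheq{H'}{x}{x_i}{\theta}}$ is $\seqstate{\eta}{H'}$ with the old value of $x$ copied into the fresh $x_i$ and $x$ reset to $\omega(x)$, invoking Lemma~\ref{lem:coinc-nom} and Lemma~\ref{lem:coinc-now} to discard the parts of the state that are then overwritten -- which is legitimate precisely because $x_i$ is fresh, so $x_i \notin \freevars{H'}$.

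The step I expect to be the main obstacle is exactly this last case. Two things have to be gotten right simultaneously: first, that $\hnom{t}{H'}{x_i}$ is insensitive to the substitutions used to build $\seqstate{\eta}{H}$ out of $\seqstate{\eta}{H'}$, which is where the coincidence lemmas and the freshness of $x_i$ do their work; and second, that the value the ghost $x_i$ inherits -- the pre-renaming sequent-level value of $x$ -- is exactly the program value of $x$ at state $t$, which is what the induction hypothesis actually delivers. A secondary subtlety, already latent in the Correspondence-for-``now'' proof, is the substitution case: showing that substituting $x$'s value into $\seqstate{\eta}{H'}$ leaves the trapped nominal term unchanged relies on a specific invariant of $\corr{}{}$, and the fiddliest part of the write-up is isolating that invariant and checking that it still applies when $\theta$ happens to mention $x$.
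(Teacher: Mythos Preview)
Your proposal follows essentially the same approach as the paper's proof: induction on the derivation of $\corr{\eta}{H}$, using the correspondence-for-``now'' lemma at the matching state-name case and appealing to the coincidence lemmas together with the invariants of $\corr{}{}$ for the remaining records. Your case split is in fact more explicit than the paper's---you separate $x'=x$ from $x'\neq x$ in the renaming records and treat the empty trace as vacuous, whereas the paper writes its $\hrany$/$\hreq$ case only for the matching variable and handles the rest by ``symmetric''---and you correctly flag the matching-variable renaming case as the delicate one, which is exactly where the paper leans on the (implicitly stated) invariants of $\corr{}{}$.
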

\begin{proof}
  By induction on the derivation $\corr{\eta}{H}$.

\oldcase{$\corr{\tsing{\nu}}{\hemp{}}$} $\interps{\hnom{t}{H}{x}}{\seqstate{\nu}{H}} = \nu(x) = t(\eta)(x)$.

\oldcase{$\corr{\eta',\omega}{\hhany{H'}{x}{x_i}}$} $\interps{\hnom{t}{H}{x}}{\seqstate{\eta}{H}} = \interps{\hnom{t}{H'}{x}}{\seqstate{\eta'}{H'}_x^{\omega(x)}} = \interps{\hnom{t}{H'}{x}}{\seqstate{\eta'}{H'}} = t(\eta')(x_i)=t(\eta')(x) = t(\eta)(x)$, by definition of $\hnom{t}{H}{x}$ and $\mathfrak{S}$, induction, definition of $\corr{}{}$ and definition of $t(\eta)$ respectively.

\oldcase{$\corr{\eta',\omega}{\hheq{H'}{x}{x_i}{\theta}}$} Symmetric.

\oldcase{$\corr{\eta',\omega}{\hhsub{H'}{x}{\theta}}$} $\interps{\hnom{t}{H}{x}}{\seqstate{\eta}{H}} = \interps{\hnom{t}{H'}{x}}{\seqstate{\eta'}{H'}} = t(\eta')(x) = t(\eta)(x)$.

\oldcase{$\corr{\eta',t}{H,t}$} $\interps{\hnom{t}{H}{x}}{\seqstate{\eta}{H}}=\interps{\hnow{H'}{x}}{\seqstate{\eta'}{H'}}=\trlst{\eta'}(x)=t(\eta)(x)$ by Lemma~\ref{lem:coinc-nom}.

\oldcase{other} Symmetric.
\end{proof}
Furthermore, note that prefixes preserve correctness of nominalization, thus nominals behave correctly even when evaluated from an intermediate state of $\eta$:

\begin{corollary}
 For all $\corr{\eta}{H}$, all state names $s\leq{t}\in\dom{H}$, all variables $x$, $\interps{\hnom{s}{t(H)}{x}}{\seqstate{\eta}{t(H)}} = s(t(\eta))(x)$
\end{corollary}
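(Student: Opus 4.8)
The plan is to obtain the corollary as a direct instance of Lemma~\ref{lem:corr-nom}, applied not to $\eta$ and $H$ but to their truncations $\ttrunc{t}{\eta}$ and $\ttrunc{t}{H}$. To make that legal I first need two stability facts: that $\corr{}{}$ is closed under truncation at a state name, and that the sequent-level state $\mathfrak{S}$ is insensitive to the part of the dynamic trace lying past such a truncation point.

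For the first fact: if $\corr{\eta}{H}$ and $t\in\dom{H}$, then $\corr{\ttrunc{t}{\eta}}{\ttrunc{t}{H}}$. Since $\corr{}{}$ is defined inductively, generating $\eta$ and $H$ in lockstep by appending matched trace records (with the $t$ markers aligned across the two traces), one obtains a derivation of $\corr{\ttrunc{t}{\eta}}{\ttrunc{t}{H}}$ simply by discarding the records appended after the introduction of state $t$; this is a routine induction on the derivation of $\corr{\eta}{H}$. For the second fact: $\seqstate{\eta}{\ttrunc{t}{H}} = \seqstate{\ttrunc{t}{\eta}}{\ttrunc{t}{H}}$. This holds because $\mathfrak{S}(\cdot\,;\cdot)$ is defined by simultaneous structural recursion that consumes one static-trace record together with the matching dynamic-trace record at each step, so the computation on input $\ttrunc{t}{H}$ only ever inspects the prefix $\ttrunc{t}{\eta}$ of $\eta$ (under $\corr{\eta}{H}$ the record structures line up exactly); again a short induction suffices.

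Given these two facts the corollary follows at once. From $s\le t\in\dom{H}$ we get $s\in\dom{\ttrunc{t}{H}}$, so Lemma~\ref{lem:corr-nom} applied to $\corr{\ttrunc{t}{\eta}}{\ttrunc{t}{H}}$ with state name $s$ and variable $x$ gives $\interps{\hnom{s}{\ttrunc{t}{H}}{x}}{\seqstate{\ttrunc{t}{\eta}}{\ttrunc{t}{H}}} = s(\ttrunc{t}{\eta})(x)$; rewriting the left-hand side using $\seqstate{\ttrunc{t}{\eta}}{\ttrunc{t}{H}} = \seqstate{\eta}{\ttrunc{t}{H}}$ yields exactly $\interps{\hnom{s}{t(H)}{x}}{\seqstate{\eta}{t(H)}} = s(t(\eta))(x)$. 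The main obstacle is not mathematical depth but bookkeeping: the conventions for $\ttrunc{}{}$ differ slightly between static and dynamic traces as to whether the state marker $t$ is retained, and one must check this matches the clauses $\seqstate{\eta,t}{H,t}=\seqstate{\eta}{H}$ and $\hnom{t}{H,t}{x}=\hnow{H}{x}$ that the argument relies on implicitly; once those conventions are pinned down, no further calculation is required.
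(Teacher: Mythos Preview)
Your proposal is correct and follows essentially the same route as the paper: reduce to Lemma~\ref{lem:corr-nom} applied at the truncated traces $t(\eta)$ and $t(H)$, after checking that $\corr{\eta}{H}$ restricts to $\corr{t(\eta)}{t(H)}$ and that $s\in\dom{t(H)}$. You are in fact more careful than the paper, which silently identifies $\seqstate{\eta}{t(H)}$ with $\seqstate{t(\eta)}{t(H)}$ without comment; your explicit second stability fact fills that gap.
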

\begin{proof}
  From Lemma~\ref{lem:corr-nom}, it suffices to show for any $\corr{\eta}{H}$ and $t\in\dom{H}$ that $\corr{t(\eta)}{t(H)}$, which holds by a trivial induction since $\corr{\eta}{H}$ contains a derivation $\corr{\eta'}{H'}$ for all same-length prefixes $\eta'$ and $H'$, including $t(\eta)$ and $H(\eta)$. Then note since $s\leq{t}\in{H}$ then $s\in\dom{t(H)}$, so the preconditions of the lemma are satisfied.
\end{proof}

Because the meaning of nominal terms is uniquely determined by the meaning of nominal variables, the above lemmas suffice to show that \emph{all} nominal and pseudo-nominal terms are well-behaved:

\begin{theorem}[Nominal Term Correspondence]
\label{thm:ntc}
For all $\corr{\eta}{H}$, all state names $s\leq{t}\in\dom{H}$, all terms $\theta$, $\interps{\hnom{s}{t(H)}{\theta}}{\seqstate{\eta}{t(H)}} = \interps{\theta}{s(t(\eta))}$.
\end{theorem}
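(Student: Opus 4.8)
The plan is to reduce the statement to its variable case, already established as the Corollary to Lemma~\ref{lem:corr-nom}, by a routine structural induction on $\theta$. The two facts that make this work are homomorphism properties: first, that the nominalization $\hnom{s}{H}{\cdot}$ commutes with the \dL term constructors --- $\hnom{s}{H}{q} = q$ for $q\in\mathbb{Q}$ and $\hnom{s}{H}{\otimes(\theta_1,\theta_2)} = \otimes(\hnom{s}{H}{\theta_1},\hnom{s}{H}{\theta_2})$, and likewise for $\theta^q$ --- which is exactly how $\hnom{t}{H}{\theta}$ was lifted from variables to arbitrary terms in Section~\ref{sec:ddk} (following the slogan that an expression's value depends only on its variables' values); and second, that the \dL term semantics $\interps{\cdot}{\omega}$ (for any fixed state $\omega$) commutes with the same constructors. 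Composing these two homomorphisms threads the Corollary through the induction.

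Concretely, I would fix $\corr{\eta}{H}$ and state names $s\leq{t}\in\dom{H}$ and induct on the structure of $\theta$. For $\theta$ a variable $x$, both sides reduce to $s(t(\eta))(x)$: the left side by the Corollary (whose hypotheses are exactly those fixed), and the right side because $\interps{x}{\omega}=\omega(x)$. For $\theta$ a rational literal $q$, both sides are $q$, since $\hnom{s}{t(H)}{q}=q$ and the semantics of a literal is state-independent. For $\theta = \otimes(\theta_1,\theta_2)$, I would rewrite the left side using the nominalization homomorphism, then push $\interps{\cdot}{\seqstate{\eta}{t(H)}}$ through $\otimes$ using the semantic homomorphism, apply the induction hypothesis to $\theta_1$ and $\theta_2$, and fold $\otimes$ back up with the semantic homomorphism to obtain $\interps{\otimes(\theta_1,\theta_2)}{s(t(\eta))}$. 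The exponentiation case $\theta^q$ is identical with a unary operator in place of $\otimes$.

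I do not anticipate a real obstacle here: all the substantive state-bookkeeping content --- how $\seqstate{\cdot}{\cdot}$, the static trace $H$, and the prefixing $t(\cdot)$ interact --- was already discharged by Lemma~\ref{lem:corr-nom} and its Corollary, and the remaining work is only the observation that nominalization and evaluation both respect term structure in matching ways. The one place worth a careful sentence is division, which is partial in \dL: this causes no trouble because the homomorphism identities above hold for the (possibly undefined) values verbatim, so on each branch of the induction either both sides are defined and equal or both are undefined. If $\theta$ is permitted to contain Kaisar abbreviations or nested nominals, I would first normalize via expansion (and the nominal-flattening identities of Section~\ref{sec:ddk}) so that the induction runs over proper \dL term syntax; the core argument is unchanged.
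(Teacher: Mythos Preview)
Your proposal is correct and follows essentially the same approach as the paper: a structural induction on $\theta$, with the variable case discharged by the Corollary to Lemma~\ref{lem:corr-nom}, the literal case trivial, and the compound case handled by the homomorphism of both nominalization and term semantics over the \dL constructors. Your additional remarks on division partiality and on normalizing away abbreviations are reasonable hygiene but not needed for the paper's version of the argument.
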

\begin{proof}
  By induction on the structure of terms $\theta$.

\oldcase{$\theta=x$} Then $\interps{\hnom{s}{t(H)}{\theta}}{\seqstate{\eta}{t(H)}} = \interps{\hnom{s}{t(H)}{x}}{\seqstate{\eta}{t(H)}} = s(t(\eta))(x) = \interps{\theta}{s(t(\eta))}$ by corollary above.

\oldcase{$\theta=q\in\mathbb{Q}$} Then $\interps{\hnom{s}{t(H)}{\theta}}{\seqstate{\eta}{t(H)}} = q = \interps{\theta}{s(t(\eta))}$.

\oldcase{$\theta=\otimes(\theta_1,\theta_2)$ for any operator $\otimes$} Then $\interps{\hnom{s}{t(H)}{\theta}}{\seqstate{\eta}{t(H)}} = \interps{\hnom{s}{t(H)}{\otimes(\theta_1,\theta_2)}}{\seqstate{\eta}{t(H)}} = \otimes(\interps{\hnom{s}{t(H)}{\theta_1}}{\seqstate{\eta}{t(H)}},\interps{\hnom{s}{t(H)}{\theta_2}}{\seqstate{\eta}{t(H)}}) = \otimes(\interps{\theta_1}{s(t(\eta))},\interps{\theta_2}{s(t(\eta))}) = \interps{\otimes(\theta_1,\theta_2)}{s(t(\eta))} = \interps{\theta}{s(t(\eta))}.$
\end{proof}
In the above theorems, we assumed that the dynamic and static traces are always in correspondence  $\corr{\eta}{H}$ holds.
We now show that this is always the case at every proof state within a proof:

\begin{theorem}[Intermediate Nominal Term Correspondence]
 For any program $\alpha$ with $(\omega,\nu)\in\interp{\alpha}$ and any $\corr{\eta_{\omega}}{H}$ where $\trlst{\eta_\omega} = \omega$ and $\tpchk{\G}{H}{\sproof}{\hnow{H}{[\alpha]\phi}}{H_1}$ there exists $\eta_\nu$ where $\trfst{\eta_\nu} = \nu$ and $\corr{\eta_\omega\eta_\nu}{H_1}$.
\end{theorem}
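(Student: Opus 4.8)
The plan is to prove the statement by induction on the structured-symbolic-execution proof $\sproof$ --- equivalently, on the derivation of $\tpchk{\G}{H}{\sproof}{\hnow{H}{[\alpha]\phi}}{H_1}$ --- maintaining as induction hypothesis the statement itself for all strict subproofs. The guiding observation is that each symbolic-execution rule simultaneously determines the top-level shape of $\alpha$ (after the implicit $;$-splitting and $?(\cdot)$-reductions) and prescribes how the input trace $H$ is transformed into the output trace $H_1$: for each such shape one decomposes the run $(\omega,\nu)\in\interp{\alpha}$ according to the operational semantics of $\alpha$, applies the induction hypothesis to the matching subproofs, assembles the resulting dynamic-trace fragments into $\eta_\nu$, and re-establishes $\corr{\eta_\omega\eta_\nu}{H_1}$ by unfolding the defining clauses of $\seqstate{\cdot}{\cdot}$ (\rref{def:seq-state}) and of $\corr{}{}$.

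The cases for the trace-advancing rules are verified directly against \rref{def:stat-tr} and \rref{def:seq-state}. For the assignment rules, $\nu$ coincides with $\omega$ away from $x$ and records the value of the assigned term; in the substitution subrule the new record is $\hrsub{x}{\eag{\theta}}$ and the correspondence obligation --- that the recorded term evaluates in the sequent-level state $\seqstate{\eta_\omega}{H}$ to $\nu(x)$ --- follows from $\corr{\eta_\omega}{H}$ together with the correspondence results of this section (in particular \rref{lem:coinc-now} and the term induction behind \rref{thm:ntc}); in the equality subrule the fresh ghost $x_i$ records $\omega(x)$ and the obligation is immediate. Nondeterministic assignment is the equality subrule with $\nu(x)$ unconstrained, and $\kwsolve{}$ is the substitution subrule with the assigned term the closed-form flow $y(t)$, discharged by the ODE's solution semantics. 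The $\kwstate{}$ rule appends the same name $t$ to the static and the dynamic trace; the test rule leaves both traces untouched with $\nu=\omega$; and the purely logical constructs ($\kwlet{}$, $\kwnote{}$, $\kwhave{}$, the propositional rules, $\kwfocus{}$) together with the differential-ghost rule consume no program fragment, so one passes the trace through (discarding the branch-local subtrace introduced under $\kwfocus{}$ and under the auxiliary subproof of $\kwhave{}$) and invokes the induction hypothesis on the continuation with the same $(\omega,\nu)$. Where a box $[\alpha]\phi$ is instead closed at a leaf by an automated method, $H_1$ is the abstracted trace $\arb{H}{\alpha}$ and one takes the dynamic-trace jump $\arb{\omega}{\alpha}$ with the introduced ghosts chosen so that the new state equals $\nu$ on $\boundvars{\alpha}$ (which is legitimate because $\nu$ and $\omega$ agree off $\boundvars{\alpha}$); correspondence then holds by the abstraction-soundness result shown in this section.

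The genuinely delicate cases are those reasoning by abstraction over bound variables --- Hoare composition $\bcon{\psi}{\sproof_\psi}{\sproof_\phi}$, the loop-invariant rules, and the differential-invariant rule --- since there $H_1$ is built from $\arb{H}{\alpha}$ rather than from a trace literally returned by a subproof, so the induction hypothesis cannot simply be quoted but must be composed with the abstraction-soundness result. For Hoare composition a run of the composed program factors as $(\omega,\mu)\in\interp{\alpha}$ followed by $(\mu,\nu)\in\interp{\beta}$; one applies the induction hypothesis to $\sproof_\psi$, passes to the abstracted trace $\arb{H_\psi}{\alpha}$, applies it again to $\sproof_\phi$, and glues the fragments. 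For the loop rules, a run of $\alpha^*$ is a finite concatenation of runs of $\alpha$, so one runs an inner induction on the iteration count with $\corr{\cdot}{\arb{H}{\alpha}}$ as the loop-trace invariant, using the inductive-step subproof once per iteration and the $\kwfinally{}$ subproof for the trailing postcondition. For the differential-invariant rule the ``run'' is a single continuous flow and, since a static trace records only the endpoints of an ODE, it suffices to produce the two-state fragment ending at $\nu$ and check it against $\arb{H}{\alpha}$. Finally $\kwcase{}$ is a mild version of the same situation: its output trace is the input trace $H$ because branch-local state falls out of scope, so after applying the induction hypothesis to whichever of $\alpha,\beta$ actually executed one must restrict the correspondence from $H_\alpha$ (resp.\ $H_\beta$) back along its prefix $H$, which is exactly the scoping property the rule enforces by returning $H$.

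I expect \emph{the main obstacle} to be the abstraction bookkeeping in the loop and ODE rules: making the fresh-ghost renamings hidden inside $\arb{H}{\alpha}$, $\arb{\omega}{\alpha}$, $\arb{\G}{\alpha}$, and $\arb{\Delta}{\alpha}$ line up simultaneously with the operational decomposition of the run and with the inner induction on iterations, while keeping the dynamic and static traces in correspondence across every abstraction step and, in particular, keeping $\eta_\nu$ correctly anchored at $\nu$. Everything outside of abstraction is a direct --- if tedious --- unfolding of the trace definitions of \rref{def:stat-tr}, \rref{def:dyn-tr}, and \rref{def:seq-state}.
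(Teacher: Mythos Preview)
Your overall strategy---induction on the proof-checking derivation, with a separate ghosting lemma handling the abstraction steps---is exactly what the paper does. Two places where your plan diverges from the paper are worth flagging.

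First, your treatment of leaves is off: when $\kwshow{}$ closes $[\alpha]\phi$ by an automated method, the output trace is \emph{not} $\arb{H}{\alpha}$; the $\kwshow{}$ rule simply returns the input trace $H$ unchanged. The paper's proof of this case is correspondingly trivial: take $\eta_\nu$ to be the singleton $(\trlst{\eta_\omega})$, so that $\eta_\omega\eta_\nu=\eta_\omega$ and correspondence with $H_1=H$ is the hypothesis. No abstraction step occurs here at all.

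Second, and more substantively, your loop case is overcomplicated in a way that would not close. You propose an inner induction on the iteration count, re-using $\sproof_{Inv}$ once per iteration with $\corr{\cdot}{\arb{H}{\alpha}}$ as invariant. But the IH applied to $\sproof_{Inv}$ yields correspondence with its output trace $H_{Inv}$, not with $\arb{H}{\alpha}$, and there is no restriction operation that gets you back. The paper avoids this entirely: the $\kwinv{}$ rule's output trace equals its input trace, so one just applies the IH to the continuation $\iproof$; and at $\kwfinally{}$, a single application of the Ghosting Lemma (which the paper states and proves inline, as Lemma~\ref{lem:intermed-ghost}) jumps the dynamic trace over the \emph{entire} loop in one step to match $\arb{H}{\alpha}$, after which the IH on the post-loop subproof finishes. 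The same pattern handles $\kwcase{}$: its output trace is the input $H$, so there is no need to apply the IH to a branch and then ``restrict''---the paper simply reuses the hypothesis $\corr{\eta_\omega}{H}$ and returns the singleton trace. The conceptual point you are missing is that the static trace deliberately forgets branch-local and loop-local structure, so the dynamic trace need not---and must not---try to track per-iteration detail.
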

Note that  $\trlst{\eta_\nu}$ is not always $\nu$: 
Following the proof-checking rules, any states defined inside the nondeterministic constructs $\alpha\cup\beta$ and $\alpha^*$ leave scope and are absent in the final trace.
Since the theorem holds inductively at every proof state it thus shows that even for choices and loops, nominals have their intended meaning in the local context.
\begin{proof}
The proof is by induction on the derivation $\tpchk{\G}{H}{\sproof}{[\alpha]\phi}{H'}$.
The only significant cases are those which modify the trace.

\oldcase{show} Let $\eta_\omega'=\eta_\omega$ and $\eta_\nu = (\trlst{\eta_\omega}),$ then $\eta_\omega'\eta_\nu=\eta_\omega$ and $H'=H$ so $\corr{\eta_\omega'\eta_\nu}{H'}$ because $\corr{\eta_\omega}{H}$ by assumption.

\oldcase{$\bcase{p_\alpha}{\sproof_1}{p_\beta}{\sproof_2}$} Let $\eta_\omega'=\eta_\omega$ and $\eta_\nu = (\trlst{\eta_\omega}),$ then $\eta_\omega'\eta_\nu=\eta_\omega$ and $H'=H$ so $\corr{\eta_\omega'\eta_\nu}{H'}$ because $\corr{\eta_\omega}{H}$ by assumption.

\oldcase{$\bassign{x}{\tilde{\theta}}{\sproof}$, $\alpha=x:=\hnow{H}{\theta};\hnow{H}{\alpha'}, x\notin\boundvars{\alpha'}$} 
Let $H^*=\hhsub{H}{x}{\now{H}{\theta}}, \eta^*=\eta,\omega,\subst[\omega]{x}{\interps{\theta}{\omega}}$. 
To show $\corr{\eta^*}{H^*}$ it suffices to show $\interps{\now{H}{\theta}}{\seqstate{\eta,\omega}{H}=\interps{\theta}{\omega}}$ which holds by Theorem~\ref{thm:ntc}.
To apply the IH, lastly observe $\hnow{H^*}{\alpha'}=\subst[\hnow{H}{\alpha'}]{x}{\hnow{H}{\theta}}$.
Then by IH, $\exists \eta^*_\nu~\trfst{\nu}=\subst[\omega]{x}{\interps{\theta}{\omega}}$, so let $\eta_\nu=\omega,\eta^*_\nu$ and observe $\eta,\omega,\subst[\omega]{x}{\interps{\theta}{\omega}},\eta^*_\nu=\eta,\omega,\eta^*$ so $\corr{\eta,\omega,\eta^*}{H'}$ by IH.

\oldcase{$\bassign{x}{\tilde{\theta}}{\sproof}$, $\alpha=x:=\hnow{H}{\theta};\hnow{H}{\alpha'}, x\in\boundvars{\alpha'}$} 
By symmetry since the rule for $\corr{\eta,\omega}{\hheq{H}{x}{x_i}{\theta}}$ is symmetric with $\corr{\eta,\omega}{\hhsub{H}{x}{\theta}}$.
To apply the IH observe  $hnow{H^*}{\alpha'}=\hnow{H}{\alpha'}$.

\begin{lemma}[Ghosting]
\label{lem:intermed-ghost}
  If $\corr{\eta,\omega}{H}$ 
  and $(\omega,\nu)\in\interp{\alpha}$  
  then $\exists\eta_\nu~\corr{(\eta,\omega,\eta_\nu)}{\arb{H}{\alpha}}$.
\end{lemma}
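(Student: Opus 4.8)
The plan is to build the witness $\eta_\nu$ explicitly and then verify correspondence by iterating the $\corr{}{}$-clause for nondeterministic assignment. Writing $\boundvars{\alpha} = \{x_1,\ldots,x_n\}$ and letting $y_1,\ldots,y_n$ be the fresh ghosts picked by abstraction, unfolding the definition of abstraction gives $\arb{H}{\alpha} = H,\hrany{x_1}{y_1},\ldots,\hrany{x_n}{y_n}$, so the target static trace extends $H$ by exactly $n$ records, all of the $\hrany$ kind. Consequently $\eta_\nu$ will be a list of $n$ program states $\mu_1,\ldots,\mu_n$, one per record, and the whole problem reduces to a length-$n$ induction over these records layered on top of the hypothesis $\corr{\eta,\omega}{H}$.

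To choose the states I would first invoke the standard bound-effect property of hybrid-program semantics (the same ingredient behind the coincidence results of this section~\cite{DBLP:conf/cade/Platzer15}): since $(\omega,\nu)\in\interp{\alpha}$, the states $\omega$ and $\nu$ agree on every variable outside $\boundvars{\alpha}$. Using this I would interpolate a chain $\omega = \mu_0,\mu_1,\ldots,\mu_n$ in which $\mu_k$ arises from $\mu_{k-1}$ by the single update that sets $x_k$ to the real number $\nu(x_k)$ and records the previous value $\mu_{k-1}(x_k)$ into the fresh ghost $y_k$, leaving every other variable untouched. Because the $x_k$ are pairwise distinct and the $y_k$ are fresh (in particular distinct from each other and absent from $H$ and $\alpha$), these updates do not interfere, and $\mu_n$ agrees with $\nu$ on all variables except possibly the fresh ghosts — so $\mu_n$ behaves like $\nu$ for any purpose concerning only program variables.

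Finally I would prove $\corr{(\eta,\omega,\mu_1,\ldots,\mu_k)}{H,\hrany{x_1}{y_1},\ldots,\hrany{x_k}{y_k}}$ by induction on $k$: the base case $k=0$ is the hypothesis, and for the step the transition $\mu_{k-1}\mapsto\mu_k$ is exactly of the shape produced by $\prandom{x_k}$ — it changes only $x_k$, to a real value, together with its designated ghost $y_k$ — so the clause of $\corr{}{}$ for $\hhany{\cdot}{x_k}{y_k}$ applies, discharging its side conditions by freshness of $y_k$ relative to the preceding trace and by the induction hypothesis. Taking $k=n$ and $\eta_\nu = (\mu_1,\ldots,\mu_n)$ concludes. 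I expect no deep obstacle: the only thing that needs care is matching each interpolating state precisely to the shape demanded by the $\hrany$ clause of $\corr{}{}$ and checking that the freshness conditions built into the definition of abstraction are exactly the ones the $n$-fold application consumes. Morally the lemma just says that abstracting over $\boundvars{\alpha}$ is a sound over-approximation of executing $\alpha$, and the content of the proof is this bookkeeping.
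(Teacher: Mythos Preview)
Your proposal is correct and essentially identical to the paper's own proof: both unfold $\arb{H}{\alpha}$ into $n$ successive $\hrany$ records, invoke the bound-effect lemma to justify that $\omega$ and $\nu$ differ only on $\boundvars{\alpha}$, construct the interpolating chain $\omega=\mu_0,\mu_1,\ldots,\mu_n$ by updating one bound variable at a time toward $\nu$, and conclude by matching each step to the $\hrany$ clause of $\corr{}{}$. The only cosmetic difference is that the paper defines $\omega_i=\subst[\omega_{i-1}]{x_i}{\nu(x_i)}$ without explicitly writing the ghost assignment into the dynamic state (so $\omega_n=\nu$ exactly), whereas you also stash the old value into $y_k$; since the $y_k$ are fresh this is harmless and the correspondence goes through either way.
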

\begin{proof}
  Let $x_1,\ldots,x_n=\boundvars{\alpha}$.
  Define $\omega_1=\subst[\omega]{x_1}{\nu(x_1)}, \omega_i=\subst[\omega_{i-1}]{x_i}{\nu(x_i)}$ for all $i\leq n$.
  Then $\omega_n=\nu$ because by bound effect $\omega$ and $\nu$ differ only by $\boundvars{\alpha}$.
  By definition, $\arb{H}{\alpha}=H,\hrany{x_1}{x_{1}^*},\ldots,\hrany{x_n}{x_{n}^*}$ for ghosts $x_{i}^*$.
  Then let $\eta_\nu= \eta,\omega,\omega_1,\ldots,\omega_n$ and the result holds.
\end{proof}

\oldcase{($(\bcon{x:\phi}{\sproof_1}{\sproof_2})$ and $(\alpha=\alpha_1;\alpha_2)$)} 
By inversion on $(\omega,\nu)\in\interp{\alpha}, \exists \mu~(\omega,\mu)\in\interp{\alpha_1}$ and $(\mu,\nu)\in\interp{\alpha_2}$.
By the IH on $\sproof_1, \exists \corr{\eta_\mu}{H'}$. Let $\vec{x}=x_1,\ldots,x_n=\boundvars{\alpha}$ then define $H^*=H,\hrany{x_1}{x_{1,i}},\ldots,\hrany{x_n}{x_{n,i}}$ and define $\omega_1=\subst[\omega]{x_1}{\mu(x_1)}, \omega_i=\subst[\omega_{i-1}]{x_i}{\mu(x_i)}$ for all $i\leq n$, then observe $\omega_n = \mu$ because by bound effect~\cite{DBLP:conf/cade/Platzer15} for programs, $\omega$ and $\mu$ differ only on $\boundvars{\alpha}$.
Let $\eta^*=\eta,\omega,\omega_1,\ldots,\omega_n$ and observe $\corr{\eta^*}{H^*}$ so we can apply the IH on $\sproof_2$, yielding $\exists \corr{\eta^*_\nu}{H''}$.
Now let $\eta_\nu = \hrany{x_1}{x_{1,i}},\ldots,\hrany{x_n}{x_{n,i}},\eta^*_\nu$ and observe $\eta,\omega,\eta^* = \eta,\omega,\hrany{x_1}{x_{1,i}},\ldots,\hrany{x_n}{x_{n,i}},\eta^*_\nu$ so
$\corr{\eta,\omega,\eta_\nu}{H''}$, concluding the case.

\oldcase{$\bassert{x}{\tilde{\phi}}{\sproof}, \alpha=?(\phi);\alpha_1$} Let $H^*=H,\eta^*=\eta$ then $\corr{\eta^*}{H^*}$ and by IH $\exists \eta^*_\nu$ where $\corr{\eta^*\eta^*_\nu}{H'}$.
Let $\eta_\nu=\eta^*_\nu$ and the result holds by definition of $\eta^*$ and $\eta_\nu$.

\oldcase{$\slet{p}{\tilde{e}}{\sproof}$} Symmetric.

\oldcase{$\slet{t(?x)}{\tilde{e}}{\sproof}$} Symmetric.

\oldcase{$\snote{x}{\fproof}{\sproof}$} Symmetric.

\oldcase{$\have{x}{\tilde{e}}{\sproof_1}{\sproof_2}$} Symmetric, except apply the IH on $\sproof_2$.

\oldcase{$\sstate{t}{\sproof}$} Let $H^*=H,t$ and $\eta^*=\eta,t$ so $\corr{\eta^*}{H^*}$ so the result follows by IH.

\oldcase{$\bsolve{\pat_{ode}}{\pat_{t}}{\pat_{dom}}{\sproof}$} 
By inversion on $(\omega,\nu)\in\interp{x'=\theta;\alpha_1}$ there exists $\mu = \subst[\omega]{x}{y(t)}$ for some $t \geq 0$ such that $\forall s~\in~[0,t]~Q(s)$ and where $(\omega,\mu)\in\interp{x'=\theta}$ and $(\mu,\nu)\in\interp{\alpha_1}$.
Let $H^*=\hhsub{H}{x}{y(t)}$ and $\eta^*=\eta,\omega,\subst[\omega]{x}{\interps{y(t)}{\omega}}$ then have $\corr{\eta^*}{H^*}$ by Theorem~\ref{thm:ntc} saying $\interps{y(t)}{\omega} = \interps{\hnow{H}{y(t)}}{\seqstate{\eta,\omega}{H}}$ for any such $t$.
Then we can apply the IH since $\hnow{H^*}{\alpha_1}=\hnow{H}{\subst[\alpha_1]{x}{\interps{y(t)}{\omega}}}$, yielding $\eta^*_\nu$ where $\corr{\eta,\omega,\subst[\omega]{x}{\interps{y(t)}{\omega}},\eta^*_\nu}{H'}$.
So let $\eta_\nu=\subst[\omega]{x}{\interps{y(t)}{\omega}},\eta^*_\nu$ and the result holds.

\oldcase{$\sinv{J}{\tilde{\phi}}{\sproof_1}{\sproof_2}{\iproof}, \alpha =\alpha^*_1;\alpha_2$}
Let $H^*=H,\eta^*=\eta$ then $\corr{\eta^*}{H^*}$. by IH $\exists \eta^*_\nu$ where $\corr{\eta,\omega,\eta^*_\nu}{H'}$ so let $\eta_\nu=\eta^*_\nu$ and then $\corr{\eta,\omega,\eta_\nu}{H'}$.

\oldcase{$\finally{\sproof}, \alpha =\alpha_1^*;\alpha_2$}
By inversion, exists $\mu$ where $(\omega,\mu)\in\interp{\alpha_1^*}$ and $(\mu,\nu)\in\interp{\alpha_2}$.
By Lemma~\ref{lem:intermed-ghost} have $\corr{\arb{\eta,\omega}{\alpha}}{\arb{H}{\alpha}}$ 
(where $\arb{(\eta,\omega)}{\alpha} = \eta,\omega,\omega_1,\ldots,\omega_n$)
and by bound effect~\cite{DBLP:conf/cade/Platzer15} lemma have $\omega_n=\mu$.
By IH have $\eta^*_\nu$ where $\corr{\eta,\omega,\omega_1,\ldots,\omega_n,\eta^*_\nu}{H'}$ so let $\eta_\nu=\omega_1,\ldots,\omega_n,\eta^*_\nu$.

\oldcase{$\sinv{J}{\tilde{\phi}}{\sproof_1}{\sproof_2}{\iproof}, \alpha =\{x'=\theta\};\alpha_2$}
Symmetric to the case for loops.

\oldcase{$\sghost{y}{\theta_2}{\theta_3}{\iproof}, \alpha =\{x'=\theta\};\alpha_2$}
By inversion, $\exists \mu = \subst[\omega]{x}{\varphi(t)}$ for some $t\geq 0$ such that $\varphi(t)$ is the solution of $x'=\theta$ for time $t$ and $\forall s\in[0,t]~\subst[\omega]{x}{\varphi(t)}\in\interp{Q}$ 
and $(\omega,\mu)\in\interp{\{x'=\theta\}}$ and $(\mu,\nu)\in\interp{\alpha_2}$.
By linearity of $\{y'=\theta_2\}$, the solution of $\{x'=\theta,y'=\theta_2\}$ exists for the same time $t$ and the solution agrees on $x$.
Since $y$ is a ghost variable then $y\notin\freevars{\alpha_2}$ and so $(\mu,\nu)\in\interp{\alpha_2}$ so the IH applies and $\exists \eta^*_\nu$ where $\corr{\eta\omega,\eta^*_\nu}{H'}$ and letting $\eta_\nu=\eta^*_\nu$ the result holds.

\oldcase{$\finally{\sproof}, \alpha =\{x'=\theta\};\alpha_2$}
Symmetric to the case for loops, except note $\boundvars{\{x'=\theta\}} = \{x,x'\}$ so we apply the bound effect~\cite{DBLP:conf/cade/Platzer15} lemma only for $x$.
We assume as a side condition that $x'\notin{\freevars{\eta}}$ in which case $\corr{\eta^*}{H^*}$ still holds when setting $\omega_n(x')=\mu(x')$.
\end{proof}

\subsection{Nominalization}
The relation $\corr{\eta}{H}$ relates static and dynamic traces, but not the executed program $\alpha$.
We wish for the trace $\eta_\nu$  to assign each nominal $t$ the actual state that $\alpha$ had during the state $t$ in the proof.\footnote{Since we can add names at any point in a proof, this implies that the proof and program agree at every state.}

We show this with semantics of the nominal hybrid logic $\dLN$,~\cite{DBLP:journals/entcs/Platzer07} which extends the logic of \dL with propositions $t$ which are true iff the current state is the unique state identified by $t$.\footnote{\dLN also adds propositions $@_i\phi$ indicating truth of $\phi$ in state $\phi$, but they are not needed to specify our metatheory.}
Accordingly, we extend the semantics  with interpretations that assign a specific state to each $t$.
We write the interpretation corresponding to a trace $\eta$ as $\mkint{\eta}$, i.e. the interpretation which maps each $t$ to $\trlst{t(\eta)}$.
Thus $\omega\in\interp{\phi}{\mkint{\eta}}$ means $\phi$ holds in state $\omega$ in the interpretation constructed from $\eta$, and likewise for $(\omega,\nu)\in\interp{\alpha}{\mkint{\eta}}$.
We give a \emph{nominalization judgment} $\mknom{\alpha}{\sproof}{\alpha_h}$ which augments a program $\alpha$ with a test $?(t)$ for each named state $\sstate{t}{}$ in $\sproof$, producing a $\dLN$ program $\alpha_h$.
Every transition of $\alpha_h$ is a transition of $\alpha$ because we do not introduce state mutation.
We show the converse holds too: all our additional tests (which depend solely on $\mkint{\eta}$) pass.
This formally justifies the claim that the states of $\eta$ match the states of $\alpha$:

\newcommand{\mkfnom}[2]{nom(#1,#2)}
\begin{theorem} If $\tpchk{\G}{H}{\sproof}{[\alpha]\phi,\Delta}{H_\alpha}$ for $\corr{\eta,\omega}{H}$ and $\corr{\eta,\omega,\eta_\nu}{H_\alpha}$ then let $\nu$ such that $(\omega,\nu)\in\interp{\alpha}$.
Then in $\dLN$, $(\omega,\nu)\in\interp{\mkfnom{\alpha}{\sproof}}{\mkint{\eta}}$.
\end{theorem}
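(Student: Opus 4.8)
The plan is to prove the statement by induction on the proof-checking derivation $\tpchk{\G}{H}{\sproof}{[\alpha]\phi,\Delta}{H_\alpha}$, proceeding in lockstep with the structure of $\alpha$. The guiding observation is that nominalization $\mknom{\alpha}{\sproof}{\alpha_h}$ changes $\alpha$ only by inserting a test $?(t)$ at each point where $\sproof$ runs a \kwstate{}~$t$ (the operator $\mkfnom{\cdot}{\cdot}$ distributing through the program structure via the remaining rules), and a test never mutates the state; hence every run of $\alpha_h$ is already a run of $\alpha$ with the same final state $\nu$, and the only nontrivial direction is that the given $(\omega,\nu)\in\interp{\alpha}$ lifts to a run of $\alpha_h$ under $\mkint{\cdot}$, which happens exactly when every inserted test $?(t)$ is passed where it occurs. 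So in each case I decompose $(\omega,\nu)\in\interp{\alpha}$ using the hybrid-program transition semantics, invoke the induction hypothesis on the subproofs, and reassemble the run of $\alpha_h$; the only real work is discharging the tests.

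For every rule other than \kwstate{}, program and proof decompose homomorphically and no new test is introduced, so these cases are pure bookkeeping on top of results already in hand. For the sequential decompositions --- the \kwmid{} Hoare-composition rule with $\alpha=\alpha_1;\alpha_2$, the loop rules \kwinv{}/\kwfinally{}, the ODE rule \kwsolve{}, and the differential-ghost rule \kwghost{} --- I split the transition at the relevant intermediate state and use the Intermediate Nominal Term Correspondence theorem together with the Ghosting Lemma (Lemma~\ref{lem:intermed-ghost}) to equip each subproof with a dynamic trace in correspondence with its input static trace, then apply the induction hypothesis to each piece and concatenate. The choice rule \kwcase{} uses the hypothesis for whichever branch the transition actually takes, the other disjunct of $\alpha_h$ playing no role. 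The assertion/test and assignment rules, and the propositional and forward-chaining steps (\kwassume{}, \kwassert{}, \kwhave{}, \kwnote{}, \kwlet{}, \kwfocus{}, \kwshow{}), touch neither $\alpha$ nor the named-state structure essentially and follow at once from the hypothesis.

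The crucial case is \kwstate{}~$t$, whose premise is $\tpchk{\G}{H,t}{\sproof}{[\alpha]\phi,\Delta}{H_\phi}$ and whose nominalization is ${?(t)}\,;\,\mkfnom{\alpha}{\sproof}$. The correspondence $\corr{\eta,\omega}{H}$ extends trivially to $\corr{\eta,\omega,t}{H,t}$, and by definition $\mkint{\cdot}$ interprets the nominal $t$ as the last state of the dynamic trace preceding the $t$-record, which is $\trlst{\eta,\omega}$; by the Intermediate Nominal Term Correspondence theorem --- precisely its guarantee that the last state of the dynamic trace is the actual current program state at every proof step --- this equals $\omega$. Hence $?(t)$ is passed at $\omega$ under $\mkint{\cdot}$, and since a test leaves the state unchanged, composing $(\omega,\omega)\in\interp{?(t)}{\mkint{\cdot}}$ with $(\omega,\nu)\in\interp{\mkfnom{\alpha}{\sproof}}{\mkint{\cdot}}$ from the induction hypothesis closes the case.

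The main obstacle is not any single computation but the careful threading of correspondence traces and interpretations through the nondeterministic constructs: a state named inside a branch of $\alpha\cup\beta$, inside an iteration of $\alpha^*$, or inside a \kwfocus{}ed subproof does not escape into the returned trace $H_\alpha$, so one must ensure that $\mkint{\cdot}$ is built from the trace recording the run actually taken and that tests occurring in unused branches or iterations are never obliged to pass --- exactly the ``nominals have their intended meaning in the local context'' caveat attached to the Intermediate Nominal Term Correspondence theorem. With that scoping discipline and the abstraction/ghosting lemmas in place, the remaining induction goes through routinely.
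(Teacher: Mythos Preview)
Your proposal is correct and follows essentially the same approach as the paper: induction on the proof-checking derivation, with the \kwstate{} case discharged by showing the inserted test $?(t)$ passes because the interpretation maps $t$ to the current state $\omega$, and the remaining cases handled by decomposing the transition, applying the induction hypothesis, and reassembling (with the Ghosting Lemma and interpretation weakening/coincidence in the compositional cases). One small imprecision: in the paper's definition of nominalization, $\mkfnom{\alpha\cup\beta}{\sproof}$ and the loop-iterate case return the program \emph{unchanged} rather than distributing into the branches, so those cases are trivial and need no induction hypothesis at all---your description of the operator ``distributing through the program structure'' and ``using the hypothesis for whichever branch the transition actually takes'' slightly overstates the work required there, though the scoping observation you make in your final paragraph is exactly the reason the paper defines it this way.
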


\begin{proof}
  By induction on the derivation of $\tpchk{\G}{H}{\sproof}{[\alpha]\phi}{H'}$.
  The essence of the proof is the definition of $\mknom{\alpha}{\sproof}{\alpha'}$:
\begin{align*}
\mknom{\alpha}{\sstate{t}{\sproof}}{?t;\mkfnom{\alpha}{\sproof}}&\\
\mknom{\alpha\cup\beta}{\sproof}{\alpha\cup\beta}&\\
\mknom{\alpha^*;\beta}{\sinv{pre}{\sproof_1}{ind}{\sproof_2}{\sproof}}{\mkfnom{\alpha^*;\beta}{\sproof}}&\\
\mknom{\alpha^*;\beta}{\finally{\sproof}}{\alpha^*;\mkfnom{\beta}{\sproof}}&\\
\mknom{?(\phi);\alpha}{\bassert{x}{\phi}{\sproof}}{?(\phi);\mkfnom{\alpha}{\sproof}}&\\
\mknom{\alpha}{\have{x}{\phi}{\sproof_1}{\sproof_2}}{\mkfnom{\alpha}{\sproof_2}}\\
\mknom{x:=\theta;\alpha}{\bassign{x}{\tilde{\theta}}{\sproof}}{x:=\theta;\mkfnom{\alpha}{\sproof}}&\\
\mknom{\prandom{x};\alpha}{\bassignany{x}{\sproof}}{\prandom{x};\mkfnom{\alpha}{\sproof}}&\\
\mknom{\pevolvein{x'=\theta}{H};\alpha}{\bsolve{\pat_{ode}}{\pat_{t}}{\pat_{dom}}{\sproof}}{\pevolvein{x'=\theta}{H};\mkfnom{\alpha}{\sproof}}&\\
\mknom{\pevolvein{x'=\theta}{H};\alpha}{\sghost{y}{\theta_2}{\theta_3}{\iproof}}{\pevolvein{y'=\theta,\alpha}{H};\beta'}&\\
\ \ ~ \text{for}~ \mknom{\pevolvein{x'=\theta}{H};\alpha}{\sproof}{\pevolvein{\alpha}{H};\beta'}\\
\mknom{\pevolvein{x'=\theta}{H};\alpha}{\sinv{pre}{\sproof_1}{ind}{\sproof_2}{\sproof}}{\mkfnom{\pevolvein{x'=\theta}{H};\alpha}{\sproof}}&\\
\mknom{\pevolvein{x'=\theta}{H};\alpha}{\finally{\sproof}}{\pevolvein{x'=\theta}{H};\mkfnom{\alpha}{\sproof}}&\\
\mknom{\alpha}{\slet{\pat}{\tilde{e}}{\sproof}}{\mkfnom{\alpha}{\sproof}}&\\
\mknom{\alpha}{\slet{t(?X)}{\tilde{e}}{\sproof}}{\mkfnom{\alpha}{\sproof}}&\\
\mknom{\alpha}{\snote{x}{\fproof}{\sproof}}{\mkfnom{\alpha}{\sproof}}&\\
\mknom{\alpha^*}{\bcase{p_\phi}{\sproof_\phi}{p_\alpha}{\sproof_\alpha}}{\alpha^*}& \\
\mknom{\alpha;\beta}{\bcon{\psi}{\sproof_\psi}{\sproof_\phi}}{\left(\mkfnom{\alpha}{\sproof_1}\right);\left(\mkfnom{\beta}{\sproof_2}\right)}& \\
\mknom{\alpha}{\sfocus{\pat}{\sproof}}{\alpha}&
\end{align*}

Throughout the proof, let $\eta'\equiv\eta,\omega,\omega^*_\nu$, that is the trace resulting from the IH, such that $\corr{\eta'}{H'}$.

\oldcase{$\mknom{?(\phi);\alpha}{\bassert{x}{\phi}{\sproof}}{?(\phi);\mkfnom{\alpha}{\sproof}}$}
By inversion $\omega\in\interp{\phi}$ and $(\omega,\nu)\in\interp{\alpha}$.
By IH, $\exists \alpha'~\mknom{\sproof}{\alpha}{\alpha'}$ and $(\omega,\nu)\in_{}\interp{}$.
Then $\mknom{\bassert{x}{\phi}{\sproof}}{?(\phi);\alpha}{?(\phi),\alpha'}$ and since $\omega\in\interp{\phi}$ and $\phi$ is nominal-free, then $\omega\in\interp{\phi}{\mkint{\eta'}}$ for any $\eta'$ and also $(\omega,\nu)\in\interp{\alpha'}$ so $(\omega,\nu)\in\interp{?(\phi);\alpha'}{\mkint{\eta'}}$.

\oldcase{$\mknom{\alpha}{\have{x}{\phi}{\sproof_1}{\sproof_2}}{\mkfnom{\alpha}{\sproof_2}}$}
By the IH, $\exists \alpha'~\mknom{\sproof}{\alpha}{\alpha'}$ and $(\omega,\nu)\in\interp{\alpha'}{\mkint{\eta'}}$ then by inversion $\mknom{\alpha}{\have{x}{\phi}{\sproof_1}{\sproof_2}}{\alpha'}$ for the same $\alpha'$, completing the case.

\oldcase{$\mknom{\alpha}{\slet{\pat}{\tilde{e}}{\sproof}}{\mkfnom{\alpha}{\sproof}}$}
Symmetric.

\oldcase{$\mknom{\alpha}{\slet{t(?X)}{\tilde{e}}{\sproof}}{\mkfnom{\alpha}{\sproof}}$}
Symmetric.

\oldcase{$\mknom{\alpha}{\snote{x}{\fproof}{\sproof}}{\mkfnom{\alpha}{\sproof}}$}
Symmetric.

\oldcase{$\mkfnom{\alpha^*}{\bcase{p_\phi}{\sproof_\phi}{p_\alpha}{\sproof_\alpha}}$}
Symmetric. 

\oldcase{$\mkfnom{\alpha}{\sfocus{\pat}{\sproof}}$}
Symmetric. 

\oldcase{$\bcon{\psi}{\sproof_\psi}{\sproof_\phi}$}
By inversion, $\exists\mu~((\omega,\mu)\in\interp{\alpha}\land(\mu,\nu)\in\interp{\beta})$.
Split the trace $\eta,\omega¸\eta_\nu$ into $\eta,\omega,\eta_\mu,\mu,\eta_\nu^*$.
By IH1, $(\omega,\mu)\in{\interp{\mkfnom{\sproof_1}{\alpha}}{\mkint{\eta,\omega,\eta_\mu,\mu}}}$.
By interpretation weakening, $(\omega,\mu)\in{\interp{\mkfnom{\sproof_1}{\alpha}}{\mkint{\eta,\omega¸\eta_\nu}}}$.
By Lemma~\label{lem:intermed-ghost}, IH2 is applicable and $(\mu,\nu)\in{\interp{\mkfnom{\sproof_2}{\beta}}{\mkint{\eta,\omega¸\eta_\nu}}}$.
The result holds by semantics of $[;]$.


\oldcase{$\mknom{\alpha}{\sstate{t}{\sproof}}{?t;\mkfnom{\alpha}{\sproof}}$}
Then by IH with $\eta^*=\eta,\omega,t$ and $H^*=H,t$ have $\exists\alpha'~\mknom{\sproof}{\alpha}{\alpha'}$ for some $\alpha'$ where $(\omega,\nu)\in_{\mkint{\eta'}}\interp{\alpha'}$.
Then $\mknom{\alpha}{\sstate{t}{\sproof}}{?t;\alpha'}$.
We have $\mkint{\eta'}=\subst[\mkint{\eta^*}]{t}{\omega}$. Since $\mkint{\eta'}=\mkint{\eta^*}$ on $\Sigma(\alpha')$ then by coincidence~\cite{DBLP:conf/cade/Platzer15}, $(\omega,\nu)\in\interp{\alpha'}{\mkint{\eta^*}}$.
And by the definition of $\eta^*$ we have $\mkint{\eta^*}(t)=\omega$ so $\omega\in\interp{t}{\mkint{\eta^*}}$ and thus $(\omega,\nu)\in{?t;\alpha'}{\mkint{\eta^*}}$.

\oldcase{$\mknom{\alpha\cup\beta}{\sproof}{\alpha\cup\beta}$}
Since $(\omega,\nu)\interp{\alpha\cup\beta}$ then $(\omega,\nu)\in\interp{\alpha\cup\beta}{\mkint{\eta^*}}$ and since $\mknom{\alpha\cup\beta}{\sproof}{\alpha\cup\beta}$ the case is complete.

\oldcase{$\mknom{x:=\theta;\alpha}{\bassign{x}{\tilde{\theta}}{\sproof}}{x:=\theta;\mkfnom{\alpha}{\sproof}}$}
By IH $\exists \alpha'~\mknom{\sproof}{\alpha}{\alpha'}$ then $\mknom{x:=\theta;\alpha}{\bassign{x}{\tilde{\theta}}{\sproof}}{x:=\theta;\alpha'}$ and $(\subst[\omega]{x}{\interps{\theta}{\omega}},\nu)\in\interp{\alpha'}{\mkint{\eta^*}}$ and since (by inversion) $(\omega,\subst[\omega]{x}{\interps{\theta}{\omega}})\in{\interp{x:=\theta}}$ then also $(\omega,\subst[\omega]{x}{\interps{\theta}{\omega}})\in\interp{x:=\theta}{\mkint{\eta'}}$
and finally $(\omega,\nu)\in{\interp{x:=\theta;\alpha'}{\mkint{\eta'}}}$.

\oldcase{$\mknom{\prandom{x};\alpha}{\bassignany{x}{\sproof}}{\prandom{x};\mkfnom{\alpha}{\sproof}}$}
By IH $\exists \alpha'~\mknom{\sproof}{\alpha}{\alpha'}$ then $\mknom{x:=\theta;\alpha}{\bassignany{x}{\sproof}}{\prandom{x};\alpha'}$ and $(\subst[\omega]{x}{r},\nu)\in\interp{\alpha'}{\mkint{\eta^*}}$ and since (by inversion) $(\omega,\subst[\omega]{x}{r})\in{\interps{\prandom{x}}{\omega}}$ then also $(\omega,\subst[\omega]{x}{r})\in{\interp{\prandom{x}}{\mkint{\eta'}}}$
and finally $(\omega,\nu)\in{\interp{\prandom{x};\alpha'}{\mkint{\eta'}}}$.

\oldcase{$\mknom{\pevolvein{x'=\theta}{H};\alpha}{\bsolve{\pat_{ode}}{\pat_{t}}{\pat_{dom}}{\sproof}}{\pevolvein{x'=\theta}{H};\mkfnom{\alpha}{\sproof}}$}
By IH have $\exists \alpha'~\mknom{\sproof}{\alpha}{\alpha'}$ then $\mknom{\pevolvein{x'=\theta}{Q};\alpha}{\bsolve{\pat_{ode}}{\pat_t}{\pat_{dom}}{\sproof}}{\pevolvein{x'=\theta}{H};\alpha'}$ and $(\subst[\omega]{x}{\varphi(t)},\nu)\in\interp{\alpha'}{\mkint{\eta^*}}$ for some $t\geq 0$ such that $\forall~s\in[0,t]~\varphi(s)\in\interp{Q}$ and where $\varphi$ is the unique solution to $x'=\theta$ on $[0,t]$. 
Since (by inversion) $(\omega,\subst[\omega]{x}{\varphi(t)})\in{\interps{\pevolvein{x'=\theta}{Q}}{\omega}}$ then also $(\omega,\subst[\omega]{x}{\varphi(t)})\in{\interp{\pevolvein{x'=\theta}{Q}{\mkint{\eta'}}}}$
and finally $(\omega,\nu)\in{\interp{\pevolvein{x'=\theta}{Q}}{\mkint{\eta'}}}$.

\oldcase{$\mknom{\alpha^*;\beta}{\sinv{pre}{\sproof_1}{ind}{\sproof_2}{\sproof}}{\mkfnom{\alpha^*;\beta}{\sproof}}$}
By IH, have $\exists~\beta'~\mknom{\sproof}{\alpha^*;\beta}{\alpha^*;\beta'},$ and
$\mknom{\alpha^*;\beta}{\finally{\sproof}}{\alpha^*;\beta'}$ for the same $\beta'$
and by IH,  $(\omega,\nu)\in{\alpha^*;\beta'}{\mkint{\eta'}}$.

\oldcase{$\mknom{\alpha^*;\beta}{\finally{\sproof}}{\alpha^*;\mkfnom{\beta}{\sproof}}$}
By inversion, $\exists \mu~(\omega,\mu)\in\interp{\alpha^*}$ and $(\mu,\nu)\in\interp{\beta}$.
By IH, $\exists~\beta'~\mknom{\sproof}{\beta}{\beta'}$ and $(\mu,\nu)\in\interp{\beta'}{\mkint{\eta'}}$.
Then since $(\omega,\mu)\in\interp{\alpha^*}$ then also $(\omega,\mu)\in\interp{\alpha^*}{\mkint{\eta'}}$ and thus $(\omega,\nu)\in\interp{\alpha^*;\beta'}{\mkint{\eta'}}$ as desired.

\oldcase{$\mknom{\pevolvein{x'=\theta}{H};\alpha}{\sghost{y}{\theta_2}{\theta_3}{\iproof}}{\pevolvein{y'=\theta,\alpha}{Q};\beta'}$}
By inversion, $\exists \mu~(\omega,\mu)\in\interp{\pevolvein{x'=\theta}{Q}}$ and $(\mu,\nu)\in\interp{\beta}$ and $\mu=\subst[\omega]{x}{\varphi_x(t)}$ for some $t\geq 0$ such that $\forall~s\in[0,t]~\subst[\omega]{x}{\varphi_x(t)}\in\interp{Q}$.
By applying the IH, we then have $\exists~\beta'~\mknom{\sproof}{\pevolvein{x'=\theta,y'=\theta_2}{H};\beta}{\pevolvein{x'=\theta,y'=\theta_2}{H};\beta'},$ and
$\mknom{\pevolvein{x'=\theta,y'=\theta_2};\beta}{\sghost{y}{\theta_2}{\theta_3}{\iproof}}{\pevolvein{x'=\theta}{H};\beta'}$ for the same $\beta'$ and by IH, $(\subst[\omega]{y}{y_0},\nu)\in\interp{\pevolvein{x'=\theta,y'=\theta_2};\beta'}{\mkint{\eta'}}$.
By soundness for dG, $(\omega,\nu)\in\interp{\pevolvein{x'=\theta};\beta'}{\mkint{\eta'}}$.

\oldcase{$\mknom{\pevolvein{x'=\theta}{H};\alpha}{\sinv{pre}{\sproof_1}{ind}{\sproof_2}{\sproof}}{\mkfnom{\pevolvein{x'=\theta}{H};\alpha}{\sproof}}$}
By IH, $\exists~\beta'~\mknom{\sproof}{\pevolvein{x'=\theta}{H};\beta}{\pevolvein{x'=\theta}{H};\beta'},$
 and $\mknom{\alpha^*;\beta}{\finally{\sproof}}{\pevolvein{x'=\theta}{H};\beta'}$
 for the same $\beta'$ and by IH,  $(\omega,\nu)\in\interp{\pevolvein{x'=\theta}{H};\beta'}{\mkint{\eta'}}$.

 \oldcase{$\mknom{\pevolvein{x'=\theta}{H};\alpha}{\finally{\sproof}}{\pevolvein{x'=\theta}{H};\mkfnom{\alpha}{\sproof}}$}
By inversion, $\exists \mu~(\omega,\mu)\in\interp{\pevolvein{x'=\theta}{Q}}$ and $(\mu,\nu)\in\interp{\beta}$.
By IH, $\exists~\beta'~\mknom{\sproof}{\beta}{\beta'}$ and $(\mu,\nu)\in\interp{\beta'}{\mkint{\eta'}}$.
Then since $(\omega,\mu)\in\interp{\pevolvein{x'=\theta}{Q}}$ then also $(\omega,\mu)\in\interp{\pevolvein{x'=\theta}{Q}}{\mkint{\eta'}}$ and thus $(\omega,\nu)\in\interp{\pevolvein{x'=\theta}{Q};\beta'}{\eta'}$ as desired.

Note that, as with the traces themselves, nondeterministic choices and loops discard nominals from their subprograms.
As before, this does not weaken the theorem, it simply means those nominals are local in scope.
\end{proof}

\begin{theorem} Nominalization is sound when applied to proofs that check. 
That is, for all $\tpchk{\G}{H}{\sproof}{[\alpha]\phi}{H_\phi}$ and all \dLN interpretations $\mkint{\eta},$ we have $\G\vdash_{\mkint{\eta}}[\mkfnom{\alpha}{\sproof}]\phi$.
\end{theorem}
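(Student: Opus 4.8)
The plan is to reduce this statement to Kaisar's $\dL$ soundness theorem by way of one structural observation: nominalization never introduces new program behavior. Reading off the definition of $\mknom{\alpha}{\sproof}{\alpha_h}$, every clause either leaves the program unchanged, recurses structurally into a subprogram, inserts a $\dLN$ test $?(t)$ at a named state, or --- in the single differential-ghost clause --- augments an ODE by a \emph{fresh} linear differential ghost equation. None of these adds transitions once the fresh ghost variables are projected away, so for every $\dLN$ interpretation $\mkint{\eta}$ we expect $\interp{\mkfnom{\alpha}{\sproof}}{\mkint{\eta}}$ to be contained in $\interp{\alpha}$ up to fresh ghosts. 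Because $[\alpha]\phi$ is a box modality, this containment propagates the safety property: from validity of $[\alpha]\phi$ one obtains validity of $[\mkfnom{\alpha}{\sproof}]\phi$. This is the mirror image of the preceding theorem (which showed that every run of $\alpha$ \emph{survives} nominalization); jointly they say the nominalized program is behaviorally the original one with named states recorded, and that it still satisfies the postcondition.

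The first step is the containment lemma: for all $\mkint{\eta}$, whenever $\mknom{\alpha}{\sproof}{\alpha_h}$ and $(\omega,\nu)\in\interp{\alpha_h}{\mkint{\eta}}$, there is a run $(\omega,\nu')\in\interp{\alpha}$ with $\nu'$ agreeing with $\nu$ on $\freevars{\G}\cup\freevars{\phi}\cup\freevars{\alpha}$. I would prove this by induction on the derivation of $\mknom{\alpha}{\sproof}{\alpha_h}$ (equivalently, on $\sproof$), reusing the case split already carried out for the preceding theorem. The \kwstate{} clause prepends $?(t)$, which can only delete transitions and never mutates state; \kwcase{}, \kwfocus{}, and the loop and choice clauses return the program unchanged (discarding subprogram nominals); and every other clause recurses into subprograms and composes the resulting inclusions. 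The only clause needing real work is the differential ghost $\mknom{\pevolvein{x'=\theta}{Q};\alpha}{\sghost{y}{\theta_2}{\theta_3}{\iproof}}{\pevolvein{x'=\theta,y'=\theta_2}{Q};\beta'}$: here $y'=\theta_2$ is linear in the fresh $y$, so by soundness of the differential-ghost rule~\cite{DBLP:journals/corr/abs-1104-1987} the augmented ODE exists on exactly the same intervals and agrees with $\pevolvein{x'=\theta}{Q}$ on every variable other than $y$; since $y$ occurs in neither $\G$, $\phi$, nor $\beta'$, erasing $y$ recovers a genuine run of $\pevolvein{x'=\theta}{Q};\alpha$ that matches the nominalized run on all relevant variables.

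The second step invokes Kaisar's soundness theorem on $\tpchk{\G}{H}{\sproof}{[\alpha]\phi}{H_\phi}$, which gives validity of the $\dL$ sequent $\G\vdash[\alpha]\phi$; because $\G$, $\alpha$ and $\phi$ are pure $\dL$ (sequents carry no nominals), this validity holds equally in the interpretation $\mkint{\eta}$, with $\interp{\alpha}{\mkint{\eta}}=\interp{\alpha}$. Combining the two steps: fix $\mkint{\eta}$ and a state $\omega\models_{\mkint{\eta}}\G$, and take any $(\omega,\nu)\in\interp{\mkfnom{\alpha}{\sproof}}{\mkint{\eta}}$. The containment lemma supplies $(\omega,\nu')\in\interp{\alpha}$ with $\nu=\nu'$ on $\freevars{\phi}$; validity of $\G\vdash[\alpha]\phi$ gives $\nu'\in\interp{\phi}{\mkint{\eta}}$; and coincidence for $\dL$ formulas~\cite{DBLP:conf/cade/Platzer15} transfers this to $\nu\in\interp{\phi}{\mkint{\eta}}$ since $\phi$ is nominal-free. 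Hence $\omega\models_{\mkint{\eta}}[\mkfnom{\alpha}{\sproof}]\phi$, and as $\omega$ ranged over all $\mkint{\eta}$-states of $\G$, $\G\vdash_{\mkint{\eta}}[\mkfnom{\alpha}{\sproof}]\phi$.

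I expect the main obstacle to be the fresh-variable bookkeeping threaded through the induction, concentrated in the differential-ghost case: one must confirm that the ghost $y$ never leaks into $\G$, $\phi$, or the residual program $\beta'$, so the ``agreement modulo fresh variables'' invariant is preserved. A secondary subtlety is that, because \kwcase{} and $\alpha^*$ discard their subprograms' nominals, the containment lemma must be stated uniformly over \emph{all} $\dLN$ interpretations $\mkint{\eta}$, not only those in correspondence with the ambient trace --- otherwise the induction hypothesis would not apply to subproofs whose local traces differ from the outer one. This uniformity is inexpensive to obtain here precisely because inserting a test shrinks the transition relation irrespective of how the freshly-named states are interpreted.
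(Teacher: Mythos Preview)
Your proposal is correct and follows essentially the same two-step strategy as the paper: (i) the transitions of $\mkfnom{\alpha}{\sproof}$ are contained in those of $\alpha$ (up to fresh variables), and (ii) Kaisar soundness plus semantic box monotonicity transfers $\G\vdash[\alpha]\phi$ to $\G\vdash_{\mkint{\eta}}[\mkfnom{\alpha}{\sproof}]\phi$. The paper's own proof is a one-liner asserting $\interp{\mkfnom{\alpha}{\sproof}}{}\subseteq\interp{\alpha}{}$ ``regardless of $\eta$'' and then invoking soundness; you unpack this with more care, and in particular your treatment of the differential-ghost clause is more rigorous than the paper's: strictly speaking the subset claim fails on the nose once a fresh $y$ is added to the ODE (the ghosted system mutates $y$ whereas $\alpha$ does not), so your ``agreement on $\freevars{\G}\cup\freevars{\phi}\cup\freevars{\alpha}$'' formulation together with coincidence is the correct way to make the paper's informal claim precise.
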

\begin{proof}
  The  claim holds by observing that the transitions for $\alpha_1$ (which holds easily by induction) are a subset of those for $\alpha$, regardless of $\eta$, and the result holds from soundness for Kaisar:
   $[\alpha]\phi$ implies $[\beta]\phi$ for any $\beta$ where $\interp{\beta}{} \subseteq \interp{\alpha}{}$ by the semantics of the box modality.
\end{proof}



Lastly we show standard soundness and completeness theorems:
\subsection{Soundness and Completeness}
\begin{lemma}[Soundness of Pattern Matching]
If $\mmatch{\G_1}{p}{e}{\G_2}$ then $\interp{\G_1}=\interp{\G_2}$.
That is, pattern-matching never affects the assumptions of a context, only the abbreviations, which are not soundness-critical.
\end{lemma}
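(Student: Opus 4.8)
The plan is to prove the lemma by a routine structural induction on the derivation of $\mmatch{\G_1}{\pat}{e}{\G_2}$, and then extract the stated corollary (that only abbreviations, never assumptions, are touched) from the shape of the cases. First I would sort the matching rules into two families. The \emph{inert} rules return the input context unchanged --- namely $\mmatch{\G}{\freeident}{e}{\G}$ when $\Gamma(\ident)=e$, the wildcard rule $\mmatch{\G}{\wildpat}{e}{\G}$, the two variable-occurrence rules $p(vars)$ and $p(\neg vars)$, and the negation rule $\neg\pat$ (which crucially carries the side condition $\boundvars{\pat}=\emptyset$, so it really does introduce no bindings). For every such rule $\G_2=\G_1$ and there is nothing to prove. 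The remaining \emph{structural} rules either grow the context by exactly one abbreviation entry ($\mmatch{\G}{\freeident}{e}{\rext{\G}{\freeident}{e}}$ when $\ident\notin\G$) or assemble the output context from recursive submatches ($\otimes(\pat,\qat)$, $\pat\cup\qat$ in both its success and fallback forms, and $\pat\cap\qat$); these are handled in the inductive step.

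The heart of the argument --- and the only point that requires care --- is the single rule that enlarges the context, $\mmatch{\G}{\freeident}{e}{\rext{\G}{\freeident}{e}}$ with $\ident\notin\G$. Here I would invoke the fact that $\interp{\G}$ is determined solely by the \emph{assumption} formulas recorded in $\G$, read after abbreviation expansion $\eag{\cdot}$, and is insensitive to the abbreviation entries themselves: an entry $\freeident\colon e$ merely records that the abbreviation variable $\freeident$ stands for $e$; it is not a logical hypothesis and contributes nothing to the conjunction of antecedents. Since the matching side condition guarantees that $\ident$ is fresh for $\G$, no pre-existing assumption of $\G$ mentions $\freeident$, so expanding the assumptions of $\rext{\G}{\freeident}{e}$ yields exactly the same \dL formulas as expanding those of $\G$; hence $\interp{\rext{\G}{\freeident}{e}}=\interp{\G}$. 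Making this precise --- pinning down that abbreviation entries are semantically inert and that freshness guarantees expansions are unchanged --- is the main obstacle; once it is in place the rest is bookkeeping.

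For the structural rules I would simply chain induction hypotheses, exploiting that the lemma is stated for arbitrary input contexts. In the $\otimes$ case the premises are $\mmatch{\G}{\pat}{e}{\G_\pat}$ and $\mmatch{\G_\pat}{\qat}{f}{\G_\qat}$, so the IH gives $\interp{\G}=\interp{\G_\pat}$ and $\interp{\G_\pat}=\interp{\G_\qat}$, and transitivity closes it; the $\pat\cap\qat$ case is identical in shape. For $\pat\cup\qat$ in success form the single premise $\match{\pat}{e}{\G_\pat}$ gives $\interp{\G}=\interp{\G_\pat}$ by IH, and in fallback form the context-producing premise is $\match{\qat}{e}{\G_\qat}$ (the other premise $\nmatch{\pat}{e}$ produces no context), again giving $\interp{\G}=\interp{\G_\qat}$. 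This completes the induction. Finally, the stated corollary follows by inspecting the same case analysis: every matching step either leaves the context fixed or extends it only by an abbreviation entry $\freeident\colon e$, so no assumption is ever added, removed, or altered.
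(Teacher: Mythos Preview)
Your proposal is correct and takes essentially the same approach as the paper's proof, which is a one-line induction on the matching derivation observing that in every case either $\G_1=\G_2$, or $\G_2$ extends $\G_1$ by a definition (abbreviation) only, or $\G_2$ arises from an inductive call. You have simply spelled out the case analysis in detail where the paper leaves it implicit.
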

\begin{proof}
By induction on $\emmatch{\G_1}{p}{e}$.
In every case, $\G_1 = \G_2$ or $\G_1$ adds a definition to $\G_2$ or $\G_2$ comes from an inductive call.
\end{proof}
\begin{lemma}
\label{lem:ghost}
If $\omega\in\interp{\G}$ and $(\omega,\nu)\in\interp{\alpha}$ then $\arb{\nu}{\alpha}\in\interp{\arb{\G}{\alpha}}$, where $\arb{\nu}{\alpha}$ ghosts all $x\in\boundvars{\alpha}$ in $\omega$.
\end{lemma}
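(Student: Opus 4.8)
The plan is to reduce the claim to a formula-by-formula statement and then combine the uniform renaming lemma with the coincidence and bound-effect lemmas of \dL. Write $\boundvars{\alpha}=\{x_1,\dots,x_n\}$ and let $y_1,\dots,y_n$ be the fresh ghost variables used in the definition of abstraction, so that $\arb{\G}{\alpha}$ is $\G$ with each $x_i$ renamed to $y_i$, $\arb{\phi}{\alpha}$ is the analogous renaming of a single formula, and $\arb{\nu}{\alpha}$ is the state that agrees with $\nu$ everywhere except that each fresh $y_i$ now holds the pre-state value $\omega(x_i)$. Since $\omega\in\interp{\G}$ gives $\omega\in\interp{\phi}$ for every $\phi\in\G$, it suffices to show $\arb{\nu}{\alpha}\in\interp{\arb{\phi}{\alpha}}$ for each such $\phi$; the conjunction over $\phi\in\G$ then yields $\arb{\nu}{\alpha}\in\interp{\arb{\G}{\alpha}}$.

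Fix $\phi\in\G$. Because the $y_i$ are fresh --- pairwise distinct and occurring neither in $\phi$ nor in $\alpha$ --- the renaming $x_i\mapsto y_i$ is admissible, so the single-variable bound renaming lemma for \dL \cite{DBLP:conf/cade/Platzer15} applies, and chaining it $n$ times (the renamings do not interfere, each target $y_i$ being fresh) gives: $\arb{\nu}{\alpha}\in\interp{\arb{\phi}{\alpha}}$ iff $\omega''\in\interp{\phi}$, where $\omega''$ is $\arb{\nu}{\alpha}$ with each $x_i$ additionally reset to $\arb{\nu}{\alpha}(y_i)=\omega(x_i)$. It remains to derive $\omega''\in\interp{\phi}$ from $\omega\in\interp{\phi}$ by coincidence, i.e.\ to check that $\omega''$ and $\omega$ agree on $\freevars{\phi}$. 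If a free variable of $\phi$ is some $x_i$, then $\omega''(x_i)=\omega(x_i)$ by construction; a fresh $y_i$ cannot be free in $\phi$; and for any other variable $z$ we have $z\notin\boundvars{\alpha}$, hence $\omega''(z)=\nu(z)$, and the bound-effect lemma for hybrid programs \cite{DBLP:conf/cade/Platzer15} applied to $(\omega,\nu)\in\interp{\alpha}$ gives $\nu(z)=\omega(z)$. Thus $\omega''$ and $\omega$ agree on $\freevars{\phi}$, so coincidence \cite{DBLP:conf/cade/Platzer15} yields $\omega''\in\interp{\phi}$, and unwinding the renaming equivalence gives $\arb{\nu}{\alpha}\in\interp{\arb{\phi}{\alpha}}$, completing the argument.

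The main obstacle is bookkeeping rather than mathematics: pinning down the abbreviated definition of $\arb{\nu}{\alpha}$, packaging the $n$ single-variable renamings as one admissible simultaneous renaming, and confirming that the freshness side conditions needed by the renaming lemma are precisely those guaranteed by the definition of abstraction --- the $y_i$ must be fresh not only for $\alpha$ and $\phi$ but globally, in particular for every formula of $\G$ and for one another. Once that is in place the proof is routine, and ODEs require no special treatment since the bound-effect and coincidence lemmas hold for all hybrid programs.
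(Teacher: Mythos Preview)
Your proof is correct and uses the same ingredients as the paper's---renaming, the bound-effect lemma, and formula coincidence. The paper organizes things slightly differently (it first renames $\omega$ forward to $\arb{\omega}{\alpha}\in\interp{\arb{\G}{\alpha}}$ and then appeals to program coincidence to align the post-state, before finishing with formula coincidence), whereas you pull the renaming back via the substitution lemma and compare directly against $\omega$; your route is a bit more direct since it avoids the program-coincidence step, but the core argument is the same.
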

\begin{proof}
  By renaming, $\arb{\omega}{\alpha}\in\interp{\arb{\G}{\alpha}}$.
  By program coincidence~\cite{DBLP:journals/jar/Platzer16}, there exists $\tilde{\nu}$ where $(\arb{\omega}{\alpha},\tilde{\nu})\in\interp{\alpha}$ and $\tilde{\nu}$ agrees with $\arb{\omega}{\alpha}$ on $\dom{\omega}$.
  Then $\left(\dom{\tilde{\omega}}\setminus\dom{\omega}\right)\cap\boundvars{\alpha}=\emptyset$, so
  by bound effect lemma~\cite{DBLP:journals/jar/Platzer16} $\tilde{\nu}=\arb{\nu}{\alpha}$.
  Since $\arb{\omega}{\alpha}$ and $\arb{\nu}{\alpha}$ agree on ghosts, by formula coincidence $\arb{\omega}{\alpha}\in\interp{\arb{\G}{\alpha}}$.
\end{proof}

\begin{lemma}[Soundness of Forward Proof]
\label{lem:fp-sound}
 If $\fpchk{\G;H}{\fproof}{\Delta}$ then $\G\vdash\Delta$ is valid formula of \dL.
\end{lemma}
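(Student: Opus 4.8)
The plan is to induct on the structure of the forward-chaining proof term $\fproof$, equivalently on the derivation of $\fpchk{\G;H}{\fproof}{\Delta}$, reading $\G\vdash\Delta$ with abbreviations expanded so that both sides are genuine \dL expressions. Each of the three forms of $\fproof$ is matched by one elementary semantic principle: a pattern corresponds to ``an assumption, or a valid builtin schema, is valid under its own hypotheses''; application $(\fproof_1~\fproof_2)$ corresponds to modus ponens; and instantiation $(\fproof~\theta)$ corresponds to universal instantiation. Since the static trace $H$ is never mutated by forward proof, it simply rides along and is used only to resolve nominals occurring inside instantiation terms.

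In the base case $\fproof=\pat$ we have $\Delta=\phi$ for some $\phi\in\Sigma\cup\G$ with $\ematch{\phi}{\pat}$: if $\phi\in\G$ then $\G\vdash\phi$ is immediate because $\phi$ is among the antecedents, and if $\phi\in\Sigma$ then $\phi$ is one of the builtin first-order/propositional schemata, valid by construction of the library $\Sigma$, so $\G\vdash\phi$ holds a fortiori; the side condition $\ematch{\phi}{\pat}$ only filters which fact is used and, by the already-established Soundness of Pattern Matching lemma, introduces no soundness-critical change. For $\fproof=(\fproof_1~\fproof_2)$ the rule supplies $\fpchk{\G;H}{\fproof_2}{\phi}$ and $\fpchk{\G;H}{\fproof_1}{(\phi\limply\psi)}$ with $\Delta=\psi$; two appeals to the induction hypothesis give validity of $\G\vdash\phi$ and of $\G\vdash(\phi\limply\psi)$, and then for every $\omega\in\interp{\G}$ the semantics of $\limply$ forces $\omega\in\interp{\psi}$. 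For $\fproof=(\fproof'~\theta)$ the rule supplies $\fpchk{\G;H}{\fproof'}{\forall x~\phi}$ with $\Delta=\subst[\phi]{x}{\eag{\theta}}$; I would first check that $\eag{\theta}$ is an ordinary \dL term, since expansion merely replaces abbreviations by their values and resolves the nominals in $\theta$ against $H$ by the rules of Section~\ref{sec:trace}, then apply the induction hypothesis together with the term coincidence/substitution lemma for \dL~\cite{DBLP:conf/cade/Platzer15}: for $\omega\in\interp{\G}$ we have $\subst[\omega]{x}{r}\in\interp{\phi}$ for all $r\in\mathbb{R}$, and taking $r$ to be the value of $\eag{\theta}$ at $\omega$ yields $\omega\in\interp{\subst[\phi]{x}{\eag{\theta}}}$.

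I expect the instantiation case to be the only real obstacle. Universal instantiation is sound only when $\subst[\phi]{x}{\eag{\theta}}$ is \emph{admissible} --- no free variable of $\eag{\theta}$ is captured by an assignment, quantifier, or differential on a path to a replaced occurrence of $x$ in $\phi$ --- so I would either appeal to a clash check built into Kaisar's substitution operation or, treating substitution as a partial operation, note that forward-proof checking succeeds only when it is defined. The accompanying bookkeeping --- that expansion commutes with $\limply$, $\forall$, and term substitution so the argument really runs on the expansion of each expression --- is routine but needs to be stated; for it I would simply invoke the definitions of Section~\ref{sec:trace} rather than reprove anything.
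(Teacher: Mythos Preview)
Your proposal is correct and follows essentially the same approach as the paper: induction on the three forms of $\fproof$, with the pattern case handled by hypothesis/sound builtin axioms, the application case by modus ponens, and the instantiation case by universal instantiation. The paper's proof is considerably terser and does not explicitly discuss the admissibility side condition or the role of expansion, but your additional care on those points is warranted and does not diverge from the paper's argument.
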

\begin{proof}
  \oldcase{$\pat$} If $\phi\in\G$ then the conclusion holds by the hypothesis rule, otherwise $\phi\in\Sigma$ and since $\Sigma$ consists only of the axiom schemata of first-order arithmetic, which are sound, the result also holds.

\oldcase{$\fproof_1\ \fproof_2$} Assume $\omega\in\interp{\Gamma}$ then by IH both $\omega\in\interp{\phi\limply\psi}$ and $\omega\in\interp{\phi}$ so by modus ponens $\omega\in\interp{\psi}$ as desired.

\oldcase{$\fproof_1\ \theta$} 
Assume $\omega\in\interp{\G}$ and then by IH
 $\omega\in\interp{\forall x~\phi}$ and
 by forall instantiation/substitution $\omega\in\interp{\subst[\phi]{x}{\theta}}$.
\end{proof}

\begin{theorem}[Soundness of SP]
If $\tpchk{\G}{H}{\sproof}{\Delta}{H''}$ then $\G\vdash\Delta$ is a valid formula of \dL.
Recall that while the LCF architecture ensures the implementation of Kaisar is sound, this soundness theorem is still essential to validate our calculus.
\end{theorem}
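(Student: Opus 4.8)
The plan is to proceed by structural induction on the derivation of $\tpchk{\G}{H}{\sproof}{\Delta}{H''}$, showing in each case that the Kaisar rule at the root mirrors one or more sound inferences of the \dL sequent calculus, so that \dL validity of the sequents in the premises lifts to \dL validity of $\G\vdash\Delta$. Since expressions written in proofs may contain abbreviations and nominal terms while sequents are pure \dL, ``$\G\vdash\Delta$ is valid'' is read as validity of the expanded sequent $\eag{\G}\vdash\eag{\Delta}$, which is genuine \dL; \rref{thm:ntc} together with the Intermediate Nominal Term Correspondence theorem guarantees this expansion is semantically faithful to the sequent-level state tracked by $H$, so from then on it suffices to reason about expanded sequents. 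Three auxiliary results carry most of the remaining book-keeping: the Soundness of Pattern Matching lemma (a matching premise $\mmatch{\G}{p}{e}{\G_1}$ changes only abbreviations, never assumptions, hence never affects soundness), \rref{lem:fp-sound} (a \kwnote{}/\kwusing{} premise produces a forward proof of a genuinely \dL-valid consequence, so cutting it in is sound), and \rref{lem:ghost} (abstracting the context over $\boundvars{\alpha}$, as $\arb{\G}{\alpha}$, only weakens assumptions about variables the program may overwrite, and is therefore sound).

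For the propositional and structural rules the correspondence is direct: \kwassume{} performs implication-right, and---via the implicit reduction of $[?\psi]\phi$ to $\psi\limply\phi$---also handles tests; \kwcase{} is $\land$/$\lor$/$\bimply$-right; \kwfocus{} is the succedent exchange rule (or the inverse of $\neg$-right when focusing an antecedent formula); \kwhave{} is a cut on $\eag{\psi}$; \kwlet{} leaves the assumptions untouched, binding only an abbreviation, by Soundness of Pattern Matching; and \kwnote{} is a cut on the forward-proved fact, sound by \rref{lem:fp-sound}. The leaves are closed by \kwshow{}: the $\rcf$ method is sound because it invokes a sound $\folr$ decision procedure on the goal assisted by the selected facts $\upfacts{{\pat}s}{{\fproof}s}$, and $\folr$-provability entails \dL-validity; $\closeid$ is the hypothesis rule, since the conclusion must already appear in $\upfacts{{\pat}s}{{\fproof}s}$; and $\auto$ is sound because it is realized entirely by the sound LCF-style core of \KeYmaeraX.

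For the structured symbolic execution rules, each mirrors a \dL axiom or derived rule applied to the head modality of the first succedent formula. The two \kwassign{} rules instantiate the $[:=]$ axioms---the substitution form under the stated admissibility side condition, the equality form otherwise; \kwassign{}~$\prandom{x}$ instantiates the nondeterministic-assignment axiom $[\prandom{x}]\phi\bimply\forall x\,\phi$; the \kwcase{} rule for $\alpha\cup\beta$ instantiates $[\cup]$; Hoare-style composition (\kwmid{}) is the composition axiom $[\alpha;\beta]\phi\bimply[\alpha][\beta]\phi$ followed by a cut on the intermediate condition, with the context abstracted over $\boundvars{\alpha}$ via \rref{lem:ghost}; \kwsolve{} instantiates the ODE solution axiom, using that the supplied $y$ is the solution of $x'=\theta$ and reformulating the evolution-domain constraint as $\forall s\in[0,t]\,Q(s)$; the \kwinv{}/\kwfinally{} chain instantiates the loop-induction rule (and the differential-invariant rule when the head is an ODE), again abstracting the context and antecedent over $\boundvars{\alpha}$; and \kwghost{} instantiates the differential-ghost rule, whose soundness rests on the checked side condition that $\eag{\theta_{y'}}$ is linear in $y$. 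The \kwstate{} rule changes only the trace ($H\mapsto H,t$) and leaves $\G\vdash\Delta$ unchanged, so its soundness is immediate from the induction hypothesis; the diamond rules are handled symmetrically.

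The main obstacle is coordinating the trace and nominal machinery with ordinary \dL validity in the execution rules, where the sequent handed to a subproof is obtained by substituting expanded nominal terms---e.g. $\subst[\phi]{x}{\eeag{\G}{\theta}}$ in the substitution \kwassign{} rule, or $\subst[\phi]{x}{y(t)}$ in \kwsolve{}. One must verify that each such substituted sequent is literally an instance of the relevant \dL axiom scheme, which requires that expansion commutes with substitution and that the $\corr{\eta}{H}$ correspondence is preserved across the rule---precisely the content of the Intermediate Nominal Term Correspondence theorem and \rref{lem:intermed-ghost}. The ODE cases (\kwsolve{}, differential-invariant \kwinv{}, \kwghost{}) are the most delicate, since soundness there additionally leans on analytic facts about ODE solutions and on the linearity side condition; these are imported wholesale from the cited soundness results for \dL's differential reasoning.
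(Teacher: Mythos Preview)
Your overall approach---structural induction on the proof-checking derivation, with each Kaisar rule shown to mirror a sound \dL inference, using the pattern-matching lemma, \rref{lem:fp-sound}, and \rref{lem:ghost}---matches the paper's proof, and the case-by-case correspondence you sketch is accurate.

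However, your invocation of \rref{thm:ntc}, the Intermediate Nominal Term Correspondence theorem, and \rref{lem:intermed-ghost} is misplaced. Those results establish the \emph{semantic faithfulness} of nominals---that $t(\theta)$ expands to a term whose sequent-level value equals the value of $\theta$ at the named program state---which is a separate concern from soundness. The sequents $\G\vdash\Delta$ in the judgment already contain only pure \dL formulas: whenever a user-supplied extended expression enters the sequent, the proof-checking rules apply the expansion $\eag{\cdot}$ first (see e.g.\ the \kwhave{} and \kwassign{} rules), so no ``reading validity as $\eag{\G}\vdash\eag{\Delta}$'' is needed, and there is no issue of expansion commuting with substitution to resolve. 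The trace $H$ plays no role in the conclusion of the theorem and is irrelevant to whether $\G\vdash\Delta$ is valid; the paper's proof never touches $\corr{\eta}{H}$. Indeed, soundness would hold even if the nominal mechanism expanded $t(\theta)$ to arbitrary well-formed \dL terms---the calculus would then be useless but still sound. Drop those dependencies and your argument is essentially the paper's.
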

\begin{proof} 
(Note: This proof has a lot of cases. Even for the extended version of the proof, we leave out the diamond rules, which are analogous to the box rules, and the implict rules, which follow directly from soundness of propositional logic and a handful of \dL axioms).
  By induction on the derivation.

\oldcase{$\bassert{x}{\phi}{\sproof}$} 
By definition of $[?(\psi)]\phi$ suffices to show $\G\vdash(\psi\limply\phi),\Delta$ is valid, i.e.
$\G,\psi\vdash\phi,\Delta$.
This holds directly by IH.

\oldcase{$\have{x}{\psi}{\sproof_1}{\sproof_2}$}
Assume $\omega\in\interp{\G}$.
By IH1, $\omega\in\interp{\psi,\Delta}$.
If $\omega\in\interp{\Delta}$ then we're done.
Else $\omega\in\interp{\psi}$ so $\omega\in\interp{\G,\psi}$.
Then IH2 is applicable and $\omega\in\interp{\Delta}$ as desired.
 
\oldcase{$\slet{\pat}{\tilde{e}}{\sproof}$} Note the context $\G'$ produced by pattern matching contains only definitions, so for all $\omega\in\interp{\G}$ also have $\omega\in\interp{\G,\G'}$ and the IH applies directly, giving $\omega\in\interp{\Delta}$ as desired.

\oldcase{$\slet{t(?x)}{\tilde{e}}{\sproof}$} Symmetric.

\oldcase{$\snote{x}{\fproof}{\sproof}$}
Assume $\omega\in\interp{\G}$ then by Lemma~\ref{lem:fp-sound}, $\omega\in\interp{\psi}$ and thus $\omega\in\interp{\psi\land\G}$ and by IH, $\omega\in\interp{\Delta}$ as desired.

\oldcase{$\sshow{x}{\phi}{\uproof}$} 
  First note all facts in the {\tt using} clause hold.
  Those on the LHS of the union are in $\G$ and thus hold by hypothesis rule.
  Those on the RHS are the results of forward proofs and thus hold by soundness of forward proof.
  Thus $\omega\in\interp{\G'}$.
  Proceed by cases on the proof method.

\oldcase{{\tt id}} Holds by hypothesis rule.

\oldcase{$\mathbb{R}$} Holds by the side condition that $\G\limply\Delta$ is valid in first-order logic over the reals and the fact that \dL conservatively extends FOL.

\oldcase{{\tt auto}} We do not give a precise proof rule for this case since {\tt auto} is ever-changing heuristic automation, we merely note that it is the result of operations in an LCF-style which itself has been verified, and thus soundness follows as a result.
        
 \oldcase{$\tpchk{\G}{H}{\bcase{p_\phi}{\sproof_\phi}{p_\alpha}{\sproof_\alpha}}{[\alpha^*]\phi,\Delta}{H}$}
 By the IH, $\G_\phi\vdash\phi,\Delta$ and $\G_\psi\vdash[\alpha][\alpha^*]\phi,\Delta$ where $\match{p_\phi}{\phi}{\G_\phi}$ $\match{p_\alpha}{\alpha}{\G_\alpha}$. By Lemma~\ref{lem:pat-mon}, $\G$ and $\G_\psi$ contain the same facts, as do $\G$ and$\G_\alpha$, so  $\G\vdash\phi,\Delta$ and $\G\vdash[\alpha][\alpha^*]\phi$.
 Then by the semantics of loops (or soundness of the $[*]$ axiom of \dL), $\G\vdash[\alpha^*]\phi,\Delta$. 

\oldcase{$\sstate{t}{\sproof}$}
Directly by the IH on  $\tpchk{\G}{\hhtime{H}{t}}{\sproof}{\phi,\Delta}{H_\phi}$.

\oldcase{$\sfocus{\pat}{\sproof}$ on the left}
By the IH, $\G_1,\G_2\vdash\neg\phi,\Delta$.
The inverse of  $\neg{R}$ is sound (derivable from cut and double negation elimination and $\neg{L}$),
so  $\G_1,\phi\G_2\vdash\Delta$ as desired.

\oldcase{$\sfocus{\pat}{\sproof}$ on the right}
By the IH, $\G\vdash\phi,\Delta_1,\Delta_2$ and
$\G\vdash\Delta_1,\phi,\Delta_2$ follows by soundness of exchange.

\oldcase{$\bcase{\pat_\alpha}{\sproof_\alpha}{\pat_\beta}{\sproof_\beta}$}
Assume $\omega\in\interp{\G}$.
Note $\G_1$ and $\G_2$ contain only pattern-matching definitions.
So we can apply both IHs and get $\omega\in\interp{[\alpha]\phi\lor\Delta}$ and $\omega\in\interp{[\beta]\phi\lor\Delta}$.
In the case that $\omega\in\interp{\Delta}$ then we're done, else $\omega\in\interp{[\alpha]\phi}$ and $\omega\in\interp{[\beta]\phi}$ and by the semantics of $\cup$ then $\omega\in\interp{[\alpha\cup\beta]\phi}$ as desired.

\oldcase{$\bassign{x}{\pat_\theta}{\sproof}, x\notin\boundvars{\phi}$}
Assume $\omega\in\interp{\G}$. 
By IH have $\omega\in\interp{\subst[\phi]{x}{\theta},\Delta}$.
If $\omega\in\interp{\Delta}$ we're done, else by substitution  $\subst[\omega]{x}{\interp{\theta}{\omega}}\in\interp{\phi}$ and by definition  $\omega\in\interp{[x:=\theta]\phi}$ as desired.

\oldcase{$\bassign{x}{\pat_\theta}{\sproof}, x\in\boundvars{\phi}$}
Assume $\omega\in\interp{\G}$. 
Define $\omega_1=\subst[\left(\omega,x_i=\omega(x)\right)]{x}{\interp{\omega}{\theta}}$ for fresh $x_i$.
Observe by construction and freshness of i.e. and soundness of renaming that $\omega_1\in\interp{\subst[\G]{x_i}{x}}$ and $\omega_1(x)=\interps{\subst[\theta]{x}{x_i}}{\omega_1}$ so the IH applies, giving
$\omega_1\in\interp{\phi,\subst[\Delta]{x_i}{x}}$ which again by renaming and construction of $\omega_1$ gives either $\subst[\omega]{x}{\interps{\theta}{\omega}}\in\interp{\phi}$ or $\omega\in\interp{\Delta}$ and in either case the result holds.

\oldcase{$\bassignany{x}{\sproof}$}
Assume $\omega\in\interp{\G}$. 
Define $\omega_1=\subst[\left(\omega,x_i=\omega(x)\right)]{x}{r}$ for fresh $x_i$ and some $r\in\mathbb{R}.$
Observe by construction and freshness of i.e. and soundness of renaming that $\omega_1\in\interp{\subst[\G]{x_i}{x}}$ so the IH applies, giving
$\omega_1\in\interp{\phi,\subst[\Delta]{x_i}{x}}$ which again by renaming and construction of $\omega_1$ gives either $\subst[\omega]{x}{r}\in\interp{\phi}$ or $\omega\in\interp{\Delta}$ and in either case the result holds (since this is true for all such $r$).

\oldcase{$\tpchk{\G}{H}{\bcon{\psi}{\sproof_1}{\sproof_2}}{[\alpha]\phi,\Delta}{H_\alpha}$}
By IH, $\G\vdash[\alpha]\eag{\psi},\Delta$ and $\arb{\G}{\alpha},\eag{\psi}\vdash\phi,\arb{\Delta}{\alpha}$.
Assume $(\omega,\nu)\in\interp{\G}\setminus\interp{\Delta}$ (else the sequent is trivially true).
Then by IH1 $\nu\in\interp{\eag\psi}$.
Also by Lemma~\ref{lem:ghost} $\nu\in{\interp{\arb{\G}{\alpha}}}\setminus\interp{\arg{\Delta}}{\alpha}$ and so by IH2 $\nu\in\interp{\phi}$.
Since this holds for all such $\nu$ then $\omega\in\interp{\G\vdash[\alpha]\phi,\Delta}$ as desired.

\oldcase{$\bsolve{\pat_{ode}}{\pat_{t}}{\pat_{dom}}{\sproof}$}
Assume $\omega\in\interp{\G}$, then let $\nu=\subst[\omega]{x}{\varphi(t)}$ for any $t\geq 0$ where $\forall~s\in[0,t]~\subst[\omega]{x}{\varphi(s)}\in\interp{Q}$ and $\varphi$ is the solution for $x'=\theta$.
Then let $t$ be a fresh variable which agrees everywhere with the above time $t$.
Note that $\omega\in\interp{t\geq{0}}$ and $\omega\in\interp{\forall~s\in[0,t]~Q}$ by choice of $t$.
Then the IH applies and $\omega\in\interp{\subst[\phi]{x}{y(t)},\Delta}$.
If $\omega\in\interp{\Delta}$ we're done, else note by choice of $\nu$ we have $\nu\in\interp{\phi}$ by substitution  and this holds for all such $\nu$ and so $\omega\in\interp{[\pevolvein{x'=\theta}{Q}]\phi}$ as desired.

The cases for invariants are proven by simultaneous induction, where for the additional context of invariants $\delta\in\Delta$ we assume $[\alpha]\delta$ instead of $\delta$.

\oldcase{$\sinv{x}{J}{\sproof_{pre}}{sproof_{inv}}{\iproof}$, loop $\alpha^*$}
Assume $\omega\in\interp{\Gamma}$.
By IH1, $\omega\in\interp{\psi,\Delta}$. 
If $\omega\in\interp{\Delta}$ we're done, else $\omega\in\interp{\psi}$.
By Lemma~\ref{lem:ghost} $\arb{\omega}{\alpha}\in\interp{\arb{\G}{\alpha}}$ and
$(\arb{\omega}{\alpha},\arb{\omega}{\alpha})\in\interp{\alpha^*}$
and $\arb{\omega}{\alpha}\in\interp{\arb{\G}{\alpha}}$.
Thus by IH2 $\arb{\omega}{\alpha}\in\interp{[\alpha]\phi,\arb{\Delta}{\alpha}}$.
If $\arb{\omega}{\alpha}\in\interp{\arb{\Delta}{\alpha}}$ then by coincedence and renaming $\omega\in\interp{\Delta}$ and we're done.
Else $\arb{\omega}{\alpha}\in\interp{[\alpha]\phi}$ and this holds for all such $\arb{\omega}{\alpha}$ so $\arb{\omega}{\alpha}\in\interp{[\alpha^*](\phi\limply[\alpha]\phi)}$ and by renaming so is $\omega$, and by the induction axiom $\omega\in{\interp{[\alpha^*]\phi}}$.

\oldcase{$\finally{\sproof}$, loop $\alpha^*$}
Let $\omega\in\interp{\G}$ and let $\nu$ such that $(\omega,\nu)\in\interp{\alpha^*}$.
Then $\arb{\omega}{\alpha}\in\interp{\arb{\G}{\alpha}}$ 
and $\arb{\nu}{\alpha}\in\interp{\arb{\G}{\alpha}}$ by Lemma~\ref{lem:ghost}.
By IH $\arb{\nu}{\alpha}\in\interp{\phi,\arb{\Delta}{\alpha}}$.
In the latter case by renaming we're done.
Else since this holds for all possible $\nu$ we have $\omega\in\interp{[\alpha^*]\phi}$ as desired (by renaming again also).

\oldcase{$\sinv{x}{J}{\sproof_{pre}}{\sproof_{inv}}{\iproof}$, ODE $\alpha=\pevolvein{x'=\theta}{Q}$}
Assume $\omega\in\interp{\G\land[\alpha]\Delta}$ then by IH $\omega\in\interp{\psi,\Delta}$.
If $\omega\in\interp{\Delta}$ we're done else $\omega\in\interp{\psi}$.
Then let $\nu=\subst[\omega]{x}{\varphi(t)}$ for any $t\geq 0$ where $\forall s~\in[0,t]\subst[\omega]{x}{\varphi(s)}\in\interp{Q}$ and $\varphi$ is the solution to $x'=\theta$ for time $t$.
By renaming and unpacking semantics in $\Delta$ then $\nu\in\interp{\G_1}{\Delta}$ and so by IH $\omega\in\interp{\subst[(\phi)']{x'}{\theta},\Delta_1}$.
In the case $\omega\in\interp{\Delta_1}$ then by renaming $\omega\in\interp{\Delta}$ and we're done, else by soundness of differential effects, $\nu\in\interp{(\phi)'}$.
Since this holds for any such $\nu$, then by soundness of differential invariants, $\nu\in\interp{\psi}$ for all such $\nu$ and so $\omega\in\interp{[\alpha]\psi}$ so we can apply the second IH from which the result follows.

\oldcase{$\finally{\sproof}$, ODE $\pevolvein{x'=\theta}{Q}$}
Assume $\omega\in\interp{\G}$.
Then let $\nu=\subst[\omega]{x}{\varphi(t)}$ for any $t\geq 0$ where $\forall s~\in[0,t]\subst[\omega]{x}{\varphi(s)}\in\interp{Q}$ and $\varphi$ is the solution to $x'=\theta$ for time $t$.
By renaming $\nu\in\interp{\G_1}$ and by diff weakening $\nu\in\interp{Q}$ and by unpacking semantics $\nu\in\interp{\Delta}$ so the IH applies giving $\nu\in\interp{\phi}$.
Since this held for all $\nu$ then by renaming again $\omega\in\interp{[\alpha]\phi}$ as desired.

\oldcase{$\sghost{y}{\theta_2}{\theta_3}{\iproof}$}
Let $\omega_1=\omega,y=\interp{\theta_3}{\omega}$.
Then by coincidence~\cite{DBLP:conf/cade/Platzer15} $\omega_1\in\interp{\G,\Delta}$ and by construction $\omega_I\in\interp{y=\theta_3}$
So by IH, $\omega_1\in\interp{\pevolvein{x'=\theta_1,y'=\theta_2}{Q}}$
By linearity of $\theta_2$ the existence interval of $\pevolvein{x'=\theta_1,y'=\theta_2}{Q}$ agrees with $\pevolvein{x'=\theta_1}{Q}$ so they have transitions for all the same times $t$.
And since $y$ was fresh they agree on all variables but $y$ and by coincidence~\cite{DBLP:conf/cade/Platzer15} $\nu\in\interp{\phi}$ whenever $\nu_1\in\interp{\phi}$ and so by renaming $\omega\in\interp{[\pevolvein{x'=\theta_1}{Q}]\phi}$ as desired.


\end{proof}

\newcommand{\scent}{\vdash_{\rm SC}}
\newcommand{\kent}{\vdash_{\rm K}}
\newcommand{\nnf}[1]{{\tt nnf}(#1)}
\newcommand{\fseq}[1]{\G\vdash{#1},\Delta}

\newcommand{\aseq}[1]{\G,{#1}\vdash\Delta}

\begin{theorem}[Completeness]  Kaisar is complete with respect to sequent calculus for \dL, that is for all $\Gamma,\Delta$ if $\Gamma\vdash\Delta$ is provable in \dL sequent calculus then  $\tpchk{\G}{H}{\sproof}{\Delta}{H_\sproof}$ for some $H,H_\sproof,\sproof$.
\end{theorem}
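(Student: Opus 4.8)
The plan is to induct on the given \dL sequent-calculus derivation of $\Gamma\vdash\Delta$, building a Kaisar structured proof whose top-level construct simulates the last rule of that derivation and whose subproofs come from the (inductively obtained) proofs of the premises. The trace component is irrelevant to provability: take $H=\hemp{}$ and let the checker thread traces automatically, since every symbolic-execution rule of \rref{sec:ddk} and \rref{sec:dlk} passes $H$ through transparently. The base cases are the leaves of the \dL proof, which close either by the real-arithmetic oracle on a valid \folr{} sequent --- discharged by \kwshow{}~$\_$~\kwby{}~$\mathbb{R}$ --- or by the identity axiom --- discharged by \kwshow{}~$\_$~\kwby{}~$id$ --- after first using \kwfocus{} and, if needed, a cut (\kwhave{}-ing $\bigvee\Delta$) to bring the leaf into the single-conclusion shape \kwshow{} expects.

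\textbf{Propositional and first-order rules.} Each \emph{right} rule for $\land,\lor,\limply,\bimply$ is reproduced directly by the matching \kwcase{}/\kwassume{} construct of \rref{sec:fok}, and Cut by \kwhave{}. The corresponding \emph{left} rules are obtained through \kwfocus{}: focusing an antecedent formula $\phi$ moves it into the succedent as its negation-normal form $\neg\phi$, whereupon the right rules decompose it, and, combined with auxiliary cuts, this also handles case splits on antecedent disjunctions. Quantifier steps split in two: when the quantified formula is first-order the subtree below is a \folr{} validity closed outright by $\mathbb{R}$; when modalities occur under the quantifier we use the \dL equivalences $\forall x\,\psi\equiv[\prandom{x}]\psi$ and $\exists x\,\psi\equiv\langle\prandom{x}\rangle\psi$, so Skolemization becomes the nondeterministic-assignment rule and witnessing becomes forward-proof instantiation $(\fproof~\theta)$.

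\textbf{Modal rules.} Each box rule of the \dL calculus is simulated by its Kaisar symbolic-execution counterpart --- $[:=]$ by \kwassign{} (the substitution or the equality branch, exactly as $[:=]sub$ and $[:=]eq$ in \rref{sec:trace}), $[\prandom{x}]$ by the nondeterministic-assignment rule, $[?]$ by \kwassume{}, $[\cup]$ by the \kwcase{} box rule, $[;]$ by the implicit sequencing reduction (or \kwmid{}-style Hoare composition), loop induction by \kwinv{}/\kwfinally{}, differential induction by the ODE form of \kwinv{}, differential ghosts by \kwghost{}, and a solvable ODE by \kwsolve{}; the diamond rules are symmetric. Because the box and diamond rules act on the first succedent formula, a modality elsewhere in the succedent --- or in the antecedent --- is first brought into position with \kwfocus{}; focusing a box out of the antecedent yields a diamond in the succedent and conversely, which is how the left modal rules, and (as anticipated in \rref{sec:ddk}) liveness goals and modalities inside tests, are recovered. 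Weakening and permutation are absorbed, since spare assumptions and spare succedent formulas may simply be ignored, and succedent \kwfocus{} is exactly exchange.

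\textbf{Main obstacle.} The delicate part is not any single construct but the systematic reduction of the two-sided, multi-succedent \dL calculus to Kaisar's succedent-oriented primitives. One must verify that \kwfocus{} --- together with the implicit negation-normalization, the implicit $[;]$/$[?]$ reductions, cuts, and forward-chaining --- faithfully reproduces every left rule, every rule acting on a non-principal succedent formula, every antecedent case split, and the quantifier-as-nondeterministic-assignment encoding, and that each such gadget terminates in subgoals the induction hypothesis can discharge. A lighter reconciliation is that Kaisar's loop, Hoare-composition and differential rules abstract over $\boundvars{\alpha}$ (the $\arb{\G}{\alpha}$, $\arb{\Delta}{\alpha}$ of \rref{sec:ddk}), so each simulated premise carries mildly abstracted hypotheses and succedent; this loses nothing because the \dL loop and differential rules abstract over exactly those variables, and soundness of the abstraction step is Lemma~\ref{lem:ghost}. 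Finally, the patterns supplied to \kwcase{}, \kwassign{}, \kwsolve{} and friends must be chosen (as wildcards, or as the relevant subexpressions) so that the matching side conditions hold --- routine, but needed for the construction to typecheck.
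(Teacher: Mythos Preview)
Your high-level plan matches the paper's approach, but there is a genuine gap in the inductive structure. You propose to ``induct on the given \dL sequent-calculus derivation,'' yet the gadget you describe for left rules---\kwfocus{} an antecedent $\phi$ to the succedent as $\neg\phi$, normalize, then apply right rules---produces a sequent that is \emph{not} a premise of the original derivation. For example, to simulate $\langle\alpha\cup\beta\rangle\phi$ on the left, focusing yields $\Gamma\vdash[\alpha\cup\beta]\neg\phi,\Delta$; you must first argue that this new sequent is SC-provable (from the given premises of the original left rule) and then invoke the IH on \emph{that} proof. But that proof is not a strict subderivation of the input, so structural induction on derivations does not directly apply. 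The paper resolves this by defining a $7$-level lexicographic well-founded order $\prec$ on sequents---number of antecedent modalities first, then counts of $\cup$, $*$, $;$, total modalities, symbols under negation, and connectives---and inducting on SC proofs under $\prec$. Each focus-plus-normalize step strictly decreases this measure (most importantly, moving a modality out of the antecedent drops the first component), which is what makes ``each such gadget terminates in subgoals the induction hypothesis can discharge'' actually true. Your proposal names the obstacle but does not supply the measure, and without it the induction is not well-founded.

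A second, smaller gap is the ODE case. Kaisar's invariant-proof grammar forces the rigid shape $\kwghost^{*}\,\kwinv^{*}\,\kwfinally$, whereas SC proofs interleave DG, DC, DI, DW freely. The paper handles this with a separate normalization lemma: every SC ODE proof can be rewritten into a \emph{linear-normal} form (all DGs, then all DCs each justified by a single DI, then DW), and Kaisar is complete for linear-normal proofs. Your one-line ``differential induction by the ODE form of \kwinv{}, differential ghosts by \kwghost{}'' does not account for this reshuffling, and without it an SC proof that, say, applies DC before DG has no direct Kaisar image.
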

(Note: This proof has a lot of cases. Even for the extended version of the proof, we present only the cases for right rules for boxes and left rules for diamonds --- the other cases are analogous).
  The proof proceeds by induction on sequent calculus proofs of \dL, but we first establish several preliminary observations resolving the key differences between sequent calculus and Kaisar.

\textbf{Observation 1:} Rules such as $\forall$ instantiation require specifying an input.
Inputs in Kaisar are extended terms, thus completeness requires all terms are expressible as extended terms (i.e. expansion is surjective).
This is true when all states are named, in which case we call the trace \emph{complete}, a property we maintain as an invariant.
\begin{proof}
  By induction on the term $\theta$ we wish to express.
  All cases except variables hold by IH.
  In the (ghost or program) variable case, produce a nominal for the state at which the name was introduced.
\end{proof}

\textbf{Observation 2:} All Kaisar proofs produced in this proof support weakening, so if $\tpchk{\G_1}{H}{\sproof}{\Delta_1}{H_\sproof}$ then  $\tpchk{\G_1,\G_2}{H}{\sproof}{\Delta_1,\Delta_2}{H_\sproof}$ for any contexts $\Gamma_2,\Delta_2$ that introduce only fresh variables, i.e. contexts where $\dom{\G_1}\cap\dom{\G_2}=\dom{\Delta_1}\cap\dom{\Delta_2}=\emptyset$.
\begin{proof}
  By inspection on the cases of the completeness proof.
  Weakening could only fail if adding assumptions caused a pattern-match to be ambiguous, but we fully disambiguate all patterns.
\end{proof}

\textbf{Observation 3:} Kaisar is complete even ignoring forward proof, abbreviations and {\tt using} clauses.
While these features are essential in practical usage, they will not be needed in this proof.

An essential part of the proof is our choice of induction metric, because we often induct on formulas that are not strict subformulas of the input.
We define a total ordering on sequents $(\G_1\vdash\Delta_1)\prec(\G_2\vdash\Delta_2)$ as the lexicographic ordering of:
\begin{enumerate}
\item The number of formulas containing modalities in the antecedent.
\item The number of nondeterministic choices $\alpha \cup \beta$.
\item The number of loops $\alpha^*$
\item The number of compositions $\alpha;\beta$
\item The total number of modalities $\dbox{\alpha}{\phi}$ or $\ddiamond{\alpha}{phi}$.
\item The number of symbols occuring under some negation (i.e. sum of the number of symbols in $\phi$ for each $\neg(\phi)$)
\item The total number of formula connectives, (which does not include the comma separator, nor term connectives.)
\end{enumerate}
Rule (1) ensures the {\tt focus} rule can be used to implement antecedent structured execution.
Rules (2-4) support the cases for the compound programs and (5) supports the atomic programs.
Rule (6) supports the negation-normal form implicit rules, e.g. this definition ensures $(\G\vdash\neg{P}\lor\neg{P},\Delta)\prec(\G\vdash\neg(P\land{Q}),\Delta)$
Rule (7) supports the propositional and quantifier cases.

  The main proof now proceeds by induction on sequent calculus proofs of \dL under $\prec$.
\begin{proof}
In each case, let $\phi$ refer to the formula acted upon by the SC rule.
The structured execution rules assume a formula in the first succedent position, thus (at the beginning of each case) we immediately pull $\phi$ to first succedent position with {\tt focus}.
Well-ordering is preserved here: For right rules, the metric $\prec$ is unchanged when acting on the succedent.
For left rules, it strictly decreases by Rule (1).

\oldcase{cut} Can be assumed not to happen by admissibility of discrete cut elimination, but even without that, reduces to {\tt have}.

\oldcase{$\phi\in\folr$} By decidability of $\folr$, $\tpchk{\G}{H}{\sshow{x}{\_}{\tt auto}}{\phi,\Delta}{H}$.

\oldcase{$\phi\in\G$} By hypothesis rule, $\tpchk{\G,\phi}{H}{\sshow{x}{\_}{\tt id}}{\phi,\Delta}{H}$.

\oldcase{$\phi=\phi_1\land\phi_2$} 
By invertibility of the $\land{R}$ sequent calculus rule, $\G\vdash{\phi_1},\Delta$ and $\G\vdash{\phi_2},\Delta$ are derivable.
By IH, $\exists \sproof_1,\sproof_2$ where $\tpchk{\G}{H}{\sproof_1}{\phi_1}{H}$ and $\tpchk{\G}{H}{\sproof_2}{\phi_2}{H}$, from which we apply the case rule for conjunctions on the right: $\tpchk{\G}{H}{\bcase{\phi_1}{\sproof1}{\phi_2}{\sproof_2}}{\phi_1\land\phi_2,\Delta}{H}$.

\oldcase{$\phi=\phi_1\lor\phi_2$} Let $\Delta_1=\phi_1,\phi_2,\Delta$. 
Notice $\Delta_1\prec{\phi_1\lor\phi_2,\Delta}$ 
so apply the IH yielding $\tpchk{\G}{H}{\sproof}{\phi_1,\phi_2,\Delta}{H'}$ 
then apply the case rule for disjunction on the right and we get: 
$\tpchk{\G}{H}{\pcaseol{\phi_1}{\phi_2}{\sproof}}{\phi_1\lor\phi_2,\Delta}{H'}$.

\oldcase{$\phi=\neg\phi_1$} 
By rule (6) we apply whichever negation-normal form (NNF) implicit rule applies based on the shape of $\phi$.
Because SC is sufficient to implement NNF normalization, if $\G\vdash\phi,\Delta$ is true in SC then $\G\vdash\nnf{\phi},\Delta$ is as well, and is simpler, so true in Kaisar as well, then by the implicit normalization rules of Kaisar, $\fseq{\phi}$ is as well.

\oldcase{$\phi=\phi_1\limply\phi_2$}
By IH (applicable because we removed propositional connectives), $\exists \sproof~ \tpchk{\G,\phi_1}{H}{\sproof}{\phi_2,\Delta}{H'}$,
Then $\tpchk{\G}{H}{\bassert{x}{\phi_1}{\sproof}}{\phi_1\limply\phi_2,\Delta}{H'}$

\oldcase{$\phi=\phi_1\bimply\phi_2$}
By IH (applicable because we removed propositional connectives), both $\tpchk{\G,\phi_1}{H}{\sproof_1}{\phi_2,\Delta}{H_1}$ and $\tpchk{\G,\phi_2}{H}{\sproof_2}{\phi_1,\Delta}{H_2}$
and then we apply the casing form for equivalences:
$\tpchk{\G}{H}{\bcase{\phi_1\limply\phi_2}{\bassert{x_1}{\phi_1}{\sproof_1}}{\phi_2\limply\phi_1}{\bassert{x_2}{\phi_2}{\sproof_2}}}{\phi_1\bimply\phi_2,\Delta}{H}$.

\oldcase{$\phi=\forall~x~\phi_1$} Symmetric with the case for $[\prandom{x}]\phi_1$, except the reason for well-foundedness is that we removed a quantifier and added only term connectives.

\oldcase{$\phi=\exists~x~\phi_1$} Symmetric with the case for $\ddia{\prandom{x}}{\phi_1}$.

\oldcase{$\phi=[x:=\theta]\phi_1$ where $x\notin\boundvars{\phi_1}$ }
By IH (because we remove a modality), we then have $\exists \sproof~\tpchk{\G}{\hhsub{H}{x}{\theta}}{\sproof}{\subst[\phi_1]{x}{\theta},\Delta}{H'}$.
Then $\tpchk{\G}{H}{\sstate{t}{\bassign{x}{\theta}{\sproof}}}{[x:=\theta]\phi_1,\Delta}{H'}$ and $\hhsub{\hhtime{H}{t}}{x}{\theta}$ is a complete history and the inner proof still holds by weakening (i.e. addition of state namesm).

\oldcase{$\phi=[x:=\theta]\phi_1$ where $x\notin\freevars{\G,\Delta,\theta}$}
By IH (because we remove a modality), we then have $\exists \sproof~\tpchk{\G}{\hheq{H}{x}{x_i}{\theta}}{\sproof}{\phi_1,\Delta}{H'}$.
Then $\tpchk{\G}{H}{\sstate{t}{\bassign{x}{\theta}{\sproof}}}{[x:=\theta]\phi_1,\Delta}{H'}$ and $\hheq{\hhtime{H}{t}}{x}{\theta}$ is a complete history and the inner proof still holds by weakening (i.e. addition of state names).

\oldcase{$\phi=[\prandom{x}]\phi_1$ where $x\notin\freevars{\G,\Delta,}$}
By IH (because we remove a modality), we then have $\exists \sproof~\tpchk{\G}{\hhany{H}{x}{x_i}}{\sproof}{\phi_1,\Delta}{H'}$.
Then $\tpchk{\G}{H}{\sstate{t}{\bassignany{x}{\sproof}}}{[\prandom{x}]\phi_1,\Delta}{H'}$ and $\hhany{\hhtime{H}{t}}{x}{x_i}$ is a complete history and the inner proof still holds by weakening (i.e. addition of state names).

\oldcase{$\phi=[\alpha;\beta]\phi_1$} 
By IH (because we remove a sequential composition), $\exists \sproof~\tpchk{\G}{H}{\sproof}{[\alpha][\beta]\phi_1,\Delta}{H'}$ then by the implicit rule for $[;]$ have $\tpchk{\G}{H}{\sproof}{[\alpha;\beta]\phi_1}{H'}$.


\oldcase{$\phi=[\alpha\cup\beta]\phi_1$} 
By IH (because we remove a $\cup$ operator), $\exists \sproof_1~\tpchk{\G}{H}{\sproof_1}{[\alpha]\phi_1,\Delta}{H'}$ and $\exists \sproof_2~\tpchk{\G}{H}{\sproof_2}{[\beta]\phi_1,\Delta}{H'}$.
By  $[\cup]$ casing have $\tpchk{\G}{H}{\bcase{\alpha}{\sproof_1}{\beta}{\sproof_2}}{[\alpha\cup\beta]\phi_1,\Delta}{H}$.

\newcommand{\sder}[1]{\mathcal{D}_{#1}}

 \oldcase{$\infer{\aseq{\ddia{\alpha\cup\beta}{\phi}}}
               {\sder{1}=\aseq{\ddia{\alpha}{\phi}} & \sder{2}=\aseq{\ddia{\beta}{\phi}}}$}
 Observe in SC $\fseq{[\alpha\cup\beta]\neg\phi}$ by the following derivation:
\[\infer{\fseq{[\alpha\cup\beta]\neg\phi}}
   {\infer{\fseq{[\alpha]\neg\phi}} 
   {\infer{\aseq{\neg[\alpha]\neg\phi}} 
   {\infer{\ddia{\alpha}{\phi}}{\sder{1}}}}  
&  {\infer{\fseq{[\beta]\neg\phi}}
   {\infer{\aseq{\neg[\alpha]\neg\phi}}
   {\infer{\ddia{\beta}{\phi}}{\sder{2}}}}}}\]
And then since this sequent is simpler (less modalities on the left), it's derivable by some $\sproof$ in Kaisar by IH, so we derive $\tpchk{\G}{H}{\sfocus{\ddia{\alpha\cup\beta}{\phi}}{\sproof}}{\ddia{\alpha\cup\beta}{\phi},\Delta}{H}$.

 \oldcase{$\infer{\aseq{\ddia{\alpha;\beta}{\phi}}}
         {\sder{1}=\aseq{\ddia{\alpha}{\ddia{\beta}{\phi}}}}$}
Observe in SC $\fseq{[\alpha][\beta]\neg\phi}$ by the following derivation:
\[\infer{\fseq{[\alpha][\beta]\neg\psi}}
{\infer{\fseq{[\alpha]\neg\neg[\beta]\neg\phi}} 
{\infer{\fseq{\alpha}\neg\ddia{\beta}{\phi}} 
{\infer{\aseq{\neg[\alpha]\neg\ddia{\beta}{\phi}}} 
{\infer{\aseq{\ddia{\alpha}{\ddia{\beta}{\phi}}}}
  {\sder{1}}}}}}\]
And then since this sequent is simpler (less modalities on the left), it's derivable by some $\sproof$ in Kaisar by IH, so we derive $\tpchk{\G}{H}{\sfocus{\ddia{\alpha;\beta}{\phi}}{\sproof}}{\ddia{\alpha;\beta}{\phi},\Delta}{H}$.

\oldcase{$\infer{\aseq{\ddia{x:=\theta}{\phi}}}
     {\sder{1}=\aseq{\subst[\phi]{x}{\theta}}} (\text{if} x\notin\boundvars{\phi})$}
Observe $\fseq{[x:=\theta]\neg\phi}$ in SC by (noting $\neg\left(\subst[\phi]{x}{\theta}\right) = \subst[(\neg\phi)]{x}{\theta}$)
\[\infer{\fseq{[x:=\theta]\neg\phi}}
{\infer{\fseq{\subst[\left(\neg\phi\right)]{x}{\theta}}}
{\infer{\aseq{\subst[\phi]{x}{\theta}}}
{\sder{1}}}}\]
And then since this sequent is simpler (less modalities on the left), it's derivable by some $\sproof$ in Kaisar by IH, so we derive $\tpchk{\G}{H}{\sfocus{\ddia{x:=\theta}{\phi}}{\sproof}}{\ddia{x:=\theta}{\phi},\Delta}{H}$.

\oldcase{$\infer{\aseq{\ddia{x:=\theta}{\phi}}}
     {\sder{1}=\aseq{\subst[\phi]{x}{\theta}}} (\text{if} x\notin\freevars{\Gamma,\Delta})$}
Observe $\fseq{[x:=\theta]\neg\phi}$ in SC by (noting $\neg\left(\subst[\phi]{x}{\theta}\right) = \subst[(\neg\phi)]{x}{\theta}$)
\[\infer{\fseq{[x:=\theta]\neg\phi}}
{\infer{\subst[\G]{x}{x_i},x=\theta\vdash\left(\neg\phi\right),\subst[\Delta]{x}{x_i}}
{\infer{\subst[\G]{x}{x_i},x=\theta,\phi\vdash\subst[\Delta]{x}{x_i}}
{\infer{\G,\ddiamond{\humod{x}{\theta}}{\phi}\vdash\Delta}
{\sder{1}}}}}\]
And then since this sequent is simpler (less modalities on the left), it's derivable by some $\sproof$ in Kaisar by IH, so we derive $\tpchk{\G}{H}{\sfocus{\ddia{x:=\theta}{\phi}}{\sproof}}{\ddia{x:=\theta}{\phi},\Delta}{H}$.

\oldcase{$\infer{\aseq{\ddia{\prandom{x}}{\phi}}}
     {\sder{1}=\aseq{\exists~x~\phi}}$}
Observe $\fseq{[\prandom{x}]\neg\phi}$ in SC by
\[\infer{\fseq{[\prandom{x}]\neg\phi}}
{\infer{\fseq{\forall~x~\neg\phi}}
{\infer{\aseq{\neg\forall~x~\neg\phi}} 
{\infer{\aseq{\exists~x~\phi}} 
{\sder{1}}}}}
\]
And then since this sequent is simpler (less modalities on the left), it's derivable by some $\sproof$ in Kaisar by IH, so we derive $\tpchk{\G}{H}{\sfocus{\ddia{\prandom{x}}{\phi}}{\sproof}}{\ddia{\prandom{x}}{\phi},\Delta}{H}$.

\oldcase{$\infer{\aseq{\ddia{?(\psi)}{\phi}}}
 {\sder{1}=\aseq{\psi\land\phi}}$}
Observe $\fseq{[?(\psi)]\neg\phi}$ in SC by
\[\infer{\fseq{[?(\psi)]\neg\phi}}
{\infer{\fseq{\psi\limply\neg\phi}}
{\infer{\fseq{\neg\psi\lor\neg\phi}}
{\infer{\aseq{\neg\left(\neg\psi\lor\neg\phi\right)}}
{\infer{\aseq{\psi\land\phi}}
{\sder{1}}}}}}\]
And then since this sequent is simpler (less modalities on the left), it's derivable by some $\sproof$ in Kaisar by IH, so we derive $\tpchk{\G}{H}{\sfocus{\ddia{?(\psi)}{\phi}}{\sproof}}{\ddia{?(\psi)}{\phi},\Delta}{H}$.

\oldcase{$\infer{\aseq{\ddia{\alpha^*}{\phi}}}
{\aseq{\phi\lor\ddia{\alpha}{\ddia{\alpha^*}{\phi}}}}$}
Observe $\fseq{[\alpha^*]\neg\phi}$ in SC by
\[\infer{\fseq{[\alpha^*]\neg\phi}}
 {\infer{\fseq{\neg\phi\land[\alpha][\alpha^*]\neg\phi}}
 {\infer{\fseq{\neg\phi\land\neg\neg[\alpha^*]\neg\phi}}
 {\infer{\fseq{\neg\phi\and\neg\ddia{\alpha}{\ddia{\alpha^*}{\phi}}}}
 {\infer{\aseq{\neg\left(\neg\phi\land\neg\ddia{\alpha}{\ddia{\alpha^*}{\phi}}\right)}}%
 {\infer{\aseq{\phi\lor\ddia{\alpha}{\ddia{\alpha^*}{\phi}}}}
 {\sder{1}}}}}}}\]
And then since this sequent is simpler (less modalities on the left), it's derivable by some $\sproof$ in Kaisar by IH, so we derive $\tpchk{\G}{H}{\sfocus{\ddia{\alpha^*}{\phi}}{\sproof}}{\ddia{\alpha^*}{\phi},\Delta}{H}$.

\oldcase{$\infer{\aseq{\ddia{\pevolvein{x'=\theta}{Q}}{\phi(x)}}}
{\G,t\geq0,\forall~s~\in[0,t]~\subst[Q]{x}{y(t)},\phi(y(t))\vdash\Delta}$}
Observe $\fseq{[\pevolvein{x'=\theta}{Q}]\neg\phi}$ in SC by
\[\infer{\fseq{[\pevolvein{x'=\theta}{Q}]\neg\phi}}
{\infer{\G,t\geq{0},\forall~s~\in[0,t]~\subst[Q]{x}{y(t)}\vdash p(y(t)),\Delta}
{\infer{\fseq{t\geq{0}\limply\forall~s~\in[0,t]~\subst[Q]{x}{y(t)}\limply{p(y(t))}}}
{\infer{\fseq{\neg{t\geq{0}}\lor{\neg{\forall~s~\in[0,t]~\subst[Q]{x}{y(t)}}}\lor{p(y(t))}}}
{\infer{\aseq{\neg\left(\neg{t\geq{0}}\lor{\neg{\forall~s~\in[0,t]~\subst[Q]{x}{y(t)}}}\lor{p(y(t))}\right)}}
{\infer{\aseq{{t\geq{0}}\land{\forall~s~\in[0,t]~\subst[Q]{x}{y(t)}}\land{p(y(t))}}}
{\infer{\Gamma,{t\geq{0}},{\forall~s~\in[0,t]~\subst[Q]{x}{y(t)}},{p(y(t))}\vdash\Delta}
{\sder{1}}}}}}}}\]
And then since this sequent is simpler (less modalities on the left), it's derivable by some $\sproof$ in Kaisar by IH, so we derive $\tpchk{\G}{H}{\sfocus{\ddia{\pevolvein{x'=\theta}{Q}}{\phi(x)}}{\sproof}}{\ddia{\pevolvein{x'=\theta}{Q}}{\phi(x)},\Delta}{H}$.

\oldcase{($\dbox{\alpha^*}{\phi}$, rule I)}
We have by assumption
\[\infer{\G,\phi\vdash[\alpha^*]\phi}
         {\arb{\G}{\alpha},\phi\vdash[\alpha]\phi,\arb{\Delta}{\alpha}}\]
And the proof follows trivially from Inv and finally.

\oldcase{$\dbox{M}{}$} In the case of a box monotonicity proof, note that the mid rule as presented in the paper is simply a special case of:
\[\infer{\tpchk{\G}{H}{\bcon{\psi}{\sproof_1}{\sproof_2}}{[\alpha]\phi,\Delta}{H_\alpha}}
 {\tpchk{\G}{H}{\sproof_1}{[\alpha]\eag{\psi},\Delta}{H_\alpha} &
  \tpchk{\arb{\G}{\alpha},\eag{\psi}}{H_\alpha}{\sproof_2}{\phi,\arb{\Delta}{\alpha}}{H_\phi}}\]
We presented it in its special-case form due to its close analogy with Hoare logic for explanatory purposes, but structured symbolic execution works just as well with the general form, and with the general form $\dbox{M}{}$ falls out by propositional reasoning.
  
\oldcase{$\ddiamond{M}{}$} Symmetric.

\oldcase{${[*]}$ iter on the right}
Direct from the IH and from the rule:
\[\infer{\tpchk{\G}{H}{\bcase{p_\phi}{\sproof_\phi}{p_\alpha}{\sproof_\alpha}}{[\alpha^*]\phi,\Delta}{H}}
        {\deduce
          {\tpchk{\G_\phi}{H}{\sproof_\phi}{\phi,\Delta}{H_\phi}
           \hskip 0.1in
           \tpchk{\G_\alpha}{H}{\sproof_\alpha}{[\alpha][\alpha^*]\psi,\Delta}{H_\alpha}
          }
          {
            \match{p_\phi}{\phi}{\G_\phi}
            &\match{p_\alpha}{\alpha}{\G_\alpha}
          }
        }\]

\oldcase{$\ddiamond{\alpha^*}{\phi}$ on the right}
Rule
\[\infer{\G,\exists{v}~\varphi(v)\vdash\langle\alpha^*\rangle\exists{v\leq{0}}\varphi(v),\Delta}
{\arb{\G}{\alpha},v>0,\varphi(v)\vdash\langle\alpha\rangle\varphi(v-1)}\]
translates to
\[\infer
{\tpchk{\arb{\G}{\alpha},\eag{\psi}(x),x>0}{\arb{H}{\alpha}}{\sproof_{Inv}}{\ddiamond{\alpha}{\longeag{\psi(x-1)}},\arb{\Delta}{\alpha}}{H}}
{\tpchk{\Gamma}{H}{\sproof_{Pre}}{\exists~x~\eag{\phi}(x),\Delta}{H_{Pre}}}\]


\oldcase{$\fseq{[\pevolvein{x'=\theta}{Q}]\phi}$}
In the ODE case, we assume without loss of generality that the proof is not a proof by the solve axiom $[']$ since $[']$ is implementable with DG,DC, and DI~\cite{DBLP:conf/cade/Platzer15}.
Further note we can restrict our consideration to a simplied fragment of ODE proofs which is in turn complete for SC.
We say an ODE proof is in \emph{linear-normal} form if it consists of 0 or more DG's, followed by zero or more DCs (where the proof of each cut is a single DI), ending in DW.
\newcommand{\lnp}[4]{(#1,\deduce{#2}{#3},#4)}
\newcommand{\lnc}[2]{(#1{\rm::}#2)}
\newcommand{\lncc}[3]{(#1{\rm::}#2{\rm::}#3)}
\newcommand{\lnn}{\epsilon}
\newcommand{\lnnorm}[2]{{#1}\Mapsto{#2}}
\newcommand{\scd}{\mathcal{D}}
\newcommand{\fancydc}[3]{{\deduce{DC(#1)}{#2}}#3}
We notate linear-normal proofs as $\lnp{DGs}{DCs}{DIs}{DW}$ with empty lists denoted $\lnn$ and concatenation $\lnc{\scd_1}{\scd_2}$
\begin{lemma}Linear-normal ODE proofs in SC are complete for ODE proofs in SC.
\end{lemma}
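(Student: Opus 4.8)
The plan is to prove the lemma by induction on SC ODE proofs, exhibiting an explicit normalization procedure that rewrites an arbitrary ODE proof into the shape $\lnp{DGs}{DCs}{DIs}{DW}$: a block of differential ghosts, then a block of differential cuts each of whose side premise is discharged by a single differential invariant step, ending in a differential weakening. As a preliminary reduction I would, as already noted above, use derivability of the solution axiom $[']$ from DG, DC, DI~\cite{DBLP:conf/cade/Platzer15} (and likewise of differential effect) to assume without loss of generality that only the four rules DG, DC, DI, DW occur. Two structural facts guide the normalization: once DW is applied on a branch the remaining obligation is first-order, so DW is always terminal on its branch; and a DC step has a \emph{side} premise (the invariance obligation $[\pevolvein{x'=\theta}{Q}]C$) together with a \emph{spine} premise (the continuation $[\pevolvein{x'=\theta}{Q\wedge C}]\phi$). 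The normal form demands that every side premise be a single DI and that the spine be DGs, then DCs, then DW.

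The core is three rewriting lemmas, applied until no rewrite applies. \textbf{(a) Absorbing bare DI.} A DI directly proving $[\pevolvein{x'=\theta}{Q}]\phi$ is replaced by $\mathrm{DC}(\phi)$, whose side premise $[\pevolvein{x'=\theta}{Q}]\phi$ is discharged by that very DI, followed by DW on $[\pevolvein{x'=\theta}{Q\wedge\phi}]\phi$, which is immediate. \textbf{(b) Sinking DGs to the root.} The ghost $y$ introduced by a DG is fresh, so it occurs in no domain, postcondition, or cut formula of the rules below it; by coincidence for \dL~\cite{DBLP:conf/cade/Platzer15} every DC, DI, and DW below can be re-performed after the ODE has been augmented with $y'$, the linearity side condition of DG depending only on the ghost equation, so DG permutes toward the root past all of them; iterating this collects all DGs into one block just above the conclusion, below every DC. \textbf{(c) Flattening nested DCs.} If the side premise $[\pevolvein{x'=\theta}{Q}]C$ of a DC step is proved by a subproof that, by the induction hypothesis, is already linear-normal --- DGs, then $\mathrm{DC}(C_1),\dots,\mathrm{DC}(C_m)$ with single-DI side proofs, then DW witnessing $Q\wedge C_1\wedge\dots\wedge C_m\vdash C$ --- then I would sink its DGs to the root via (b), splice the cuts $\mathrm{DC}(C_1),\dots,\mathrm{DC}(C_m)$ onto the main spine in place of $\mathrm{DC}(C)$, and carry the strengthened domain $Q\wedge C_1\wedge\dots\wedge C_m$ (which implies $C$) forward wherever the spine previously had $C$.

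I expect step (c) to be the main obstacle, specifically showing that the original spine still checks after $C$ is replaced by this stronger conjunctive domain. The subtlety is that $C$ need not itself be a differential invariant even under the strengthened domain, so it cannot simply be re-cut; instead I would observe that DI, DC, and DW are all \emph{monotone in the domain constraint} --- a stronger domain preserves the first-order side conditions of DI and DW and the side-premise obligations of downstream DC steps (which are again normalized recursively) --- so every downstream step transports verbatim, with the extra first-order content discharged at the (non-ODE) leaves. To make the recursion well-founded I would take the metric to be, lexicographically, the number of DC nodes whose side proof is \emph{not} a single DI and then the overall proof size: (a) removes a bare DI without increasing either, (b) is size-preserving, and (c) strictly decreases the first component, since one ``bad'' $\mathrm{DC}(C)$ node and the ``bad'' nodes of its subproof are replaced by ``good'' DC nodes while the recursively normalized continuation contributes no new ``bad'' nodes. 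A secondary point to get right is that ghosts introduced inside a side-premise subproof may occur in the $C_i$ themselves, so they must genuinely be sinkable below the point on the spine where those cuts are spliced --- which (b) guarantees by pushing them all the way to the root, and mutual commutation of fresh ghosts (again by coincidence) keeps the DG block well-formed. When no rewrite applies the proof has exactly the form $\lnp{DGs}{DCs}{DIs}{DW}$; since the procedure applies to every SC ODE proof, linear-normal proofs are complete for SC ODE proofs, as needed for the ODE case of the completeness theorem.
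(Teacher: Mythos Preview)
Your approach is correct and shares the paper's core moves --- absorb a bare DI into DC-with-DI followed by DW, float all DGs to the root, and flatten DC side premises --- but the paper packages these as a single structural recursion on the input derivation rather than a rewrite system, so no separate termination metric is needed. Your metric as stated has a small gap: step (a) replaces DI by DC(DI)+DW, leaving the bad-DC count unchanged while strictly increasing proof size, so your lexicographic order does not decrease under (a) alone; this is easily patched by adding a ``number of bare DIs'' component or by stipulating that (a) runs to completion first.

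The substantive divergence is the DC case. You drop $DC(C)$ and propagate the stronger domain $Q\wedge C_1\wedge\dots\wedge C_m$ down the spine --- precisely the step you flag as the main obstacle. The paper instead \emph{retains} $DC(\phi)$ and closes its side premise with the terminal DW of the normalized side proof: after cutting $C_1,\dots,C_m$ the domain already entails $\phi$, so a single DW suffices, and the continuation then runs against a domain that still contains $\phi$ verbatim (plus harmless extras). Both routes ultimately appeal to the same ``added invariants never hurt'' monotonicity, but the paper's avoids rewriting the spine and so dissolves the obstacle you anticipated. A side effect worth noting is that the paper's normal form tacitly admits a single DW, not only DI, as a DC side proof.
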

\begin{proof}
  We define normalization $\lnnorm{\scd}{\lnp{DGs}{DCs}{DIs}{DW}}$, then show the proofs check:
\begin{center}
  \begin{tabular}{ccc}
  \infer[DW]{\lnnorm{DW}{\lnp{\lnn}{\lnn}{\lnn}{DW}}}{}
& \infer[DI]{\lnnorm{DI(\phi)}{\lnp{\lnn}{DC(\phi)}{DI(\phi)}{DW}}}{}   
& \infer[DG]{\lnnorm{DG(\scd)}{\lnp{\lnc{DG}{DGs}}{DCs}{DIs}{DW}}}
        {\lnnorm{\scd}{\lnp{DGs}{DCs}{DIs}{DW}}}\\
  \end{tabular}\\[0.1in]
$\infer[DC]{\lnnorm{\fancydc{\phi}{\scd_1}{\scd_2}}{\lnp{\lnc{DGs_1}{DGs_2}}{\lncc{DCs_1}{DCs_2}{DC(\phi)}}{\lncc{DIs_1}{DIs_2}{DW_1}}{DW_2}}}
  {\lnnorm{\scd{}_1}{\lnp{DGs_1}{DCs_1}{DIs_1}{DW_1}}  
  &\lnnorm{\scd{}_2}{\lnp{DCs_2}{DCs_2}{DIs_2}{DW_2}}}$
\end{center}
  Proceed by induction on the derivation.
  \oldcase{DW}: The proof is already linear-normal.\\
  \oldcase{DI}: We cut in $\phi$, the formula proved by $DI(\phi),$ thus the cut holds.
 The cut holds because $DI(\phi)$ is a proof of some 
  \oldcase{DG}: Direct by the IH.\\
  \oldcase{DC}: 
  $DGs_1$ check trivially by IH. $DGs_2$ check by IH and because the ghosted system after $DGs_1$ with the input system on non-ghost variables.
  $DCs_1$ and $DIs_1$ check because all invariants of the input system are invariants of the ghosted system.
  $DCs_2$ and $DIs_2$ check for this reason and because the addition of further invariants $DCs_1$ does not reduce provability.
  $DC(\phi)$ and $DW_1$ prove because addition of $DCs_1$ and $DCs_2$ never reduces provability.
  $DW_2$ proves because $\phi$ is available in the domain constraint and the addition of $DCs_1$ does not reduce provability.

\end{proof}

\begin{lemma} Kaisar is complete for linear-normal SC proofs.\end{lemma}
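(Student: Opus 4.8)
The plan is to induct on the \emph{structure} of a linear-normal proof $\lnp{DGs}{DCs}{DIs}{DW}$, peeling off one layer at a time and emitting the matching Kaisar ODE construct from Section~\ref{sec:dlk}: each $DG$ becomes a differential-ghost step $\sghost{y}{\theta_2}{\theta_3}{\iproof}$, each $DC$ whose cut is discharged by a single $DI$ becomes an invariant block $\sinv{x}{\psi}{\sproof_{Pre}}{\sproof_{Inv}}{\iproof}$, and the trailing $DW$ becomes a \kwfinally{} step. The arithmetic leaves are closed directly by the $\folr{}$ decision procedure, and the only place the outer completeness induction (on $\prec$) is re-entered is the $DW$ goal.

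For a ghost layer, $\theta_3$ is the ghost's initial value and $\theta_2$ the right-hand side of $y'$ read off from the SC $DG$ step; Kaisar's side condition that $\theta_2$ be linear in $y$ is met because $DG$ restricted to linear ghosts is already complete for SC ODE reasoning, so the linear-normalization may be taken to introduce only linear ghosts, and the body $\iproof$ follows by recursion with one fewer $DG$. For a cut layer, $\sproof_{Pre}$ must prove $\eag{\psi}$ from $\G,\invlist{}$ --- exactly the first premise of $DI$ --- and since $\psi$ is first-order (it is what a $DI$ proves) this closes by $\mathbb{R}$; $\sproof_{Inv}$ must prove $[\humod{x'}{\theta}](\eag{\psi})'$ from $\arb{\G}{\alpha},\invlist{},Q$ --- exactly the second premise of $DI$ --- which, after symbolically executing the differential assignment $x':=\theta$ with the Kaisar assignment rule, becomes a pure arithmetic formula, again closed by $\mathbb{R}$; the continuation $\iproof$ is produced by recursion with one fewer $DC$. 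Soundness is preserved here because the Kaisar ODE invariant rule, like SC's $DI$, does not assume $\psi$ while proving it and exposes only the previously-cut invariants together with $Q$.

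When only $DW$ is left, the SC proof closes $\arb{\G}{\alpha},\invlist{},Q\vdash\phi,\arb{\Delta}{\alpha}$, which is precisely the premise of the Kaisar \kwfinally{} rule for ODEs; a Kaisar proof of it exists by the outer completeness induction, which applies because this sequent is $\prec$-smaller than $\G\vdash[\pevolvein{x'=\theta}{Q}]\phi,\Delta$: abstraction together with the first-order formulas in $\invlist{}$ and $Q$ add no modality-bearing formula to the antecedent, so rules~(1)--(4) of $\prec$ are unchanged, while rule~(5), the total number of modalities, strictly drops with the removal of the ODE box.

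The main obstacle is the bookkeeping around the induction metric and abstraction: (i) carrying out the $\prec$-decrease for the $DW$ leaf in detail --- in particular that $\arb{\G}{\alpha}$ creates no new antecedent modality and that every formula gathered into $\invlist{}$ is genuinely first-order, which is what linear-normal form guarantees since each such invariant is proved by a $DI$ --- and (ii) threading the static/dynamic trace correspondence $\corr{\eta}{H}$ through the trace rewrites these rules perform ($\arb{H}{\alpha}$ in \kwfinally{} and the invariant rules, and $\hhsub{H}{x'}{\theta}$ for the differential assignment), which is exactly where Lemma~\ref{lem:intermed-ghost} and the soundness of abstraction are used.
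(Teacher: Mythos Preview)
Your proposal is correct and follows essentially the same three-case induction as the paper: $DW$ maps to \kwfinally{} (closed by the outer IH since a modality is removed), each $DC$ discharged by a $DI$ maps to the Kaisar ODE \kwinv{} rule, and each $DG$ maps to the Kaisar \kwghost{} rule.

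One minor remark: your obstacle (ii), threading the trace correspondence $\corr{\eta}{H}$ through the abstraction and assignment steps, is not actually part of this lemma. Completeness only requires exhibiting \emph{some} $\sproof,H,H_\sproof$ for which the Kaisar judgment holds; the correspondence $\corr{\eta}{H}$ and Lemma~\ref{lem:intermed-ghost} belong to the nominal-correctness theorems, not to completeness. The paper's proof of this lemma accordingly ignores traces entirely. Dropping (ii) simplifies your argument without loss.
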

\begin{proof}
  We proceed by a simultaneous induction on invariant proofs in normal
  form:

  \oldcase{$\infer{\fseq{[\pevolvein{x'=\theta}{Q}]\phi}} {Q\vdash\phi}$}
  By the outer IH $Q\vdash\phi$ is provable by some $\sproof$ because it eliminates a modality.
  The proof follows by applying the $\finally$ rule:
\[\linferenceRule[formula]
        {\tpchk{\arb{\G}{\alpha},\invlist{},Q}{\arb{H}{\alpha}}{\sproof}{\phi,\arb{\Delta}{\alpha}}{H_{x'}}}
        {\tpchk{\G,\invlist{}}{H}{\finally{\sproof}}{[\{x'=\theta~\&~Q\}]\phi,\Delta}{H_{x'}}}\]

  \oldcase{$\infer{\fseq{[\pevolvein{x'=\theta}{Q}]\phi}}
    {\infer{\fseq{[\pevolvein{x'=\theta}{Q}]\psi}}
      {Q\vdash{\subst[(\psi)']{x'}{\theta}}} &
      \fseq{[\pevolvein{x'=\theta}{Q\land\psi}]\phi}}$} 
  By the inner IH, the ``use'' case is provable by some $\iproof$.
  By the linear-normal assumption, the ``show'' case is a single DI, and thus follows from the Inv rule:
\[\linferenceRule[formula]
        {\deduce{\tpchk{\Gamma,\ext{\Delta}{x}{\eag{\psi}}}{H}{\iproof}{[\{x'=\theta~\&~Q\}]\phi,\Delta}{H_{x'}}}
{\tpchk{\Gamma,\invlist{}}{H}{\sproof_{Pre}}{\eag{\psi},\Delta}{H_{Pre}}
   \hskip 0.1in\tpchk{\arb{\G}{\alpha},\invlist{},Q}{\arb{H}{\alpha}}{\sproof_{Inv}}{[\humod{x'}{\theta}](\eag{\psi})',\arb{\Delta}{\alpha}}{H_{Inv}}
}}
        {\tpchk{\G,\Delta}{H}{\sinv{x}{\psi}{\sproof_{Pre}}{\sproof_{Inv}}{\iproof}}{[\{x'=\theta~\&~Q\}]\phi,\Delta}{H_{x'}}}\]

  \oldcase{$\infer{\fseq{[\pevolvein{x'=\theta}{Q}]\phi}}
    {\phi\bimply{\exists y~\psi} &
      \fseq{[\pevolvein{x'=\theta,y'=\theta_2}{Q}]\psi}}$} 
  By inner IH, the use case is provable by some $\iproof,$ and the result follows by ghosting:
\[\linferenceRule[formula]
        {\tpchk{\G,y=\eag{\theta_{y}}}{\Delta;H}{\iproof}{[\{x'=\eag{\theta_{x'}},y'=\eag{\theta_{y'}}~\&~H\}]\phi,\Delta}{H_{x'}} & \eag{\theta_{y'}}~{\rm linear\ in}~y}
        {\tpchk{\G,\invlist{}}{H}{\sghost{y}{\theta_{y'}}{\theta_{y}}{\iproof}}{[\{x'=\theta_{x'}~\&~H\}]\phi,\Delta}{H_{x'}}}\]
\end{proof}

By composing the above lemmas, Kaisar is complete for ODE proofs in sequent calculus.
\end{proof}

\begin{corollary} 
By G\"{o}del's incompleteness theorem, any sound calculus for \dL is incomplete in the absolute sense~\cite{DBLP:journals/jar/Platzer08} and thus so is Kaisar.
However, sequent calculus for \dL is relatively complete~\cite{DBLP:journals/jar/Platzer08} both with respect to any differentially expressive logic and with respect to discrete dynamics.
Thus Kaisar is as well.
\end{corollary}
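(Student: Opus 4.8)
The plan is to obtain the corollary by chaining the Completeness theorem just established with the existing metatheory of \dL sequent calculus, treating the positive (relative completeness) and the negative (absolute incompleteness) halves separately; essentially everything reduces to transitivity.

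For the positive half, I would start from the relative completeness of \dL sequent calculus~\cite{DBLP:journals/jar/Platzer08}: every valid \dL formula has a sequent-calculus derivation all of whose open leaves are discharged by an oracle, where the oracle decides validity in a fixed differentially expressive base logic (for the general statement) or decides first-order real arithmetic (for the discrete-dynamics statement). Fix such a valid $\phi$ together with the corresponding sequent derivation of $\vdash\phi$. The Completeness theorem above then yields $\tpchk{}{\hemp}{\sproof}{\phi}{H}$ for some Kaisar proof $\sproof$, and inspecting the construction in that proof shows that the only steps relying on validity or decidability of the base theory are the leaf methods $\mathbb{R}$ and {\tt auto}, whose side conditions are precisely the corresponding oracle queries. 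Hence, relative to the same oracle, Kaisar proves every valid formula that the sequent calculus does, which is all that the relative completeness theorem guarantees; the two named instantiations follow by taking the two oracles in turn.

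For the negative half, I would invoke that \dL interprets a fragment of arithmetic rich enough for G\"{o}del's first incompleteness theorem, so there is no sound, recursively enumerable, complete deductive system for the valid \dL formulas~\cite{DBLP:journals/jar/Platzer08}. Since the Kaisar proof-checking relation $\tpchk{\G}{H}{\sproof}{\Delta}{H'}$ is decidable in the proof object $\sproof$ and sound by the Soundness of SP theorem above, the set of Kaisar-provable \dL formulas is sound and recursively enumerable, hence cannot coincide with the set of valid formulas; this gives absolute incompleteness of Kaisar.

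The only point requiring genuine care — and thus the expected obstacle — is aligning the ``relative to an oracle'' formulation used in the cited relative-completeness results with Kaisar's leaf methods: one must confirm that the side conditions discharged by $\mathbb{R}$ and {\tt auto} in the Completeness proof are exactly base-theory queries (first-order real arithmetic, respectively validity in a differentially expressive logic) and inject no further nonconstructive content, so that an oracle adequate for the sequent calculus remains adequate for the reconstructed Kaisar proof. Once that is checked, both halves are immediate.
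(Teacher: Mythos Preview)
Your proposal is correct and matches the paper's intended reasoning. The paper in fact gives no separate proof for this corollary at all: it is stated immediately after the Completeness theorem and treated as an immediate consequence of chaining that theorem with the cited relative-completeness results for the \dL sequent calculus, together with the soundness of Kaisar. Your more explicit decomposition into the positive and negative halves, and your observation about aligning the oracle queries with the leaf methods $\mathbb{R}$ and {\tt auto}, simply spell out what the paper leaves implicit.
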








\section{Related Work}
\label{sec:relatedwork}
Structured proof languages were first introduced in the theorem prover Mizar~\cite{conf/mkm/BancerekBGKMNPU15,Wenzel2002}, then expanded upon in systems such as Isar~\cite{DBLP:conf/mkm/Wenzel06,DBLP:conf/tphol/Wenzel99,Wenzel07isabelle/isar}. 
Other structured proof languages/extensions include the DECLARE~\cite{Syme1997DECLAREAP} proof system for HOL, TLAPS for ${\rm TLA}^+$~\cite{DBLP:conf/fm/CousineauDLMRV12,DBLP:conf/hybrid/Lamport92},  Coq's declarative proof language~\cite{DBLP:conf/types/Corbineau07} and SSReflect extension~\cite{DBLP:journals/jfrea/GonthierM10}, and several ``Mizar modes'' implementing structured languages in provers including Cambridge HOL~\cite{DBLP:conf/tphol/Harrison96}, Isabelle~\cite{DBLP:conf/cpp/KaliszykPU16}, and HOL Light~\cite{DBLP:conf/tphol/Wiedijk01}.
Kaisar is heavily inspired by Isar specifically, though we do not use every feature of Isar.
The \kwassume{}, \kwnote{}, \kwshow{}, and \kwhave{} are taken directly from Isar, and Kaisar's \kwlet{} construct is a straightforward generalization of Isar's \kwlet{} construct with pattern-matching.
The use of pattern-matching for formula selection has been investigated, e.g. by Traut ~\cite{traut2014pattern} and by Gonthier ~\cite{DBLP:conf/itp/GonthierT12}.

In contrast with all the above, Kaisar has an extensive metatheory to justify its defining features: nominal terms and structured symbolic execution.
While Isar and Coq's declarative language have formally defined semantics, we know of only one interactive proof language besides Kaisar with significant metatheoretic results:  VeriML~\cite{DBLP:conf/icfp/StampoulisS10,DBLP:conf/popl/StampoulisS12}.
We share with VeriML the goal of solving practical interactive proof problems via principled proof languages with metatheoretic guarantees.
We differ in that VeriML addresses extending logical frameworks with automation while we address verification of concrete systems in a domain-specific logic for CPS.

Nominal terms are unique to Kaisar among interactive proof languages, and give it a unique advantage in expressing the rich ghost state of hybrid systems proofs.
Other structured languages such as Isar have been used extensively for program verification (for example: ~\cite{Nipkow2002,DBLP:journals/cacm/KleinAEHCDEEKNSTW10,DBLP:journals/afp/Lochbihler07}).
However, because the above languages target general logics, they lack the language-level awareness of state change required for nominals.
We provide this language-level support through the novel technical features of structured symbolic execution and static traces.
We then implement the resulting language, reusing the infrastructure of an existing prover \KeYmaeraX.

An alternate approach is to implement a language at the user-level, in the tactics language of an existing prover.
This approach was taken, e.g. by the Iris Proof Mode (IPM) for Coq,~\cite{DBLP:conf/popl/KrebbersTB17} which implements reasoning for concurrent separation logic in Coq's \ltac language.
We work in the implementation language of \KeYmaeraX because it is far more expressive than its tactics language.
Language choice is incidental: the key is preserving the underlying prover's soundness guarantees.
As with Coq~\cite{Barras:1997}, \KeYmaeraX has an LCF-style core supported by mechanized soundness results for the underlying calculus~\cite{BohrerCPP17}, making the Kaisar implementation highly trustworthy.
Kaisar and IPM share a goal of building generalizable interactive proof technology for program logics, but they target vastly different logics and address different aspects of proof:
IPM uses Coq's unstructured proof style and focuses (a) on embedding object logics in metalogics and (b) on the concerns of separation logic (e.g. managing of different context types, state ownership).
In contrast, we augment structured proof with nominals to provide natural reasoning across states as needed in hybrid systems.
We conjecture that our basic approach applies to many logics, including separation logics.


Two closely related language classes are tactics languages (which implement reuseable automation) and unstructured proof languages (which implement concrete proofs).
Automation can often be written in the prover's implementation language: OCaml in Coq, ML in Isabelle, or Scala in \KeYmaeraX.
Domain-specific languages for tactics include  untyped \ltac~\cite{Delahaye:2000:TLS:1765236.1765246}, reflective Rtac ~\cite{malecha2015rtac-coqpl}, and dependently typed Mtac~\cite{Ziliani:2013:MMT:2544174.2500579} in Coq, Eisbach~\cite{Matichuk:2016:EPM:2904234.2904264} in Isabelle, and VeriML~\cite{DBLP:conf/icfp/StampoulisS10,DBLP:conf/popl/StampoulisS12}.
Examples of unstructured languages are 
the Coq~\cite{COQ} script language and the Isabelle~\cite{Nipkow:2002:IPA:1791547} apply-script language.
 \KeYmaeraX features a language named Bellerophon~\cite{Belle} for unstructured proofs and tactics.
The Bellerophon language consists of regular expression-style tactic combinators (sequential composition, repetition, etc.) and a standard tactics library featuring, e.g. sequent calculus rules and general-purpose automation.
Bellerophon's strength is in tactics that compose the significant automation provided in its library.
Its weakness is in performing large-scale concrete proofs.
It lacks both the nominals unique to Kaisar and the constructs shared by Isar and Kaisar.
For example, assumptions in Bellerophon are unnamed and referred to by their index or by search, which can become unreadable or brittle at scale.

Our nominal terms relate to nominal differential dynamic logic~\cite{DBLP:journals/entcs/Platzer07} \dLN.
In \dLN{}, nominal formulas enable stating and proving theorems about named states.
Our goal differs: we apply named states to simplify proofs of theorems of plain \dL.
In our metatheory, \dLN{} formulas provide a clean specification for the nominal terms of Kaisar.

The main hybrid systems verification alternative to theorem proving is model-checking.
Because the uncountable state spaces of hybrid systems do not admit equivalent finite-state abstractions~\cite{DBLP:conf/lics/Henzinger96}, model-checking  approaches~\cite{DBLP:conf/hybrid/Frehse05,DBLP:conf/cav/FrehseGDCRLRGDM11,DBLP:conf/hybrid/Dreossi17,DBLP:conf/cav/HenzingerHW97,DBLP:conf/cade/GaoKC13,DBLP:conf/fmcad/GaoKC13,Chen2013} must approximate continuous dynamics, whereas \dL can reason about exact dynamics.
All of the above have limitations including (1) finite time horizons, (2) compact (and thus bounded) starting regions, (3) discrete notions of time, (4) and/or restriction to linear ODEs.
Restrictions (1) and (2) greatly reduce the scope of safety results, (3) reduces their accuracy and (4) reduces the class of systems considered. In contrast, $\dL$ supports unbounded continuous time with non-linear ODEs.

\section{Conclusion}
\label{sec:concl}
To simplify and systematize historical reference for verification of safety-critical CPS, we developed the Kaisar proof language for \dL, which introduces nominal terms supported by structured symbolic execution.
Our metatheory shows Kaisar is sound and as expressive as other calculi.
It shows that nominal automation is correct and nominals are the proof-language analog of nominal \dL.
In doing so we provide a foundation for ad-hoc historical reference in other provers.

Through our parachute example, we showed that nominals are desirable in CPS practice and that Kaisar proofs can be concise.
We provided empirical support by prototyping Kaisar in \KeYmaeraX, an implementation which supports the parachute example and other examples of this paper.
For evaluation, we reproduced a series of 5 safety proofs for ground robots~\cite{DBLP:conf/rss/MitschGP13}, combining differential invariant reasoning for nonsolvable dynamics with nontrivial arithmetic proofs, for models supporting avoidance of moving obstacles under position and actuator uncertainty.

\bibliographystyle{ACM-Reference-Format}
\bibliography{kaisar}

\newpage
\appendix 

\section{Sequent Calculus for \dL}
\newcommand{\seqq}[2]{\Gamma{#1}\vdash{#2}\Delta}
\label{app:sequent}
\paragraph{First-Order Rules}
\begin{center}
  \begin{calculuscollections}{\columnwidth}
    \begin{calculus}
      \cinferenceRule[negR|$\neg{R}$]{}
      {\linferenceRule[formula]
        {\seqq{,\phi}{}}
        {\seqq{}{\neg\phi,}}
      }{}
      \cinferenceRule[orR|$\vee{R}$]{}
      {\linferenceRule[formula]
        {\seqq{}{\phi,\psi,}}
        {\seqq{}{\phi\lor\psi,}}
      }{}
      \cinferenceRule[andR|$\wedge{R}$]{}
      {\linferenceRule[formula]
        {\seqq{}{\phi,} & \seqq{}{\psi,}}
        {\seqq{}{\phi\land\psi,}}
      }{}
      \cinferenceRule[impR|$\limply{R}$]{}
      {\linferenceRule[formula]
        {\seqq{,\phi}{\psi,}}
        {\seqq{}{\phi\limply\psi,}}
      }{}
      \cinferenceRule[id|id]{}
      {\linferenceRule[formula]
        {*}
        {\seqq{,\phi}{\phi,}}
      }{}
    \end{calculus}
    \begin{calculus}
      \cinferenceRule[negL|$\neg{L}$]{}
      {\linferenceRule[formula]
        {\seqq{}{\phi,}}
        {\seqq{,\neg{\phi}}{}}
      }{}
      \cinferenceRule[orL|$\vee{L}$]{}
      {\linferenceRule[formula]
        {\seqq{,\phi}{} & \seqq{,\psi}{}}
        {\seqq{,\phi\vee\psi}{}}
      }{}
      \cinferenceRule[andL|$\wedge{L}$]{}
      {\linferenceRule[formula]
        {\seqq{,\phi}{} & \seqq{,\psi}{}}
        {\seqq{,\phi\land\psi}{}}
      }{}
      \cinferenceRule[impL|$\limply{L}$]{}
      {\linferenceRule[formula]
        {\seqq{}{\phi,} & \seqq{,\psi}{}}
        {\seqq{,\phi\limply\psi}{}}
      }{}
      \cinferenceRule[cut|cut]{}
      {\linferenceRule[formula]
        {\seqq{}{\phi,} & \seqq{,\phi}{}}
        {\seqq{}{}}
      }{}
    \end{calculus}
    \begin{calculus}
      \cinferenceRule[allR|$\forall{R}$]{}
      {\linferenceRule[formula]
        {\seqq{}{\phi(y),}}
        {\seqq{}{\forall~x~\phi(x),}}
      }{Where $y\notin\freevars{\G}\cup\freevars{\Delta}$}
      \cinferenceRule[existsR|$\exists{R}$]{}
      {\linferenceRule[formula]
        {\seqq{}{\phi(\theta),}}
        {\seqq{}{\exists~x~\phi(x),}}
      }{}
      \cinferenceRule[existsL|$\exists{L}$]{}
      {\linferenceRule[formula]
        {\seqq{,\phi(y)}{}}
        {\seqq{,\exists~x~\phi(x)}{}}
      }{Where $y\in\freevars{\G}\cup\freevars{\Delta}$}
      \cinferenceRule[allL|$\forall{L}$]{}
      {\linferenceRule[formula]
        {\seqq{}{\phi(\theta),}}
        {\seqq{}{\exists~x~\phi(x),}}
      }{}
      \cinferenceRule[QE|QE]{}
      {\linferenceRule[formula]
        {*}
        {\seqq{}{}}
      }{$\bigwedge_{\phi\in\G}\phi\limply\bigvee_{\psi\in\Delta}\psi$ valid in $\folr{}$}
    \end{calculus}
  \end{calculuscollections}
\end{center}
\paragraph{Symmetric Program Rules}
\begin{center}
  \begin{calculuscollections}{\columnwidth}
    \begin{calculus}
      \cinferenceRule[boxode|${[']}$]{}
      {\linferenceRule[formula]
        {\forall~t\geq{0}~((\forall{0}{\leq}{s}{\leq}{t}\dbox{\humod{x}{y(t)}}{\psi}{})\limply{\dbox{\humod{x}{y(t)}}{\psi}})}
        {\dbox{\pevolvein{x'=\theta}{\psi}}{\phi}}
      }{}
     \cinferenceRule[diaode|$\langle{'}\rangle$]{}
      {\linferenceRule[formula]
        {\exists~t\geq{0}~((\forall{0}\leq{s}\leq{t}\dbox{\humod{x}{y(t)}}{\psi}{})\land{\dbox{\humod{x}{y(t)}}{\psi}})}
        {\ddiamond{\pevolvein{x'=\theta}{\psi}}{\phi}}
      }{}
    \end{calculus}
  \end{calculuscollections}
\end{center}
\begin{center}
  \begin{calculuscollections}{\columnwidth}
    \begin{calculus}
      \cinferenceRule[diacomp|$\langle;\rangle$]{}
      {\linferenceRule[formula]
        {\ddiamond{\alpha}{\ddiamond{\beta}{\phi}}}
        {\ddiamond{\alpha;\beta}{\phi}}
      }{}
      \cinferenceRule[dialoop|$\langle{*}\rangle$]{}
      {\linferenceRule[formula]
        {\phi\lor\ddiamond{\alpha}{\ddiamond{\alpha^*}{\phi}}}
        {\ddiamond{\alpha^*}{\phi}}
      }{}
      \cinferenceRule[diaasgn|$\langle{x}:={\theta}\rangle$]{}
      {\linferenceRule[formula]
        {\subst[\phi]{x}{\theta}}
        {\ddiamond{\humod{x}{\theta}}{\phi}}
      }{Where $\freevars{\theta}\cap\boundvars{\phi}=\emptyset$}
      \cinferenceRule[anyboxasgneq|${[:={*}]}$]{}
      {\linferenceRule[formula]
        {\forall~y~\phi(y)}
        {\dbox{\humod{x}{\theta}}{\phi(x)}}
      }{}
    \end{calculus}
    \begin{calculus}
      \cinferenceRule[boxcomp|${[;]}$]{}
      {\linferenceRule[formula]
        {\dbox{\alpha}{\dbox{\beta}{\phi}}}
        {\dbox{\alpha;\beta}{\phi}}
      }{}
      \cinferenceRule[boxloop|${[*]}$]{}
      {\linferenceRule[formula]
        {\phi\land\dbox{\alpha}{\dbox{\alpha^*}{\phi}}}
        {\dbox{\alpha^*}{\phi}}
      }{}
      \cinferenceRule[boxassign|${[:=]}$]{}
      {\linferenceRule[formula]
        {\subst[\phi]{x}{\theta}}
        {\dbox{\humod{x}{\theta}}{\phi}}
      }{Where $\freevars{\theta}\cap\boundvars{\phi}=\emptyset$}
      \cinferenceRule[anydiaasgneq|$\langle{:=}{*}\rangle$]{}
      {\linferenceRule[formula]
        {\exists~y~\phi(y)}
        {\ddiamond{\humod{x}{\theta}}{\phi(x)}}
      }{}
    \end{calculus}
    \begin{calculus}
      \cinferenceRule[diachoice|$\langle\cup\rangle$]{}
      {\linferenceRule[formula]
        {\ddiamond{\alpha}\phi\lor\ddiamond{\beta}{\phi}}
        {\ddiamond{\alpha\cup\beta}{\phi}}
      }{}
      \cinferenceRule[diatest|$\langle?\rangle$]{}
      {\linferenceRule[formula]
        {\psi\land\phi}
        {\ddiamond{?(\psi)}{\phi}}
      }{}
      \cinferenceRule[diaasgneq|$\langle:=\rangle=$]{}
      {\linferenceRule[formula]
        {\exists~y~(y=\theta\land\phi(y))}
        {\ddiamond{\humod{x}{\theta}}{\phi(x)}}
      }{}   
      \cinferenceRule[boxasgneq|${[:=]=}$]{}
      {\linferenceRule[formula]
        {\forall~y~(y=\theta\limply\phi(y))}
        {\dbox{\humod{x}{\theta}}{\phi(x)}}
      }{}
\end{calculus} 
\begin{calculus} 
      \cinferenceRule[boxchoice|${[\cup]}$]{}
      {\linferenceRule[formula]
        {\dbox{\alpha}{\phi}\land\dbox{\beta}{\phi}}
        {\dbox{\alpha\cup\beta}{\phi}}
      }{}
      \cinferenceRule[boxtest|${[?]}$]{}
      {\linferenceRule[formula]
        {\psi\limply\phi}
        {\dbox{\psi}{\phi}}
      }{}
    \end{calculus}
  \end{calculuscollections}
\end{center}\paragraph{Asymetric Program Rules}
\newcommand{\gconst}{\G_{\tt{const}}}
\begin{center}
  \begin{calculuscollections}{\columnwidth}
    \begin{calculus}
     \cinferenceRule[diaInd|con]{}
      {\linferenceRule[formula]
        {\gconst,v>0,\varphi(v)\vdash\langle\alpha\rangle\varphi(v-1)}
        {\seqq{,\exists{v}~\varphi(v)}{\langle\alpha^*\rangle\exists{v\leq{0}}\varphi(v),}}
      }{Where $\gconst = \{ \phi\in\G~|~\freevars{\phi}\cap\boundvars{\alpha}=\emptyset\}$ }
      \cinferenceRule[boxMon|${[M]}$]{}
      {\linferenceRule[formula]
        {\phi\vdash\psi}
        {[\alpha]\phi\vdash[\alpha]\psi}
      }{}
     \end{calculus}
    \begin{calculus}
      \cinferenceRule[boxInd|I]{}
      {\linferenceRule[formula]
        {\gconst,J\vdash[\alpha]J}
        {\seqq{,J}{[\alpha^*]J}}
      }{}
      \cinferenceRule[diaMon|$\langle{M}\rangle$]{}
      {\linferenceRule[formula]
        {\phi\vdash\psi}
        {\langle\alpha\rangle\phi\vdash\langle\alpha\rangle\psi}
      }{}
    \end{calculus}
  \end{calculuscollections}
\end{center}

\newpage
\section{Denotational Semantics}
\label{app:denotational}
\newcommand{\tsem}[2]{{\ivaluation{}{#1}}#2}
\newcommand{\psem}[3]{(#2,#3)\in{\ivaluation{}{#1}}}
\newcommand{\fsem}[1]{\ivaluation{}{#1}}
The denotational semantics of \dL are given as interpretation functions $\tsem{\theta}{\omega}$ for terms, $\fsem{\phi}$ for formulas, and $\fsem{\alpha}$ for programs. 
Terms denote reals, formulas denote sets of states, and programs denote transition relations (i.e. sets of pairs of states).

\paragraph{Term Semantics}
\begin{align*}
  \tsem{x}{\omega} &= \omega(x)\\
  \tsem{q}{\omega} &= q\\
  \tsem{\theta_1+\theta_2}{\omega} &= \tsem{\theta_1}{\omega}+\tsem{\theta_2}{\omega}\\
  \tsem{\theta_1\cdot{}\theta_2}{\omega} &= \tsem{\theta_1}{\omega}\cdot{}\tsem{\theta_2}{\omega}\\
  \tsem{\theta_1/\theta_2}{\omega} &= \tsem{\theta_1}{\omega}~/~\tsem{\theta_2}{\omega}\\
  \tsem{\theta^q}{\omega} &= \left(\tsem{\theta_1}{\omega}\right)^q &\text{for $q\in\mathbb{Q}$}\\
  \tsem{\der{\theta}}{\omega} &=\sum\limits_{x\in\allvars}\frac{(\partial{\tsem{\theta}{}})(\omega)}{\partial{x}}\cdot\omega(x')
\end{align*}
The differential $\der{\theta}$ of a term $\theta$ is denoted by the \emph{total derivative}, the sum of all partial derivatives.

\paragraph{Formula Semantics}
\begin{align*}
  \fsem{\phi\land\psi} &= \fsem{\phi} \cap \fsem{\psi}\\
  \fsem{\phi\lor\psi} &= \fsem{\phi} \cup \fsem{\psi}\\
  \fsem{\neg\phi} &= \fsem{\phi}^C\\
  \fsem{\forall~x~\phi} &= \{~\omega~|~\mforall~r\in\mathbb{R}~\subst[\omega]{x}{r}\in\fsem{\phi}~\}\\
  \fsem{\exists~x~\phi} &= \{~\omega~|~\mexists~r\in\mathbb{R}~\subst[\omega]{x}{r}\in\fsem{\phi}~\}\\
  \fsem{\dbox{\alpha}{\phi}} &= \{\omega~|~\mforall~\nu~\psem{\alpha}{\omega}{\nu}~\text{implies}~\nu\in\fsem{\phi}~\}\\
  \fsem{\ddiamond{\alpha}{\phi}} &= \{\omega~|~\mexists~\nu~\psem{\alpha}{\omega}{\nu}~\text{and}~\nu\in\fsem{\phi}~\}\\
  \fsem{\theta_1\sim\theta_2} &= \{\omega~|~ \tsem{\theta_1}{\omega}\sim\tsem{\theta_2}{\omega}~\}\\
 &\text{for $\sim\ \in\{<,\leq,=,\geq,>,\neq\}$}\\
\end{align*}

\paragraph{Program Semantics}
\begin{align*}
\psem{\humod{x}{\theta}}{\omega}{\nu} &~\text{iff}~ \subst[\omega]{x}{\tsem{\theta}{\omega}}\\
\psem{\prandom{x}}{\omega}{\nu} &~\text{iff}~ \exists~r\in\mathbb{R}~\subst[\omega]{x}{r} = \nu\\
\psem{?(\phi)}{\omega}{\nu}     &~\text{iff}~ \omega = \nu~\text{and}~\omega\in\fsem{\phi}\\
\psem{\alpha;\beta}{\omega}{\nu}&~\text{iff}~ \exists~\mu~\psem{\alpha}{\omega}{\mu}~\text{and}~\psem{\beta}{\mu}{\nu}\\
\psem{\alpha\cup\beta}{\omega}{\nu} &~\text{iff}~ \psem{\alpha}{\omega}{\nu}~\text{or}~\psem{\beta}{\omega}{\nu}\\
\psem{\alpha^*}{\omega}{\nu} &~\text{iff}~ \psem{\alpha}{\omega}{\nu}^*\ \ \ \ \text{(i.e. transitive, reflexive closure of $\interp{\alpha}$)}\\
\psem{\pevolvein{x'=\theta}{\psi}}{\omega}{\nu} &~\text{iff}~ \exists~t\in\mathbb{R}_{\geq{0}}~ \left(\nu=\varphi(t)\right) \land \forall~s\in[0,t]~ \varphi(s)\in\fsem{\psi} \\
&\text{where $\varphi$ is the unique solution to $x'=\theta$ with $\varphi(0)=\omega$ }
\end{align*}

\newpage
\section{Complete Proof Rules of Kaisar}
\label{app:diamond}
\small

\paragraph{Pattern-Matching}
\begin{center}
  \begin{calculuscollections}{\columnwidth}
    \begin{calculus}
      \cinferenceRule[matchOp|op]{}
      {\linferenceRule[formula]
        {\mmatch{\G}{\pat}{e}{\Delta_\pat} & \mmatch{\Delta_\pat}{\qat}{f}{\Delta_\qat}}
        {\match{\otimes(\pat,\qat)}{\otimes(e,f)}{\Delta_\qat}}
      }{}
      \cinferenceRule[matchUnionOne|$\cup$1]{}
      {\linferenceRule[formula]
        {\match{\pat}{e}{\Delta}}
        {\match{\pat\cup\qat}{e}{\Delta}}
      }{}
      \cinferenceRule[matchUnionTwo|$\cup$2]{}
      {\linferenceRule[formula]
        {\nmatch{\pat}{e} &  \match{\qat}{e}{\Delta}}
        {\match{\pat\cup\qat}{e}{\Delta}}
      }{}
      \cinferenceRule[matchInter|$\cap$]{}
      {\linferenceRule[formula]
        {\mmatch{\G}{\pat}{e}{\Delta_\pat} \hskip 0.1in \mmatch{\Delta_\pat}{\qat}{e}{\Delta_\qat}}
        {\mmatch{\G}{\pat\cap\qat}{e}{\Delta_\qat}}
      }{}
      \cinferenceRule[matchIdent|ident]{}
      {\linferenceRule[formula]
        {\Gamma(\ident) = e}
        {\mmatch{\G}{\freeident}{e}{\G}}
      }{}
      \cinferenceRule[matchVars|vars]{}
      {\linferenceRule[formula]
        {vars\subseteq\freevars{e}}
        {\mmatch{\G}{p(vars)}{e}{\G}}
      }{}
\end{calculus}
\begin{calculus}
      \cinferenceRule[matchNvars|nVars]{}
      {\linferenceRule[formula]
        {vars\cap\freevars{e}=\emptyset}
        {\mmatch{\G}{p(\neg vars)}{e}{\G}}
      }{}
      \cinferenceRule[matchFree|free]{}
      {\linferenceRule[formula]
        {{\ident} \notin \G}
        {\mmatch{\G}{\freeident{}}{e}{\rext{\G}{\freeident{}}{e}}}
      }{}
      \cinferenceRule[matchWild|wild]{}
      {\linferenceRule[formula]
        {}
        {\mmatch{\G}{\wildpat}{e}{\G}}
      }{}
      \cinferenceRule[matchNeg|$\neg$]{}
      {\linferenceRule[formula]
        {\nmatch{\pat}{e} & \boundvars{\pat}=\emptyset}
        {\mmatch{\G}{\neg\pat}{e}{\G}}
      }{}
      \cinferenceRule[matchNow|now]{}
      {\linferenceRule[formula]
        {\ematch{x}{\longeag{now(x)}}}
        {\ematch{x}{e}}
      }{}
    \end{calculus}
  \end{calculuscollections}
\end{center}

\paragraph{Forward-Chaining Proof}
\begin{center}
  \begin{calculuscollections}{\columnwidth}
    \begin{calculus}
      \cinferenceRule[fpPat|pat]{}
      {\linferenceRule[formula]
        {\phi \in (\Sigma,\G) & \ematch{\phi}{\pat}}
        {\fpchks{\G}{\pat}{\phi}{\Sigma}}
      }{}
\end{calculus}
\begin{calculus}
      \cinferenceRule[fpMP|MP]{}
      {\linferenceRule[formula]
        {\fpchk{\G}{\fproof_2}{\phi}&\fpchk{\G}{\fproof_1}{\phi\limply\psi}}
        {\fpchk{\G}{\fproof_1~\fproof_2}{\psi}}
      }{}
\end{calculus}
\begin{calculus}
      \cinferenceRule[mpInst|$\forall$]{}
      {\linferenceRule[formula]
        {\fpchk{\G}{\fproof}{\forall x~\phi}}
        {\fpchk{\G}{\fproof~\theta}{\subst[\phi]{x}{\eag{\theta}}}}
      }{}
    \end{calculus}
  \end{calculuscollections}
\end{center}

\paragraph{Structural Rules}
{\scriptsize\begin{center}
  \begin{calculuscollections}{\columnwidth}
    \begin{calculus}
      \cinferenceRule[spFocL|focusL]{}
      {\linferenceRule[formula]
        {\ematch{\pat}{\phi} & \tpchk{\G_1,\G_2}{H}{\sproof}{\neg{\phi},\Delta}{H_{\neg{\phi}}}}
        {\tpchk{\G_1,\phi,\G_2}{H}{\sfocus{\pat}{\sproof}}{\Delta}{H}}
      }{}
      \cinferenceRule[spFocR|focusR]{}
      {\linferenceRule[formula]
        {\ematch{\pat}{\phi} & \tpchk{\G}{H}{\sproof}{\phi,\Delta_1,\Delta_2}{H_{\neg{\phi}}}}
        {\tpchk{\G}{H}{\sfocus{\pat}{\sproof}}{\Delta_1,\phi,\Delta_2}{H}}
      }{}
      \cinferenceRule[spShowId|id]{}
      {\linferenceRule[formula]
        {\ematch{\pat}{\phi} & \phi \in \upfacts{{\pat}s}{{\fproof}s}~\text{valid in}~\folr{}}
        {\tpchk{\G}{H}{\sshow{\pat}{\using~{\pat}s~{\fproof}s~\by~\closeid}}{\phi,\Delta}{H}}
      }{}
      \cinferenceRule[spShowR|R]{}
      {\linferenceRule[formula]
        {\ematch{\pat}{\phi} & \upfacts{{\pat}s}{{\fproof}s}~\text{valid in}~\folr{} }
        {\tpchk{\G}{H}{\sshow{\pat}{\using~{\pat}s~{\fproof}s~\by~\mathbb{R}}}{\phi,\Delta}{H}}
      }{}
      \cinferenceRule[spFlet|flet]{}
      {\linferenceRule[formula]
        {\tpchk{\ext{\Gamma}{t({\tt{x\_}})}{\llabs{\cemp{}}{e}{t_{now}}}}{H,t_{now}}{\sproof}{\phi,\Delta}{H_\sproof}}
        {\tpchk{\G}{H}{\slet{t({\tt{x\_}})}{e}{\sproof}}{\phi,\Delta}{H_\sproof}}
      }{}
\end{calculus}
\begin{calculus}
      \cinferenceRule[spState|state]{}
      {\linferenceRule[formula]
        {\tpchk{\G}{\hhtime{H}{t}}{\sproof}{\phi,\Delta}{H_\phi}}
        {\tpchk{\G}{H} {\sstate{t}{\sproof}}{\phi,\Delta}{H_\phi}}
      }{}
      \cinferenceRule[spNote|note]{}
      {\linferenceRule[formula]
        {\fpchk{\G}{\fproof}{\psi} & {\tpchk{\ext{\Gamma}{x}{\psi}}{H}{\sproof}{\phi,\Delta}{H_\phi}}}
        {\tpchk{\G}{H}{\snote{x}{\fproof}{\sproof}}{\phi,\Delta}{H_\phi}}
      }{}
      \cinferenceRule[spLet|let]{}
      {\linferenceRule[formula]
        {\match{p}{\eag{e}}{\G_1} & \tpchk{\G_1}{H}{\sproof}{\phi,\Delta}{H_\phi}}
        {\tpchk{\G}{H}{\slet{p}{e}{\sproof}}{\phi,\Delta}{H_\phi}}
      }{}
      \cinferenceRule[spHave|have]{}
      {\linferenceRule[formula]
        { \tpchk{\G}{H}{\sproof_1}{\eag{\psi}}{H_\psi} & 
          \tpchk{\ext{\Gamma}{x}{\eag{\psi}}}{H}{\sproof_2}{\phi,\Delta}{H_\phi}}
        {\tpchk{\G}{H}{\have{x}{\psi}{\sproof_1}{\sproof_2}}{\phi,\Delta}{H_\phi}}
      }{}
    \end{calculus}
  \end{calculuscollections}
\end{center}}

\paragraph{Backward-Chaining First-Order Proof}
\begin{center}
  \begin{calculuscollections}{\columnwidth}
    \begin{calculus}
      \cinferenceRule[spbtest|$\limply{R}$]{}
      {\linferenceRule[formula]
        {\match{\pat}{\psi}{\Gamma_1} & \ppchk{\ext{\G_1}{x}{\psi}}{\sproof}{\phi,\Delta}}
        {\ppchk{\G}{\bassert{x}{\pat}{\sproof}}{(\psi\limply\phi),\Delta}}
      }{}
      \cinferenceRule[sporr|$\vee$R]{}
      {\linferenceRule[formula]
        {\ppchk{\G}{\sproof}{\rext{x}{\phi}{\rext{y}{\psi}{\Delta}}}}
        {\ppchk{\G}{\pcaseor{x:\phi}{y:\psi}{\sproof}}{(\phi\lor\psi),\Delta}}
      }{}
\end{calculus}
\begin{calculus}
      \cinferenceRule[spiff|$\bimply$]{}
      {\linferenceRule[formula]
        {\deduce{\ppchk{\rext{\G}{x}{\phi}}{\sproof_1}{\lext{y}{\psi}{\Delta}}\hskip 0.1in \ppchk{\rext{\G}{y}{\psi}}{\sproof_2}{\lext{x}{\phi}{\Delta}}} 
             {\match{p\bimply{q}}{\psi\bimply\phi}{\Gamma_1}}}
        {\ppchk{\G}{\pcaseiffr{p}{q}{\sproof_1}{\sproof_2}}{(\phi\bimply\psi),\Delta}}
      }{}
      \cinferenceRule[spandr|$\wedge$R]{}
      {\linferenceRule[formula]
        {\deduce{\ppchk{\G}{\sproof_1}{\lext{x}{\phi}{\Delta}}
               \hskip 0.1in \ppchk{\G}{\sproof_2}{\lext{y}{\psi}{\Delta}}}
         {\match{p\land{q}}{\psi\land\phi}{\Gamma_1}}}
        {\ppchk{\G}{\pcasear{p}{q}{\sproof_1}{\sproof_2}}{(\phi\land\psi),\Delta}}
      }{}
    \end{calculus}
  \end{calculuscollections}
\end{center}

\newpage
\paragraph{Backward-Chaining Structured Box Execution}
\begin{center}
  \begin{calculuscollections}{\columnwidth}
    \begin{calculus}
      \cinferenceRule[spbeqsub|${[:=]sub}$]{}
      {\linferenceRule[formula]
        {\tpchk{\G}{\hhsub{H}{x}{\eag{\theta}}}{\sproof}{\subst[\phi]{x}{\eag{\theta}},\Delta}{H_1}}
        {\tpchk{\G}{H}{\bassign{x}{\theta}{\sproof}}{[x:=\eag{\theta}]\phi,\Delta}{H_1}}
      }{$\text{if}~\phi_{x}^{\theta}~\text{admissible}$}
      \cinferenceRule[spbeqeq|${[:=]eq}$]{}
      {\linferenceRule[formula]
        {\tpchk{\subst[\G]{x}{x_i},x=\eag{\theta}}{\hheq{H}{x}{x_i}{\eag{\theta}}}{\sproof}{\phi,\subst[\Delta]{x}{x_i}}{H_1}}
        {\tpchk{\G}{H}{\bassign{x}{\theta}{\sproof}}{[x:=\eag{\theta}]\phi,\Delta}{H_1}}
      }{$\text{if}~x_i~\text{fresh}$}
      \cinferenceRule[spbany|${[:=^*]}$]{}
      {\linferenceRule[formula]
        {\tpchk{\subst[\G]{x}{x_i}}{\hhany{H}{x}{x_i}}{\sproof}{\phi,\subst[\Delta]{x}{x_i}}{H_1}}
        {\tpchk{\G}{H}{\bassignany{x}{\sproof}}{[\prandom{x}]\phi,\Delta}{H_1}}
      }{$\text{if}~x_i~\text{fresh}$}
      \cinferenceRule[spbcase|${[\cup]}$]{}
      {\linferenceRule[formula]
        {\deduce{\tpchk{\G_\alpha}{H}{\sproof_\alpha}{[\alpha]\phi,\Delta}{H_\alpha} \hskip 0.1in \tpchk{\G_\beta}{H}{\sproof_\beta}{[\beta]\phi,\Delta}{H_\beta}}
{\match{p}{\alpha}{\G_\alpha} &
 \match{q} {\beta}{\G_\beta}}}
        {\tpchk{\G}{H}{\bcase{p}{\sproof_\alpha}{q}{\sproof_\beta}}{[\alpha\cup\beta]\phi,\Delta}{H}}
      }{}
      \cinferenceRule[spHM|mid]{}
      {\linferenceRule[formula]
        {\tpchk{\G}{H}{\sproof_\psi}{[\alpha]\eag{\psi},\Delta}{H_\psi} \hskip 0.1in \tpchk{\arb{\G}{\alpha},\eag{\psi}}{\arb{H_\psi}{\alpha}}{\sproof_\phi}{[\beta]\phi,\arb{\Delta}{\alpha}}{H_\phi}}
        {\tpchk{\G}{H}{\bcon{\psi}{\sproof_\psi}{\sproof_\phi}}{[\alpha][\beta]\phi,\Delta}{H_\phi}}
      }{}
      \cinferenceRule[spsolver|${[']}$]{}
      {\linferenceRule[formula]
        {\deduce
{\tpchk{\ext{\ext{\Gamma_Q}{dom}{\forall s \in [0,t]~Q}}{t}{t\geq 0}}{\hhsub{H}{x}{y(t)}}{\sproof}{\subst[\phi]{x}{y(t)},\Delta}{H_\phi}}
{
 \match{\pat}{\{x'=\theta~\&~Q\}}{\Gamma_{x'}}
&\mmatch{\G_{x'}}{\pat_t}{t\geq{0}}{\Gamma_{t}}
&\mmatch{\G_{t}}{\pat_{dom}}{Q}{\G_Q}
}}
        {\tpchk{\G}{H}{\bsolve{\pat}{\pat_{t}}{\pat_{dom}}{\sproof}}{[\{x'=\theta~\&~Q\}]\phi,\Delta}{H_\phi}}
      }{$y(0)=x,~y'(t)=\theta(y)$}

      \cinferenceRule[spboxand|${[\land]}$]{}
      {\linferenceRule[formula]
        {\deduce
          {\tpchk{\G_\phi}{H}{\sproof_\phi}{[\alpha]\phi,\Delta}{H_\phi}
            \hskip 0.1in
            \tpchk{\G_\psi}{H}{\sproof_\psi}{[\alpha]\psi,\Delta}{H_\psi}
          }
          {
            \match{p_\phi}{\phi}{\G_\phi}
            &\match{p_\psi}{\psi}{\G_\psi}
          }}
        {\tpchk{\G}{H}{\bcase{p_\phi}{\sproof_\phi}{p_\psi}{\sproof_\psi}}{[\alpha](\phi\land\psi),\Delta}{H}}
      }{}

      \cinferenceRule[spboxiter|${[*]}$]{}
      {\linferenceRule[formula]
        {\deduce
          {\tpchk{\G_\phi}{H}{\sproof_\phi}{\phi,\Delta}{H_\phi}
           \hskip 0.1in
           \tpchk{\G_\alpha}{H}{\sproof_\alpha}{[\alpha][\alpha^*]\psi,\Delta}{H_\alpha}
          }
          {
            \match{p_\phi}{\phi}{\G_\phi}
            &\match{p_\alpha}{\alpha}{\G_\alpha}
          }
        }
        {\tpchk{\G}{H}{\bcase{p_\phi}{\sproof_\phi}{p_\alpha}{\sproof_\alpha}}{[\alpha^*]\phi,\Delta}{H}}
      }{}
    \end{calculus}
  \end{calculuscollections}
\end{center}

\newpage
\paragraph{Backward-Chaining Structured Diamond Execution}
\begin{center}
  \begin{calculuscollections}{\columnwidth}
    \begin{calculus}
      \cinferenceRule[spdeqsub|${\langle{}:=\rangle{}sub}$]{}
      {\linferenceRule[formula]
        {\tpchk{\G}{\hhsub{H}{x}{\eag{\theta}}}{\sproof}{\subst[\phi]{x}{\eag{\theta}},\Delta}{H_1}}
        {\tpchk{\G}{H}{\dassign{x}{\theta}{\sproof}}{\ddiamond{x:=\eag{\theta}}{\phi},\Delta}{H_1}}
      }{$\text{if}~\phi_{x}^{\theta}~\text{admissible}$}
      \cinferenceRule[spdeqeq|${\langle{}:=\rangle{}eq}$]{}
      {\linferenceRule[formula]
        {\tpchk{\subst[\G]{x}{x_i},x=\eag{\theta}}{\hheq{H}{x}{x_i}{\eag{\theta}}}{\sproof}{\phi,\subst[\Delta]{x}{x_i}}{H_1}}
        {\tpchk{\G}{H}{\dassign{x}{\theta}{\sproof}}{\ddiamond{\humod{x}{\eag{\theta}}}{\phi},\Delta}{H_1}}
      }{$\text{if}~x_i~\text{fresh}$}
      \cinferenceRule[spdeqsub|${\langle{}:=\rangle{}sub}$]{}
      {\linferenceRule[formula]
        {\tpchk{\G}{\hhsub{H}{x}{\eag{\theta}}}{\sproof}{\subst[\phi]{x}{\eag{\theta}},\Delta}{H_1}}
        {\tpchk{\G}{H}{\dassignany{x}{\theta}{\sproof}}{\ddiamond{x:=\eag{\theta}}{\phi},\Delta}{H_1}}
      }{$\text{if}~\phi_{x}^{\theta}~\text{admissible}$}
      \cinferenceRule[spdcase|${\langle\cup\rangle}$]{}
      {\linferenceRule[formula]
        {\deduce{\tpchk{\G}{H}{\sproof}{x:\ddiamond{\alpha}{\phi},y:\ddiamond{\beta}{\phi},\Delta}{H_{\cup}}}
               {\match{p}{\alpha}{\G_\alpha} &  \match{q}{\beta}{\G_\beta}}}
        {\tpchk{\G}{H}{\dcase{x:p}{y:q}{\sproof}}{\ddiamond{\alpha\cup\beta}{\phi},\Delta}{H}}
      }{}
      \cinferenceRule[spdHM|mid]{}
      {\linferenceRule[formula]
        {\tpchk{\G}{H}{\sproof_\psi}{\ddiamond{\alpha}{\eag{\psi}},\Delta}{H_\psi} \hskip 0.1in \tpchk{\arb{\G}{\alpha},\eag{\psi}}{\arb{H_\psi}{\alpha}}{\sproof_\phi}{\ddiamond{\beta}{\phi},\arb{\Delta}{\alpha}}{H_\phi}}
        {\tpchk{\G}{H}{\bcon{\psi}{\sproof_\psi}{\sproof_\phi}}{\ddiamond{\alpha}{\ddiamond{\beta}{\phi}},\Delta}{H_\phi}}
      }{}
      \cinferenceRule[spdsolver|${\langle'\rangle}$]{}
      {\linferenceRule[formula]
        {\deduce{\tpchk{\G}{H}{\sproof_{dom}}{\theta_{t}\geq{0}\land\forall~s~s\in[0,\theta_{}]~Q,\Delta}{H_{dom}}}
 {\deduce
{\tpchk{\Gamma_Q}{\hhsub{H}{x}{y(\eag{\theta_t})}}{\sproof}{\subst[\phi]{x}{y(\eag{\theta_t})},\Delta}{H_\phi}}
{
 \match{\pat}{\{x'=\theta~\&~Q\}}{\Gamma_{x'}}
&\mmatch{\G_{t}}{\pat_{dom}}{Q}{\G_Q}
}}}
        {\tpchk{\G}{H}{\dsolve{\pat}{\theta_{t}}{\pat_{dom}}{\sproof_{dom}}{\sproof}}{\ddiamond{\{x'=\theta~\&~Q\}}{\phi},\Delta}{H_\phi}}
      }{$y(0)=x,~y'(t)=\theta(y)$}
      \cinferenceRule[spdiaand|${\langle*\rangle}$]{}
      {\linferenceRule[formula]
        {\deduce
          {\tpchk{\G_\phi}{H}{\sproof_\phi}{\phi,\ddiamond{\alpha}{\ddiamond{\alpha^*}{\phi}},\Delta}{H_\phi}
          }
          {\match{p_\phi}{\phi}{\G_\phi}
            &\match{p_\alpha}{\alpha}{\G_\alpha}
          }}
        {\tpchk{\G}{H}{\pcasediaiter{\alpha}{\phi}{\sproof_\phi}{\sproof_\alpha}}{\ddiamond{\alpha^*}{\phi},\Delta}{H_\phi}}
      }{}
    \end{calculus}
  \end{calculuscollections}
\end{center}
\paragraph{Invariant (and Variant) Execution}
\begin{center}
  \begin{calculuscollections}{\columnwidth}
    \begin{calculus}
      \cinferenceRule[spFinLoop|fin${[*]}$]{}
      {\linferenceRule[formula]
        {\tpchk{\arb{\G}{\alpha},\invlist{}}{\arb{H}{\alpha}}{\sproof}{\phi,\arb{\Delta}{\alpha}}{H_\alpha}}
        {\tpchk{\G,\invlist{}}{H}{\finally{\sproof}}{[\alpha^*]\phi,\Delta}{H_\alpha}}
      }{}
    \end{calculus}
    \begin{calculus}
      \cinferenceRule[spFinOde|fin${[']}$]{}
      {\linferenceRule[formula]
        {\tpchk{\arb{\G}{\alpha},\invlist{},Q}{\arb{H}{\alpha}}{\sproof}{\phi,\arb{\Delta}{\alpha}}{H_{x'}}}
        {\tpchk{\G,\invlist{}}{H}{\finally{\sproof}}{[\{x'=\theta~\&~Q\}]\phi,\Delta}{H_{x'}}}
      }{}
    \end{calculus}
  \end{calculuscollections}
\end{center}
\begin{center}
  \begin{calculuscollections}{\columnwidth}
    \begin{calculus}
      \cinferenceRule[spInvLoop|inv${[*]}$]{}
      {\linferenceRule[formula]
        {\deduce{\tpchk{\Gamma,\invlist{}}{H}{\sproof_{Pre}}{\eag{\psi},\Delta}{H_{Pre}} \hskip 0.1in  \tpchk{\arb{\G}{\alpha},\invlist{},\eag{\phi}}{\arb{H}{\alpha}}{\sproof_{Inv}}{[\alpha]\eag{\psi},\arb{\Delta}{\alpha}}{H_{Inv}}}
           {\tpchk{\Gamma,\invlist{},x:\psi}{H}{\iproof}{[\alpha^*]\phi,\Delta}{H_{\it{Tail}}}}}
        {\tpchk{\G,\invlist{}}{H}{\sinv{x}{\psi}{\sproof_{Pre}}{\sproof_{Inv}}{\iproof}}{[\alpha^*]\phi,\Delta}{H}}
      }{}
      \cinferenceRule[spInvOde|inv${[']}$]{}
      {\linferenceRule[formula]
        {\deduce{\tpchk{\Gamma,\ext{\Delta}{x}{\eag{\psi}}}{H}{\iproof}{[\{x'=\theta~\&~Q\}]\phi,\Delta}{H_{x'}}}
{\tpchk{\Gamma,\invlist{}}{H}{\sproof_{Pre}}{\eag{\psi},\Delta}{H_{Pre}}
   \hskip 0.1in\tpchk{\arb{\G}{\alpha},\invlist{},Q}{\arb{H}{\alpha}}{\sproof_{Inv}}{[\humod{x'}{\theta}](\eag{\psi})',\arb{\Delta}{\alpha}}{H_{Inv}}
}}
        {\tpchk{\G,\Delta}{H}{\sinv{x}{\psi}{\sproof_{Pre}}{\sproof_{Inv}}{\iproof}}{[\{x'=\theta~\&~Q\}]\phi,\Delta}{H_{x'}}}
      }{}
      \cinferenceRule[spGhost|ghost${[']}$]{}
      {\linferenceRule[formula]
        {\tpchk{\G,y=\eag{\theta_{y}}}{\Delta;H}{\iproof}{[\{x'=\eag{\theta_{x'}},y'=\eag{\theta_{y'}}~\&~H\}]\phi,\Delta}{H_{x'}} & \eag{\theta_{y'}}~{\rm linear\ in}~y}
        {\tpchk{\G,\invlist{}}{H}{\sghost{y}{\theta_{y'}}{\theta_{y}}{\iproof}}{[\{x'=\theta_{x'}~\&~H\}]\phi,\Delta}{H_{x'}}}
      }{}
      \cinferenceRule[spConLoop|con${\langle*\rangle}$]{}
      {\linferenceRule[formula]
        {\deduce
    {\tpchk{\Gamma}{H}{\sproof_{Pre}}{\exists~x~\eag{\phi}(x),\Delta}{H_{Pre}} 
     {\hskip 0.1in}
     \tpchk{\arb{\G}{\alpha},\eag{\psi}(x),x>0}{\arb{H}{\alpha}}{\sproof_{Inv}}{\ddiamond{\alpha}{\longeag{\psi(x-1)}},\arb{\Delta}{\alpha}}{H_{Inv}}}
           {\tpchk{\arb{\Gamma}{\alpha},x:\psi,x\leq{0}}{H}{\sproof_{Post}}{\ddiamond{\alpha^*}{\phi}}{H_{\it{Post}}}}}
        {\tpchk{\G}{H}{\scon{x}{\phi}{\sproof_{Pre}}{\sproof_{Post}}{\sproof_{Inv}}}{\ddiamond{\alpha^*}{\phi},\Delta}{H}}
      }{}

    \end{calculus}
  \end{calculuscollections}
\end{center}

\newpage
\paragraph{Implicit Conversion Rules}
The implicit conversions leave the proof and history untouched, so we write only their effect on the sequent.
\begin{center}
  \begin{calculuscollections}{\columnwidth}
    \begin{calculus}
      \cinferenceRule[imSeq|{imp$[;]$}]{}
      {\linferenceRule[formula]
        {\seqq{}{\dbox{\alpha}{\dbox{\beta}{\phi}},}}
        {\seqq{}{\dbox{\alpha;\beta}{\phi},}}
      }{}
      \cinferenceRule[imDSeq|imp$\langle{;}\rangle$]{}
      {\linferenceRule[formula]
        {\seqq{}{\ddiamond{\alpha}{\ddiamond{\beta}{\phi}},}}
        {\seqq{}{\ddiamond{\alpha;\beta}{\phi},}}
      }{}
      \cinferenceRule[imAll|imp$\forall$]{}
      {\linferenceRule[formula]
        {\seqq{}{\dbox{\prandom{x}}{\phi},}}
        {\seqq{}{\forall~x~\phi,}}
      }{}
      \cinferenceRule[imEx|imp$\exists$]{}
      {\linferenceRule[formula]
        {\seqq{}{\ddiamond{\prandom{x}}{\phi},}}
        {\seqq{}{\exists~x~\phi,}}
      }{}
      \cinferenceRule[imNAll|imp$\neg\forall$]{}
      {\linferenceRule[formula]
        {\seqq{}{\exists~x~\neg\phi,}}
        {\seqq{}{\neg\forall~x~\phi,}}
      }{}
    \end{calculus}
    \begin{calculus}
      \cinferenceRule[imNExists|imp$\neg\exists$]{}
      {\linferenceRule[formula]
        {\seqq{}{\forall~x~\neg\phi,}}
        {\seqq{}{\neg\exists~x~\phi,}}
      }{}
      \cinferenceRule[imNN|imp$\neg\neg$]{}
      {\linferenceRule[formula]
        {\seqq{}{\phi,}}
        {\seqq{}{\neg\neg\phi,}}
      }{}
      \cinferenceRule[imNAnd|imp$\neg\land$]{}
      {\linferenceRule[formula]
        {\seqq{}{\neg\phi\lor\neg\psi,}}
        {\seqq{}{\neg(\phi\land\psi),}}
      }{}
      \cinferenceRule[imNOr|imp$\neg\lor$]{}
      {\linferenceRule[formula]
        {\seqq{}{\neg\phi\land\neg\psi,}}
        {\seqq{}{\neg(\phi\lor\psi),}}
      }{}
      \cinferenceRule[imNImp|imp$\neg\limply$]{}
      {\linferenceRule[formula]
        {\seqq{}{\phi\land\neg\psi,}}
        {\seqq{}{\neg(\phi\limply\psi),}}
      }{}
    \end{calculus}
    \begin{calculus}
      \cinferenceRule[imNEquiv|imp$\neg\bimply$]{}
      {\linferenceRule[formula]
        {\seqq{}{(\phi\land\neg\psi)\lor(\neg\phi\land\psi)}}
        {\seqq{}{\neg(\phi\bimply\psi),}}
      }{}
      \cinferenceRule[imNBox|imp$\neg{[]}$]{}
      {\linferenceRule[formula]
        {\seqq{}{\ddiamond{\alpha}{\neg\psi},}}
        {\seqq{}{\neg\dbox{\alpha}{\psi},}}
      }{}
      \cinferenceRule[imNDia|imp$\neg{\langle\rangle}$]{}
      {\linferenceRule[formula]
        {\seqq{}{\dbox{\alpha}{\neg\psi},}}
        {\seqq{}{\neg\ddiamond{\alpha}{\psi},}}
      }{}
      \cinferenceRule[imAssert|imp$?$]{}
      {\linferenceRule[formula]
        {\seqq{}{\dbox{?(\phi)}{\psi},}}
        {\seqq{}{\phi\limply\psi,}}
      }{}
    \end{calculus}
  \end{calculuscollections}
\end{center}

\end{document}